\newcommand{\sources}{S^+}
\newcommand{\sinks}{S^-}
\newcommand{\terminals}{S}
\newcommand{\net}[1]{N^{#1}}
\newcommand{\oo}{o}
\newcommand{\cost}{c}
\newcommand{\capacity}{u}
\newcommand{\transit}{\tau}
\newcommand{\supersource}{\psi}
\newcommand{\timehorizon}{\theta}
\tikzstyle{node}=[circle, inner sep = 0pt, minimum size = 1em, fill, fill = black!80, text = white]
\tikzstyle{edge}=[very thick, draw = black!80]
\tikzstyle{arc}=[edge,->]
\begin{document}
%% BCC editors to set page numbers by uncommenting the line below and entering the correct number
%\setcounter{page}{<insert page number here>} 

\makebcctitle

%------------------------------------------------

\begin{abstract}
Network flows over time are a fascinating generalization of classical (static) network flows,
introducing an element of time. They naturally model problems where travel and transmission
are not instantaneous and flow may vary over time. Not surprisingly, flow over time problems
turn out to be more challenging to solve than their static counterparts. In this survey, we
mainly focus on the efficient computation of transshipments over time in networks with
several source and sink nodes with given supplies and demands, which is arguably the most
difficult flow over time problem that can still be solved in polynomial time.
\end{abstract}

%------------------------------------------------

\section{Introduction}
\label{sec:intro}

The study of flows over time goes back to the work of Ford and
Fulkerson~\cite{FordFulkerson1958} who introduced the topic under the name \defword{dynamic
flows} and devoted a section of their seminal book~\cite{FordFulkerson1962} to that topic. In
contrast to classical static flows, flows over time include an element of time. They model
the fluctuation of flow along arcs over time as well as non-instantaneous travel through a
network where flow traveling along an arc experiences a certain delay. Such effects play a
critical role in many network routing problems such as road, pedestrian, rail, or air traffic
control. Further applications include production systems, communication and data networks,
and financial flows.; see,
e.g.,~\cite{Aronson1989,KoehlerMoehringSkutella2009,PowellJailletOdoni1995}.

Flow over time problems turn out to considerably more difficult to solve than their static
counterparts. While Ford and Fulkerson~\cite{FordFulkerson1958} show how to reduce the
maximum flow over time problem to a static minimum cost flow problem, Klinz and
Woeginger~\cite{KlinzWoeginger2004} prove that finding minimum cost flows over time is
already an weakly NP-hard problem. Hall, Hippler, and Skutella~\cite{HallHipplerSkutella2007}
show the same for multicommodity flows over time, with some variants even being strongly
NP-hard. Hoppe and Tardos~\cite{HoppeTardos2000,Hoppe1995} study the transshipment over time
problem where flow needs to be sent from several source nodes with given supplies to several
sink nodes with given demands. For the case of static flows, finding a transshipment can be
reduced to finding a maximum flow from a super-source to a super-sink. Unfortunately, this
idea no longer works for the transshipment over time problem. Hoppe and Tardos present an
efficient algorithm that relies on a sequence of parametric submodular function minimizations
and Schlöter and Skutella~\cite{SchloeterSkutella2017,Schloeter2018} observe that a single
submodular function minimization is sufficient for finding a transshipment over time that
sends flow fractionally.

The main purpose of this paper is to provide a gentle introduction to the topic of
transshipments over time and related problems, that may also be used for the purpose of
teaching in an advanced course on combinatorial optimization. For a broader overview of flows
over time we refer to the survey papers by Aronson~\cite{Aronson1989}, Powell, Jaillet, and
Odoni~\cite{PowellJailletOdoni1995}, Kotnyek~\cite{Kotnyek2003}, Lovetskii and
Melamed~\cite{LovetskiiMelamed1987}, and Skutella~\cite{Skutella2009}, as well as the PhD
thesis of Schlöter~\cite{Schloeter2018} on the topic.

\subsubsection*{Outline}
In Section~\ref{sec:notation} we introduce notation and some basic definitions. This includes
a discussion of different time models for flows over time and, in particular, the role of
time-expanded networks. In contrast to all previous literature on the considered flow over
time problems, we manage to completely avoid the use of time-expanded networks in the
remainder of the paper. Section~\ref{sec:max-flow} is devoted to the algorithm of Ford and
Fulkerson and its analysis for computing maximum flows over time via temporally repeated
flows. This is an important building block for the more advanced flow over time problems
covered in later sections. In Section~\ref{sec:submodular} we give a novel proof that a
natural set function on subsets of terminals given by corresponding maximum flow over time
values is submodular. This fact turns out to be important for the efficient solution of the
transshipment over time problem. Section~\ref{sec:earliest-arrival-flows} is devoted to
earliest arrival flows which constitute a fascinating refinement of maximum flows over time.
We treat them here since earliest arrival flows require an interesting generalization of
temporally repeated flows, so-called chain-decomposable flows, which are needed in more
generality for transshipments over time. In Section~\ref{sec:lex-max-flows} we present Hoppe
and Tardos' algorithm for efficiently computing lexicographically maximum flows over time.
This algorithm is the most important subroutine for solving the transshipment over time
problem, which is the subject of the final Section~\ref{sec:transshipments}.

%------------------------------------------------

\section{Notation and preliminaries}
\label{sec:notation}

We consider a network~$\net{}$ given by a directed graph~$(V,A)$ with node set~$V$ and arc
set~$A$. Every arc~$a\in A$ has a positive \defword{capacity}~$u_a\in\R_{>0}\cup\{\infty\}$
and a non-negative \defword{transit time}~$\transit_a\in\R_{\geq0}$. We also refer to transit
times as \defword{length} and \defword{cost}. Network~$\net{}$ comes with a set of
\defword{terminals}~$S\subseteq V$ of cardinality~$k\coloneqq|S|$ that is partitioned into
two subsets of \defword{sources}~$\sources$ and \defword{sinks}~$\sinks$.

We refer to an arc~$a\in A$ with start node~$v$ and end node~$w$ simply as~$vw$. Without loss of
generality, there are no parallel or opposite arcs in network~$\net{}$, sources have no incoming,
and sinks no outgoing arcs. Having said that, for simplicity of notation, we include for each
arc~$vw$ its reverse arc~$wv$ in~$A$. 
The reverse arc's capacity is set to~$u_{wv}\coloneqq0$ and its transit time (or cost or length)
to~$\transit_{wv}\coloneqq-\transit_{vw}\leq0$. One advantage of this convention is that all arcs
of the residual network introduced below will be contained in~$A$.

For a path or cycle~$C$, we denote its total transit time (or length or cost)
by~$\transit(C)$. 
Whenever we use the term \defword{shortest path} or \defword{shortest cycle}, we mean a path/cycle of minimum transit time.
For simplicity only, we assume that the only zero length cycles in~$A$
consist of pairs of opposite arcs. This assumption is without loss of generality, because it
may be enforced by adding infinitesimal (and thus purely symbolic) values to the arc transit
times.

\subsubsection*{Static flows}
A flow (or \defword{static flow})~$x$ in network~$\net{}$ assigns a flow value~$x_{vw}$ to every
arc~$vw$ and the negative of this value~$x_{wv}=-x_{vw}$ to the opposite arc~$wv$. Intuitively,
the flow value~$x_{vw}$ is the amount of flow that is sent from node~$v$ to node~$w$ through
arc~$vw$. A flow~$x$ must satisfy \defword{flow conservation} at every non-terminal node~$v\in
V\setminus\terminals$, meaning that the net amount of flow arriving at~$v$ is zero. That
is,~$\sum_wx_{vw}=0$, where we use the convention that~$x_{vw}=0$ if~$vw$ is not an arc. The cost
of flow~$x$ is~$\cost(x)\coloneqq\sum_{\{v,w\}}\transit_{vw}x_{vw}$, again
setting~$\transit_{vw}x_{vw}=0$ if~$vw$ is not an arc.
Notice that the sum in the definition of~$\cost(x)$ goes over node pairs~$\{v,w\}$
with~$v\neq w$, taking only one arc~$vw$ of each pair of opposite arcs~$vw$ and~$wv$ into
account; since~$\transit_{wv}x_{wv}=(-\transit_{vw})(-x_{vw})=\transit_{vw}x_{vw}$, the
cost~$\cost(x)$ is well-defined.

A \defword{(static) circulation}~$x$ is a flow that satisfies flow conservation at every node of
the network. A flow or circulation~$x$ is \defword{feasible} if it satisfies the capacity
constraint~$x_{vw}\leq u_{vw}$ for every arc~$vw$.
For a feasible flow~$x$ in network~$\net{}$, the \defword{residual network}~$\net{}_x$ is
obtained from network~$\net{}$ by replacing the original arc capacity~$u_{vw}$ with the
\defword{residual capacity}~$u^x_{vw}\coloneqq u_{vw}-x_{vw}\geq0$, for every arc~$vw$. 
Due to our convention of reverse arcs~$wv$ with~$u_{wv}=0$ and~$x_{wv}=-x_{vw}$, this definition of residual capacities is in line with the standard notion of reverse arcs in the residual network, since~$u_{wv}^x = u_{wv} - x_{wv} = 0 - (-x_{vw})=x_{vw}$.
Notice
that if we take a feasible flow~$x$ in~$\net{}$ and a feasible flow~$y$ in the residual
network~$\net{}_x$ of~$x$, then their sum~$x+y$ is again a feasible flow in network~$\net{}$.

A \defword{chain flow}~$\gamma$ of value~$|\gamma|$ sends~$|\gamma|$ units of flow along a
cycle or a path~$C_\gamma$. The length (or transit time) of chain flow~$\gamma$
is~$\transit(\gamma)\coloneqq\transit(C_\gamma)$, its cost
is~$\cost(\gamma)=|\gamma|\transit(\gamma)$. If~$C_\gamma$ is a cycle, then~$\gamma$ is
obviously a circulation. Otherwise, if~$C_\gamma$ is a path,~$\gamma$ only violates flow
conservation at its two end nodes, which will always be terminals in this paper.

A multiset of chain flows~$\Gamma$ is a \defword{chain decomposition} of flow~$x$, if the sum
of all chain flows in~$\Gamma$ is equal to~$x$, and
thus~$c(x)=\sum_{\gamma\in\Gamma}\cost(\gamma)\eqqcolon\cost(\Gamma)$. The subset of chain
flows~$\gamma\in\Gamma$ such that~$C_\gamma$ contains arc~$vw$ is denoted by~$\Gamma_{vw}$. A
chain decomposition~$\Gamma$ of~$x$ is a \defword{standard chain decomposition} if all cycles
or paths~$C_\gamma$,~$\gamma\in\Gamma$, use arcs in the same direction as~$x$ does.

\subsubsection*{Flows over time}
First of all, for an intuitive understanding of flows over time, one can associate arcs of
the network with pipes in a pipeline system for transporting some kind of fluid. The length
of each pipeline determines the transit time of the corresponding arc (assuming that flow
progresses at a constant pace) while the width determines its capacity. For a comprehensive introduction to flows over time we refer to the survey~\cite{Skutella2009}.

Formally, a \defword{flow over time}~$f$ consists of a function~$f_{vw}:\R\to\R$ for every arc~$vw$. We
say that~$f_{vw}(\timehorizon')$ is the \defword{flow rate} (i.e., amount of flow per time
unit) entering arc~$vw$ at its tail node~$v$ at time~$\timehorizon'$. The flow particles
entering arc~$vw$ at time~$\timehorizon'$ arrive at the head node~$w$ exactly~$\transit_{vw}$
time units later at time~$\timehorizon'+\transit_{vw}$. In particular, the outflow rate at the
head node~$w$ at time~$\timehorizon'$ equals~$f_{vw}(\timehorizon'-\transit_{vw})$. Accordingly,
for pairs of opposite arcs~$vw$ and~$wv$ we get
\[
f_{wv}(\timehorizon')=-f_{vw}(\timehorizon'-\transit_{vw})
\qquad\text{for all~$\timehorizon'\in\R$.}
\]
A flow over time~$f$ must satisfy \defword{flow conservation} at every non-terminal node,
meaning that the net rate of flow arriving at~$v\in V\setminus\terminals$ is zero at all
times. That is,~$\sum_wf_{vw}(\timehorizon')=0$ for all~$\timehorizon'\in\R$. A flow over
time~$f$ is \defword{feasible} if it satisfies the capacity
constraint~$f_{vw}(\timehorizon')\leq u_{vw}$ for all~$\timehorizon'\in\R$ and for every
arc~$vw$. A flow over time~$f$ has \defword{time
horizon}~$\timehorizon\in\R_{\geq0}\cup\{\infty\}$, if
\[
f_{vw}(\timehorizon')=0
\qquad
\text{for all~$\timehorizon'\in(-\infty,0)\cup[\timehorizon,\infty)$ and for every arc~$vw$.}
\]
All flow over time problems considered in this paper ask for a flow over time with a given time horizon that we always denote by~$\timehorizon$.

In order to determine, for example, the amount of flow that a flow over time~$f$ sends
through arc~$vw$ within some finite time interval like~$[0,\timehorizon)$, one has to
integrate the function~$f_{vw}$ over that time interval. In order for this to be well
defined, it is usually assumed that all arc functions~$f_{vw}$ must be Lebesgue-integrable.
All flows over time considered in this paper, however, have piecewise constant flow rates on
arcs so that we can safely ignore such technicalities.

\subsubsection*{Other flow over time models}
We mention that several variations of the described flow over time model exist in the
literature. For example, we require a very strict notion of flow conservation here, where all
flow arriving at a non-terminal node must leave that node again instantaneously.
Alternatively, one could allow for some limited or unlimited amount of flow to be temporarily
stored at a node. For all flow over time problems considered in this paper, this more relaxed
version of flow conservation does not lead to better solutions, such that we stick to the
strict and simple flow conservation model. There exist other flow over time problems,
however, where the possibility to temporarily store flow at intermediate nodes does make a
difference. In this context, multi-commodity flows over time are a prominent example;
see~\cite{FleischerSkutella2007,GrossSkutella2015}.

Another, more fundamental distinction of flow over time models is the underlying time model.
We work with the so-called \defword{continuous time model} where a flow over time assigns a
flow rate~$f_{vw}(\timehorizon')$ to every point in time~$\timehorizon'\in\R$. The original
definition by Ford and Fulkerson~\cite{FordFulkerson1958,FordFulkerson1962} is based on a
\defword{discrete time model}, where transit times~$\transit_{vw}$ and time
horizon~$\timehorizon$ are integral, and the flow along an arc~$vw$ is described by a
function~$g_{vw}:\Z\to\R$ where~$g_{vw}(\timehorizon')$ denotes the amount of flow sent at
time~$\timehorizon'$ into arc~$vw$ and arriving at the head node~$w$ at
time~$\timehorizon'+\transit_{vw}$. Fleischer and Tardos~\cite{FleischerTardos1998} point out
a strong connection between the two models. They show that many results and algorithms which
have been developed for the discrete time model can be carried over to the continuous time
model.

\subsubsection*{Time-expanded networks}
A distinctive feature of the discrete time model is the existence of \defword{time-expanded
networks} with time layers~$\timehorizon'=1,\dots,\timehorizon$ for time
horizon~$\timehorizon$, where every layer~$\timehorizon'$ contains a copy~$v_{\timehorizon'}$
of every node~$v$ of the underlying flow network. Moreover, for every arc~$vw$ there is a
copy~$v_{\timehorizon'}w_{\timehorizon'+\transit_{vw}}$ in the time-expanded network,
for~$\timehorizon'=1,\dots,\timehorizon-\transit_{vw}$. Every discrete flow over time with time
horizon~$\timehorizon$ thus corresponds to a static flow in the time-expanded network and
vice versa. As a consequence, flow over time problems in the discrete time model may be
solved as static flow problems in the time-expanded network. The downside of this approach,
however, is the huge size of the time-expanded network which is linear in the given time
horizon~$\timehorizon$ and therefore only pseudo-polynomial in the input size. Therefore,
time-expanded networks do not lead to efficient algorithms whose running time is polynomially
bounded in the input size.

Despite this deficit, time-expanded networks can sometimes be used in the analysis of
efficient algorithms; see, e.g., the work of Hoppe and Tardos~\cite{HoppeTardos2000} on
transshipments over time. Due to the strong connections between the discrete and the
continuous time model mentioned above, time-expanded networks can also play a role in the
analysis of algorithms for the latter time model; see, e.g.,~\cite{FleischerTardos1998}. This
comes at a price, however, since in the first instance the results obtained in this way are
usually limited to networks with integral or rational transit times.

One purpose of our paper is to avoid the use of time-expanded networks altogether. On the one
hand, this generalizes previously known results to networks with arbitrary (rational or
irrational) transit times. On the other hand, it sometimes even leads to arguably simpler
proofs.

%------------------------------------------------

\section{Maximum Flows Over Time}
\label{sec:max-flow}

Ford and Fulkerson~\cite{FordFulkerson1958} introduced the \defword{maximum flow over time
problem}. Here the task is to send as much flow as possible from the sources~$\sources$ to
the sinks~$\sinks$ within a given time bound~$\timehorizon$. This problem can be reduced to a
static minimum cost circulation problem in an extended network~$\net{\sources}$. We describe
the construction of this network in more generality since it will be used for more complex
flow over time problems in later sections as well.

For a subset of terminals~$X\subseteq\terminals$, the \defword{extended network}~$\net{X}$ is
constructed by adding super-source~$\supersource$ together with arcs~$(\supersource,s)$ for
all sources~$s\in\sources\cap X$ of infinite capacity~$\capacity_{(\supersource,s)}=\infty$
and zero transit time~$\transit_{(\supersource,s)}=0$, as well as arcs~$(t,\supersource)$ for
all sinks~$t\in\sinks\setminus X$ of infinite capacity~$\capacity_{(t,\supersource)}=\infty$
and negative transit time~$\transit_{(t,\supersource)}=-\timehorizon$; see
Figure~\ref{fig:extended-network}. 
Network~$\net{X}$ is designed to send flow from~$\supersource$ via the sources
in~$\sources\cap X$ through the original network~$\net{}$ into the sinks in~$\sinks\setminus
X$, and then back to~$\supersource$. Since a maximum flow over time may use all sources and
sinks, in the following we work with network~$\net{\sources}$.
\begin{figure}[h]
\centering
\begin{tikzpicture}
% sources
\fill [black!15] (-2,0) ellipse (0.6cm and 1.8cm);
\fill [black!30] (-1.4,0) arc (0:180:0.6cm and 1.8cm) -- cycle;
\node [node] (s1) at (-2,1.5) {};
\draw [arc] (s1) -- +(0.8,0.3);
\draw [arc] (s1) -- +(0.8,-0.3);
\node at (-2,0.9) {$\vdots$};
\node [node] (s2) at (-2,0.3) {};
\draw [arc] (s2) -- +(0.8,0.4);
\draw [arc] (s2) -- +(0.8,-0.1);
\node [node] (s3) at (-2,-0.3) {};
\draw [arc] (s3) -- +(0.8,0.1);
\draw [arc] (s3) -- +(0.8,-0.4);
\node at (-2,-0.9) {$\vdots$};
\node [node] (s4) at (-2,-1.5) {};
\draw [arc] (s4) -- +(0.8,0.3);
\draw [arc] (s4) -- +(0.8,-0.3);
\node at (-2.9,0) {$\sources$};

% sinks
\fill [black!15] (2,0) ellipse (0.6cm and 1.8cm);	
\fill [black!30] (1.4,0) arc (180:360:0.6cm and 1.8cm) -- cycle;
\node [node] (t1) at (2,1.5) {};
\draw [arc,<-] (t1) -- +(-0.8,0.3);
\draw [arc,<-] (t1) -- +(-0.8,-0.3);
\node at (2,0.9) {$\vdots$};
\node [node] (t2) at (2,0.3) {};
\draw [arc,<-] (t2) -- +(-0.8,0.4);
\draw [arc,<-] (t2) -- +(-0.8,-0.1);
\node [node] (t3) at (2,-0.3) {};
\draw [arc,<-] (t3) -- +(-0.8,0.1);
\draw [arc,<-] (t3) -- +(-0.8,-0.4);
\node at (2,-0.9) {$\vdots$};
\node [node] (t4) at (2,-1.5) {};
\draw [arc,<-] (t4) -- +(-0.8,0.3);
\draw [arc,<-] (t4) -- +(-0.8,-0.3);
\node at (2.9,0) {$\sinks$};

\node [node] (super) at (0,2.8) {\footnotesize$\supersource$};
\draw [arc] (super.200) .. controls (-3,2.3) and (-3,1.7) .. (s1);
\draw [arc] (super.170) .. controls (-3.8,2.5) and (-3.8,1.2) .. (s2);
\node at (-2.9,2.6) {$\transit_{\supersource r}=0$};
\node at (-2.02,1.15) {\footnotesize$\sources\!\!\cap\! X$};
\node at (-2.02,-0.68) {\footnotesize$\sources\!\!\setminus\!X$};
\node at (-2.8,1.5) {$\iddots$};

\draw [arc,<-] (super.-25) .. controls (3,2.3) and (3,1.7) .. (t1);
\draw [arc,<-] (super.20) .. controls (3.8,2.5) and (3.8,1.2) .. (t2);
\node at (3,2.6) {$\transit_{r\supersource}=-\timehorizon$};
\node at (1.98,-0.65) {\footnotesize$\sinks\!\!\cap\! X$};
\node at (1.98,1.15) {\footnotesize$\sinks\!\!\setminus\!X$};
\node at (2.8,1.5) {$\ddots$};

\end{tikzpicture}
\caption{Extended network~$\net{X}$ for a subset of terminals~$X\subseteq\terminals$}
\label{fig:extended-network}
\end{figure}
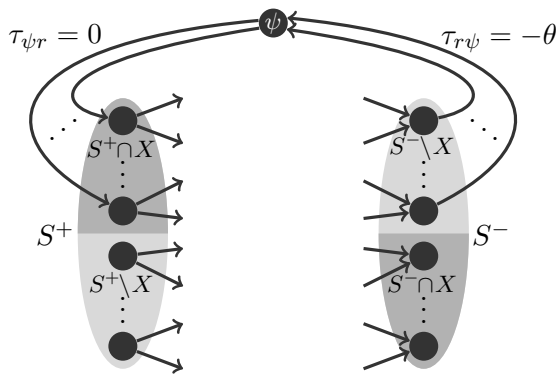

\begin{rem}
The problem originally studied by Ford and Fulkerson~\cite{FordFulkerson1958} is to send the
maximum possible amount of flow from a single source~$s$ to a single sink~$t$ within time
horizon~$\timehorizon$. Notice that this problem is equivalent to our generalized variant of
the problem described above: We could simply split up the super-source~$\supersource$ in
network~$\net{\sources}$ into two nodes~$s$ and~$t$ where~$s$ is connected to every source
in~$\sources$ and every sink in~$\sinks$ is connected to~$t$. Then, sending as
much flow as possible from~$s$ to~$t$ in this network is equivalent to sending as much flow
as possible from~$\sources$ to~$\sinks$ in~$\net{}$. In the following we thus stick to our
variant of the problem.	
\end{rem}

Ford and Fulkerson's algorithm starts by computing a standard chain decomposition~$\Gamma$ of
a minimum cost circulation~$x$ in~$\net{\sources}$. Each chain flow~$\gamma\in\Gamma$
sends~$|\gamma|$ units of flow along a cycle~$C_\gamma$ of nonpositive
cost~$\transit(C_\gamma)\leq0$ that contains the super-source~$\supersource$. If we
delete~$\supersource$ and its incident arcs from~$C_\gamma$, what remains is a source sink
path~$P_\gamma$ in~$\net{}$
with~$\transit(P_\gamma)=\transit(C_\gamma)+\timehorizon\leq\timehorizon$.

For each~$\gamma\in\Gamma$, Ford and Fulkerson's resulting flow over
time~$[\Gamma]^\timehorizon$ sends flow at rate~$|\gamma|$ into~$\gamma$'s source sink
path~$P_\gamma$ during the time interval~$\bigl[0,\timehorizon-\transit(P_\gamma)\bigr)$.
Notice that~$\transit(P_\gamma)\geq0$
and~$\timehorizon-\transit(P_\gamma)=-\transit(C_\gamma)\geq0$. Then, for
each~$\gamma\in\Gamma$, flow arrives at the final sink node of~$P$ at rate~$|\gamma|$ during
the time interval~$\bigl[\transit(P_\gamma),\timehorizon\bigr)$ and no flow remains in the
network after time~$\timehorizon$. Due to its special structure, the computed flow over
time~$[\Gamma]^\timehorizon$ is called a \defword{temporally repeated flow} or a
\defword{standard chain-decomposable flow}. 

\begin{Def}
\label{def:temporally-repeated}
Let~$\Gamma$ be a standard chain decomposition of a static circulation in~$\net{\sources}$
such that, for each~$\gamma\in\Gamma$,~$C_\gamma$ contains arc~$\supersource r_\gamma$ for
some source~$r_\gamma\in\sources$. Let~$P_\gamma$ denote the source sink path starting at~$r_\gamma$ that
is obtained from~$C_\gamma$ by deleting~$\supersource$ and its incident arcs. For
each~$\gamma\in\Gamma$, the \defword{standard chain-decomposable} (or \defword{temporally repeated}) \defword{flow~$[\Gamma]^\timehorizon$ with time
horizon~$\timehorizon$} sends flow at rate~$|\gamma|$ into path~$P_\gamma$ during the time
interval~$\bigl[0,\timehorizon-\transit(P_\gamma)\bigr)$.
\end{Def}

Ford and Fulkerson's algorithm is summarized in
Algorithm~\ref{alg:FordFulkerson}.
\begin{algorithm}
\caption{Ford and Fulkerson's maximum flow over time algorithm}
\label{alg:FordFulkerson}
\begin{algorithmic}[1]
\Require network~$\net{}$ and time horizon~$\timehorizon$
\Ensure maximum flow over time with time horizon~$\timehorizon$
\State compute a static minimum cost circulation~$x$ in~$\net{\sources}$
\State find a standard chain decomposition~$\Gamma$ of~$x$
\State \Return~$[\Gamma]^\timehorizon$
\end{algorithmic}
\end{algorithm}

By construction of the computed temporally repeated flow~$[\Gamma]^\timehorizon$, the flow
rate entering any arc~$a\in A$ is always non-negative and bounded
by~$\sum_{a\in\gamma}|\gamma|=x_a\leq u_a$. Moreover, it satisfies flow conservation
constraints at all intermediate nodes of the paths~$P_\gamma$,~$\gamma\in\Gamma$, since all
flow arriving at an intermediate node~$v$ while traveling along some path~$P_\gamma$
immediately continues its journey on the next arc leaving~$v$. Thus,~$[\Gamma]^\timehorizon$
is a feasible flow over time with time horizon~$\timehorizon$. The total amount of flow sent
from the sources to the sinks is
\begin{align}
\bigl|[\Gamma]^\timehorizon\bigr|
=\sum_{\gamma\in\Gamma}|\gamma|\bigl(\timehorizon-\transit(P_\gamma)\bigr)
=-\sum_{\gamma\in\Gamma}|\gamma|\transit(C_\gamma) =-\cost(x).
\label{eq:max-flow-value}
\end{align}
Notice that this flow value does not depend on the particular chain decomposition~$\Gamma$ but
only on the minimum cost~$\cost(x)$. In view of~\eqref{eq:max-flow-value}, the following
theorem states that Algorithm~\ref{alg:FordFulkerson} returns a maximum flow over time
in~$\net{}$ with time horizon~$\timehorizon$.

\begin{theorem}[Ford, Fulkerson~\cite{FordFulkerson1958};
Anderson, Philpott~\cite{AndersonPhilpott1994}]
\label{thm:max-flow-over-time}
The maximum
\linebreak 
amount of flow that a feasible flow over time with time horizon~$\timehorizon$ can send
from~$\sources$ to~$\sinks$ is equal to~$-\cost(x)$ where~$x$ is a static minimum cost
circulation in network~$\net{\sources}$.
\end{theorem}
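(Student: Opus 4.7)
By equation~\eqref{eq:max-flow-value}, the flow returned by Algorithm~\ref{alg:FordFulkerson} is a feasible flow over time of value exactly $-\cost(x)$, so the lower bound $\max|f|\geq -\cost(x)$ is already in hand; the entire work lies in the matching upper bound, that every feasible flow over time $f$ with time horizon $\timehorizon$ satisfies $|f| \leq -\cost(x)$. The plan is to derive this upper bound from a ``cut-in-space-time'' certificate extracted from the LP dual of the minimum cost circulation in $\net{\sources}$, without ever writing down a time-expanded network.

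The first step is to invoke LP duality (equivalently, the absence of negative residual cycles at the optimum~$x$) to produce node potentials $T : V \cup \{\supersource\} \to \R$. After normalizing $T_\supersource = 0$, the uncapacitated super-source arcs $(\supersource, s)$ of cost $0$ and $(t, \supersource)$ of cost $-\timehorizon$ force the boundary conditions $T_s \leq 0$ for every source~$s$ and $T_t \geq \timehorizon$ for every sink~$t$, and strong duality delivers the identity
\[
-\cost(x) \;=\; \sum_{vw \in A}\, u_{vw}\, (T_w - T_v - \transit_{vw})^+.
\]
Intuitively, $T_v$ marks the time at which node~$v$ switches from the ``source side'' to the ``sink side'' of a cut in the space-time picture, and each arc $vw$ contributes its capacity times the length of the window $[T_v, T_w - \transit_{vw})$ of departure times for which flow entering $vw$ at~$v$ straddles the cut (zero when this window is empty).

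The second step is to bound $|f|$ by this cut value. I would introduce the step function $\chi_v(\tau) := \mathbf{1}[\tau \geq T_v]$; the boundary conditions on $T$ give $\chi_s \equiv 1$ and $\chi_t \equiv 0$ on $[0, \timehorizon)$, with the marginal case $T_t = \timehorizon$ handled by the same kind of infinitesimal symbolic perturbation already used for zero-length cycles in Section~\ref{sec:notation}. Weighting flow conservation at every node by $\chi_v(\tau)$, integrating over $[0, \timehorizon)$, and using the reverse-arc convention $f_{wv}(\tau) = -f_{vw}(\tau - \transit_{vw})$ to collect contributions arc by arc, I expect the computation to telescope to
\[
|f| \;=\; \sum_{vw \in A^+} \int_0^{\timehorizon}\! \bigl(\chi_v(\tau) - \chi_w(\tau + \transit_{vw})\bigr)\, f_{vw}(\tau)\, d\tau.
\]
For each arc~$vw$ the bracketed weight equals $+1$ on $[T_v, T_w - \transit_{vw})$ when $T_w > T_v + \transit_{vw}$, equals $-1$ on the reversed interval otherwise, and vanishes elsewhere. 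Since $0 \leq f_{vw}(\tau) \leq u_{vw}$ on every forward arc, arcs of the second type contribute non-positively, while arcs of the first type contribute at most $u_{vw}(T_w - T_v - \transit_{vw})$, giving $|f| \leq \sum_{vw} u_{vw}(T_w - T_v - \transit_{vw})^+ = -\cost(x)$.

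The main delicacy, I expect, will be aligning sign conventions between the LP dual (which naturally produces $(T_w - T_v - \transit_{vw})^+$ only after the right choice of normalization for $T_\supersource$ and the right orientation of the flow-conservation constraints) and the telescoping from flow conservation (which must produce weights supported on exactly the same intervals $[T_v, T_w - \transit_{vw})$). The paper's reverse-arc convention with negative transit times on reverse arcs has to be threaded carefully through both computations; once the signs line up, the capacity bound $f_{vw}(\tau) \leq u_{vw}$ closes the argument immediately, and, as promised in the introduction, the time-expanded network is nowhere needed.
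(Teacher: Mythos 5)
Your argument is correct, but it takes a slightly different route from the paper's: where you prove the upper bound from first principles by integrating flow conservation against the indicator functions~$\chi_v$, the paper stops once it has shown that the normalized LP dual potentials~$(\alpha_v)$ constitute a cut over time (Definition~\ref{def:cut-over-time}) of capacity~$-\cost(x)$, and then invokes Theorem~\ref{thm:cut-over-time-bound} (Anderson et al.; Philpott), which was stated earlier but is not reproved there, as a black box to conclude~$|f|\leq-\cost(x)$. In effect, your telescoping computation is an inline proof of exactly that cited theorem for the specific cut at hand. The paper's proof is shorter because the heavy lifting is outsourced; yours is more self-contained and makes explicit why a cut over time bounds the flow value. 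Both are valid, and the two LP-duality steps (normalizing~$T_\supersource=\alpha_\supersource=0$ and using complementary slackness on the infinite-capacity super-source arcs to get~$T_s\leq0$, $T_t\geq\timehorizon$) coincide.

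One place where you hedge (``I expect the computation to telescope'') is worth pinning down, since the change of variables~$\sigma=\tau-\transit_{vw}$ produces the integration limit~$\timehorizon-\transit_{vw}$ rather than~$\timehorizon$. The telescoping closes exactly because the time-horizon condition~$f_{a}(\timehorizon')=0$ for~$\timehorizon'\geq\timehorizon$ is imposed on \emph{every} arc~$a\in A$, including the reverse arc~$wv$; combined with~$f_{wv}(\timehorizon')=-f_{vw}(\timehorizon'-\transit_{vw})$ this forces~$f_{vw}(\sigma)=0$ for~$\sigma\geq\timehorizon-\transit_{vw}$, so the missing tail~$\int_{\timehorizon-\transit_{vw}}^{\timehorizon}\chi_w(\sigma+\transit_{vw})f_{vw}(\sigma)\,d\sigma$ vanishes identically. (This also makes the inflow to the sinks within~$[0,\timehorizon)$ equal the outflow from the sources, so either can serve as~$|f|$.) Once that is noted, the boundary case~$T_t=\timehorizon$ needs no perturbation at all:~$\chi_t\equiv0$ on~$[0,\timehorizon)$ regardless, and the rest of your bound goes through arc by arc as written.
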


While Ford and Fulkerson~\cite{FordFulkerson1958} proved this result in the discrete time
model via minimum cuts in time-expanded networks, the proof of Anderson and
Philpott~\cite{AndersonPhilpott1994} uses a more general concept of \defword{cuts over time}
introduced for the special case of zero transit times by Anderson, Nash, and
Philpott~\cite{AndersonNashPhilpott1982} (see also~\cite{AndersonNash1987}) and generalized to the
setting with non-zero transit times by Philpott~\cite{Philpott1990}; see also Fleischer and
Tardos~\cite{FleischerTardos1998}.

\begin{Def}
\label{def:cut-over-time} 
A \defword{cut over time} with time horizon~$\timehorizon$ in~$\net{}$ is given by non-negative
values~$(\alpha_v)_{v\in V}$, with~$\alpha_s\leq0$ for~$s\in\sources$
and~$\alpha_t\geq\timehorizon$ for~$t\in\sinks$. Its capacity is
\[
\sum_{vw\in A}\max\{0,\alpha_w-\transit_{vw}-\alpha_v\}u_{vw}.
\]
\end{Def}

The intuition behind Definition~\ref{def:cut-over-time} is as follows: a node~$v\in V$ is on
the \defword{source side} of the cut over time from time~$\alpha_v$ on. Before time~$\alpha_v$,
it is on the \defword{sink side}. In particular, within the time interval~$[0,\timehorizon)$,
sources in~$\sources$ are always on the source side and sinks in~$\sinks$ are always on the
sink side. Therefore, any flow particle traveling from a source to a sink during that time
interval must eventually cross the cut over time from the source to the sink side. The only
possibility to do so is to enter an arc~$vw$ while~$v$ is already on the source side, that is,
after time~$\alpha_v$, and to arrive at~$w$ while~$w$ is still on the sink side, that is,
before time~$\alpha_w$. Since it takes~$\transit_{vw}$ time to traverse arc~$vw$, the critical
time interval to enter is~$[\alpha_v,\alpha_w-\transit_{vw})$. As the length of the critical time
interval is~$\max\{0,\alpha_w-\transit_{vw}-\alpha_v\}$ and the flow rate entering arc~$vw$ is
bounded by its capacity~$u_{vw}$, the total amount of flow that can cross the cut over time
from the source to the sink side is bounded by its capacity.

\begin{theorem}[Anderson et al.~\cite{AndersonNashPhilpott1982}, Philpott~\cite{Philpott1990}]
\label{thm:cut-over-time-bound}
For a given network~$\net{}$, the capacity of a cut over time with time horizon~$\timehorizon$
is an upper bound on the maximum amount of flow that can be sent from~$\sources$ to~$\sinks$
by a feasible flow over time with time horizon~$\timehorizon$.
\end{theorem}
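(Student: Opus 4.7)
The plan is to mimic the classical max-flow/min-cut duality by using the potentials $\alpha_v$ as a time-dependent node labeling. For each node~$v$, introduce the indicator $\beta_v(\timehorizon') := \mathbf{1}[\timehorizon' \geq \alpha_v]$, signalling that $v$ lies on the source side of the cut at time~$\timehorizon'$. The boundary conditions $\alpha_s \leq 0$ and $\alpha_t \geq \timehorizon$ imply $\beta_s \equiv 1$ and $\beta_t \equiv 0$ on $[0, \timehorizon)$ for every source~$s$ and sink~$t$. Combining this with flow conservation at non-terminals (which contributes nothing), the flow value rewrites as
\[
|f| = \sum_{v \in V} \int_\R \beta_v(\timehorizon') \sum_w f_{vw}(\timehorizon') \, d\timehorizon' = \sum_{vw \in A} \int_\R f_{vw}(\timehorizon') \beta_v(\timehorizon') \, d\timehorizon'.
\]

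Next I would pair each arc $vw$ with its opposite $wv$. Using $f_{wv}(\timehorizon') = -f_{vw}(\timehorizon' - \transit_{vw})$ and substituting $s = \timehorizon' - \transit_{vw}$ in the $wv$-term, the pair's combined contribution collapses to
\[
\int_\R f_{vw}(\timehorizon')\bigl[\beta_v(\timehorizon') - \beta_w(\timehorizon' + \transit_{vw})\bigr] d\timehorizon'.
\]
The bracketed difference equals $+1$ precisely on the critical window $[\alpha_v, \alpha_w - \transit_{vw})$, equals $-1$ on $[\alpha_w - \transit_{vw}, \alpha_v)$, and vanishes elsewhere, with only one of the two intervals ever non-empty. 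When the $+1$-window is non-empty, the capacity bound $f_{vw} \leq u_{vw}$ estimates the contribution by $(\alpha_w - \transit_{vw} - \alpha_v)u_{vw}$. When instead the $-1$-window is non-empty, the integrand $-f_{vw}$ is non-positive, because the reverse-arc constraint $f_{wv} \leq u_{wv} = 0$ forces $f_{vw} \geq 0$, so the contribution is at most $0$. Either way, the pair is bounded above by $\max\{0, \alpha_w - \transit_{vw} - \alpha_v\}u_{vw}$. The matching term $\max\{0, \alpha_v - \transit_{wv} - \alpha_w\}u_{wv}$ in the cut capacity vanishes because $u_{wv} = 0$, so summing over pairs reproduces exactly the capacity of the cut.

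The main obstacle is careful bookkeeping with the integration ranges in the substitution. The definition of ``time horizon $\timehorizon$'' applied simultaneously to $f_{vw}$ and to its reverse $f_{wv}$ actually forces $f_{vw}$ to be supported in $[0, \timehorizon - \max\{\transit_{vw}, 0\})$, so that no flow is in transit at time~$\timehorizon$; this is what makes the substitution $s = \timehorizon' - \transit_{vw}$ align the integrals cleanly over all of $\R$. A secondary subtlety is the asymmetric case analysis, which crucially uses the non-negativity $f_{vw} \geq 0$ to absorb the ``wrong-direction'' contribution into the $\max\{0,\cdot\}$ structure of the cut capacity.
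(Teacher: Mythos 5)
Your proof is correct, and it is worth noting that the paper itself does not give a formal proof of this theorem: it states the result with citations to Anderson--Nash--Philpott and Philpott, and precedes it only with an informal ``flow particle must cross the cut'' paragraph that serves as intuition. Your argument is a rigorous Eulerian formalization of exactly that intuition, and it quietly resolves a gap the informal particle-tracing argument glosses over, namely that flow may oscillate across the cut: your pairing of $vw$ with $wv$ splits the contribution into a $+1$-window (forward crossings, bounded via $f_{vw}\leq u_{vw}$) and a $-1$-window (backward crossings, discarded because $f_{vw}\geq 0$ by the reverse-arc constraint), so the bound survives multiple crossings. The chain of identities
$|f|=\sum_v\int\beta_v\sum_wf_{vw}=\sum_{vw\in A}\int f_{vw}\beta_v$, the collapse of each $\{vw,wv\}$ pair to $\int f_{vw}\bigl[\beta_v(\cdot)-\beta_w(\cdot+\transit_{vw})\bigr]$, and the case analysis on the sign of $\alpha_w-\transit_{vw}-\alpha_v$ are all sound, and the observation that the reverse-arc term $\max\{0,\alpha_v-\transit_{wv}-\alpha_w\}u_{wv}$ vanishes because $u_{wv}=0$ correctly reconciles your per-pair bound with the paper's definition of cut capacity, which sums over all of $A$.

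One small point about your closing ``obstacle'' paragraph: the support restriction $f_{vw}\equiv 0$ outside $[0,\theta-\transit_{vw})$, which indeed follows from applying the time-horizon condition to the reverse arc $wv$, is not really needed to legitimize the substitution $s=\theta'-\transit_{vw}$ (that is a harmless change of variables over all of $\R$). Its actual role, and the reason the paper imposes the time-horizon condition on reverse arcs as well, is to guarantee that no flow is in transit at time $\theta$, so that ``net outflow from sources'' and ``net inflow to sinks during $[0,\theta)$'' coincide and both equal $|f|$. Without it, your identity would bound the source outflow but not a priori the amount delivered to the sinks. You may also observe that your argument never uses the non-negativity $\alpha_v\geq 0$ required of intermediate nodes in Definition~\ref{def:cut-over-time}; the bound already holds with only $\alpha_s\leq 0$ and $\alpha_t\geq\theta$.
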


With this result at hand, it is not difficult to prove the optimality of the flow over time
computed by Algorithm~\ref{alg:FordFulkerson}.

\begin{proof}[of Theorem~\ref{thm:max-flow-over-time}]
According to~\eqref{eq:max-flow-value},~$-\cost(x)$ is a lower bound on the maximum flow
value. It thus remains to show that it is also an upper bound. This can be done via a suitable
cut over time. The value~$-\cost(x)$ of a static minimum cost circulation~$x$ in
network~$\net{\sources}$ is the optimum solution value of the following linear programming~(LP)
formulation (notice that the LP only uses variables~$x_a$ corresponding to the original (i.e.,~positive capacity) arcs~$a$ in~$\net{}$:
\begin{align*}
\max~ & -\sum_{a}\transit_a x_a\\
\text{s.t.}\quad 
& \sum_{a\in\delta^+(v)}x_a -\sum_{a\in\delta^-(v)}x_a = 0 
&& \text{for all~$v$,}\\
& x_a \leq u_a && \text{for all~$a$,}\\
& x_a \geq 0 && \text{for all $a$.}
\end{align*} 
To find a suitable cut over time, we consider the corresponding dual linear program: 
\begin{align}
\min~ 
& \sum_{a} u_a y_a
\notag\\
\text{s.t.}\quad 
& y_{a} + \alpha_v-\alpha_w\geq -\transit_{a} 
&& \text{for all~$a=vw$,}
\label{eq:dual-constraint}\\
& y_a\geq 0 
&& \text{for all~$a$.}
\notag
\end{align}
For an optimum dual solution it holds that $y_a=\max\{0,\alpha_w-\transit_a-\alpha_v\}$ for
all~$a=vw$. It only remains to show that the optimal dual values~$(\alpha_v)$ constitute a cut
over time as in Definition~\ref{def:cut-over-time}, whose capacity is then equal to the
optimum value~$-\cost(x)$ of the primal and dual linear program.

Notice that increasing or decreasing all dual variables~$\alpha_v$ by the same amount affects
neither feasibility nor the dual solution value. We may therefore assume
that~$\alpha_\supersource=0$. Since all arcs~$\supersource s$, $s\in\sources$,
and~$t\supersource$, $t\in\sinks$, have infinite capacity, complementary slackness implies
that their dual variables~$y_{\supersource s}$ and~$y_{t\supersource}$ are zero. Thus, the
dual constraints~\eqref{eq:dual-constraint} imply~$\alpha_s=\alpha_s-\transit_{\supersource
s}\leq\alpha_\supersource=0$ for all~$s\in\sources$. Similarly, for~$t\in\sinks$ the dual
constraint~\eqref{eq:dual-constraint}
implies~$\alpha_t\geq\alpha_\supersource-\transit_{t\supersource}=\timehorizon$.
\end{proof}

%------------------------------------------------

\section{Submodularity of the maximum flow over time function}
\label{sec:submodular}

In the previous section we have studied the problem of finding in the given network~$\net{}$ a
flow over time with time horizon~$\timehorizon$ that maximizes the amount of flow sent from
the sources~$\sources$ to the sinks~$\sinks$. In the context of transshipments over time, we
are also interested in how much flow can be sent from a particular subset of sources to a
particular subset of sinks within time~$\timehorizon$. More precisely, for some subset of
terminals~$X\subseteq\terminals$, we denote by~$\oo(X)$ the maximum amount of flow that can be
sent from~$\sources\cap X$ to~$\sinks\setminus X$ within time horizon~$\timehorizon$. As a
consequence of Theorem~\ref{thm:max-flow-over-time}, the value~$\oo(X)$ can be obtained via
one static minimum cost circulation computation in network~$\net{X}$.

\begin{cor}
\label{cor:max-flow-over-time}
For~$X\subseteq\terminals$, the maximum amount of flow~$\oo(X)$ that a feasible flow over time
with time horizon~$\timehorizon$ can send from~$\sources\cap X$ to~$\sinks\setminus X$ is
equal to~$-\cost(x)$ where~$x$ is a static minimum cost circulation in network~$\net{X}$.
\end{cor}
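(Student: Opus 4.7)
The plan is to obtain this statement as an immediate consequence of Theorem~\ref{thm:max-flow-over-time} by reinterpreting which terminals count as ``active.'' Concretely, I would consider the auxiliary network~$\widetilde{\net{}}$ having the same underlying graph as~$\net{}$ but equipped with the modified terminal set~$\widetilde{\terminals}\coloneqq(\sources\cap X)\cup(\sinks\setminus X)$, partitioned into sources~$\widetilde{\sources}\coloneqq\sources\cap X$ and sinks~$\widetilde{\sinks}\coloneqq\sinks\setminus X$. The goal is then to show that the maximum flow over time value for~$\widetilde{\net{}}$ coincides with~$\oo(X)$, and that applying the extended-network construction of Section~\ref{sec:max-flow} to~$\widetilde{\net{}}$ recovers exactly the network~$\net{X}$.

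First I would verify that insisting on flow conservation at the ``deactivated'' terminals, i.e.\ at the nodes in~$(\sources\setminus X)\cup(\sinks\cap X)$, does not change the problem. By our standing assumption on the network, sources in~$\sources$ have no incoming arcs, so any~$s\in\sources\setminus X$ can only push flow outward; requiring flow conservation at~$s$ therefore forces its outflow rate to be zero at every point in time, so no flow enters the network at~$s$. Symmetrically, sinks in~$\sinks\cap X$ have no outgoing arcs, so flow conservation there forces zero inflow. The feasible flows over time of~$\net{}$ that send flow from~$\sources\cap X$ to~$\sinks\setminus X$ therefore coincide with the feasible flows over time of~$\widetilde{\net{}}$ from~$\widetilde{\sources}$ to~$\widetilde{\sinks}$, and in particular their maximal values agree.

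Next I would observe that the extended network associated with~$\widetilde{\net{}}$ as in Section~\ref{sec:max-flow} attaches a super-source~$\supersource$ via arcs of zero transit time and infinite capacity to every node of~$\widetilde{\sources}\cap\widetilde{\sources}=\sources\cap X$, and connects every node of~$\widetilde{\sinks}\setminus\widetilde{\sources}=\sinks\setminus X$ back to~$\supersource$ via arcs of transit time~$-\timehorizon$ and infinite capacity. By definition this is precisely the network~$\net{X}$. Applying Theorem~\ref{thm:max-flow-over-time} to~$\widetilde{\net{}}$ then yields~$\oo(X)=-\cost(x)$, where~$x$ is a static minimum cost circulation in~$\net{X}$, as claimed.

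There is no real obstacle in this argument; the only point that warrants attention is confirming that the ``unused'' terminals can be safely reclassified as non-terminal nodes, which follows directly from the convention that sources have no incoming and sinks no outgoing arcs. Everything else is a direct invocation of Theorem~\ref{thm:max-flow-over-time}.
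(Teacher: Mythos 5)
Your argument is correct and fills in precisely the reasoning the paper leaves implicit when it states the corollary ``as a consequence of Theorem~\ref{thm:max-flow-over-time}.'' Reclassifying the deactivated terminals as non-terminals, observing that flow conservation there forces zero throughput (because sources have no incoming and sinks no outgoing arcs), and noting that the extended-network construction applied to the reduced terminal set reproduces exactly~$\net{X}$, is the intended route; nothing more is needed.
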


For the discrete time model, the values~$\oo(X)$ are minimum cut capacities in the
time-expanded network. In this model, the corresponding \defword{maximum flow over time
function}~$\oo:2^{\terminals}\to\R$ is thus a cut function, and as such
submodular; see Megiddo~\cite{Megiddo1974}. For the continuous time model with arbitrary (irrational)
transit times and time horizon, however, this line of argument cannot be directly applied and
we therefore present a different kind of proof.

\begin{theorem}
\label{thm:submodular}
The function~$\oo:2^{\terminals}\to\R$ is submodular.
\end{theorem}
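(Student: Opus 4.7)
The plan is to establish submodularity, $\oo(X)+\oo(Y)\geq\oo(X\cup Y)+\oo(X\cap Y)$, via LP duality on cuts over time combined with a pointwise $\min$/$\max$ uncrossing of optimal dual potentials. The same primal--dual argument used in the proof of Theorem~\ref{thm:max-flow-over-time}, applied to the extended network $\net{Z}$ for any $Z\subseteq\terminals$, shows that $\oo(Z)$ equals the minimum, over all $\alpha^Z\colon V\to\R_{\geq 0}$ with $\alpha^Z_s=0$ for $s\in\sources\cap Z$ and $\alpha^Z_t\geq\timehorizon$ for $t\in\sinks\setminus Z$, of the cut-over-time capacity
\[
\mathrm{cap}(\alpha^Z)\;\coloneqq\;\sum_{vw\in A}\max\bigl\{0,\alpha^Z_w-\transit_{vw}-\alpha^Z_v\bigr\}u_{vw}.
\]

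Given optimal potentials $\alpha^X$ and $\alpha^Y$, I would define
\[
\alpha^{X\cup Y}_v\coloneqq\min\bigl(\alpha^X_v,\alpha^Y_v\bigr),\qquad \alpha^{X\cap Y}_v\coloneqq\max\bigl(\alpha^X_v,\alpha^Y_v\bigr)\qquad(v\in V),
\]
and check feasibility by a short case analysis: for a source $s\in\sources\cap(X\cup Y)$, at least one of $\alpha^X_s,\alpha^Y_s$ equals $0$ while both are non-negative, so $\alpha^{X\cup Y}_s=0$; for a sink $t\in\sinks\setminus(X\cup Y)=(\sinks\setminus X)\cap(\sinks\setminus Y)$, both values are $\geq\timehorizon$, so $\alpha^{X\cup Y}_t\geq\timehorizon$. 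The symmetric argument shows that $\alpha^{X\cap Y}$ is feasible for $X\cap Y$.

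To compare capacities, set $\beta^Z_a\coloneqq\alpha^Z_w-\transit_a-\alpha^Z_v$ for each arc $a=vw$. The identity $\min(p,q)+\max(p,q)=p+q$ applied at both endpoints of $a$ yields $\beta^{X\cup Y}_a+\beta^{X\cap Y}_a=\beta^X_a+\beta^Y_a$, and distinguishing whether the signs of $\alpha^X_v-\alpha^Y_v$ and $\alpha^X_w-\alpha^Y_w$ agree or disagree shows that either $\{\beta^X_a,\beta^Y_a\}=\{\beta^{X\cup Y}_a,\beta^{X\cap Y}_a\}$ as multisets, or the former majorizes the latter. Since $\phi(t)\coloneqq\max\{0,t\}$ is convex, Karamata's inequality therefore gives
\[
\phi(\beta^X_a)+\phi(\beta^Y_a)\;\geq\;\phi(\beta^{X\cup Y}_a)+\phi(\beta^{X\cap Y}_a)
\]
for every arc. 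Multiplying by $u_a\geq 0$ and summing over arcs yields $\mathrm{cap}(\alpha^X)+\mathrm{cap}(\alpha^Y)\geq\mathrm{cap}(\alpha^{X\cup Y})+\mathrm{cap}(\alpha^{X\cap Y})$. Since the left-hand side equals $\oo(X)+\oo(Y)$ by optimality of $\alpha^X,\alpha^Y$ and the right-hand side upper-bounds $\oo(X\cup Y)+\oo(X\cap Y)$ by feasibility of $\alpha^{X\cup Y},\alpha^{X\cap Y}$, submodularity follows.

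The only point requiring genuine care is the first step: extending the LP-duality argument from $\net{\sources}$ (as written in the proof of Theorem~\ref{thm:max-flow-over-time}) to a general $\net{Z}$. After normalising $\alpha_\supersource=0$, complementary slackness on the infinite-capacity arcs $(\supersource,s)$ with $s\in\sources\cap Z$ and $(t,\supersource)$ with $t\in\sinks\setminus Z$ produces exactly the boundary conditions on $\alpha^Z$ stated above, the remaining terminals imposing no such constraints because $\net{Z}$ does not connect them to $\supersource$. Once this preparatory step is made explicit, the pointwise uncrossing above completes the proof in a few lines.
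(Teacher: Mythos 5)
Your proof is correct, but it takes a genuinely different route from the paper's. You argue via LP duality over the extended network $\net{Z}$ to identify $\oo(Z)$ with the minimum cut-over-time capacity subject to the appropriate boundary conditions at the active terminals, and then obtain submodularity from a standard pointwise $\min$/$\max$ uncrossing of optimal dual potentials combined with the majorization inequality for the convex function $t\mapsto\max\{0,t\}$. This is essentially Megiddo's cut-function argument, but carried out directly in the continuous-time dual (cuts over time) rather than in a time-expanded network, which is exactly the gap the paper claims this line of argument leaves for irrational data. The paper instead works entirely on the primal side: it augments minimum cost circulations $x\to x+y\to x+y+z$ across the three extended networks, decomposes $y+z$ in $\net{X\cup\{r_1,r_2\}}_x$ into cycles through $\supersource r_1$ versus the rest, and compares $\cost(z')$ with $\cost(z)$ via minimality of $y$, repeating this across four source/sink cases. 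Your route is shorter and more conceptual, and also illuminates \emph{why} $\oo$ is submodular (it is a min-cut function for a continuous cut structure); the paper's route is self-contained within the circulation/residual-network machinery already developed and avoids any reliance on the Anderson--Nash--Philpott cut-over-time min-max. Two small points to tighten: (1) you should state and justify the analogue of Theorem~\ref{thm:cut-over-time-bound} for general $Z$, since the paper only records that upper bound for the $(\sources,\sinks)$ case; and (2) your feasibility check for $\alpha^{X\cup Y}$ appeals to non-negativity of all potentials to conclude $\alpha^{X\cup Y}_s=0$, but the paper's LP-dual argument never establishes $\alpha_v\geq 0$ and one should not rely on it. This is easily repaired: the cut bound only needs $\alpha_s\leq 0$ for active sources, and $\min(\alpha^X_s,\alpha^Y_s)\leq 0$ already follows from $s\in X$ or $s\in Y$ alone, so the uncrossing goes through without any non-negativity assumption.
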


\begin{proof}
For any~$X\subseteq\terminals$ and terminals~$r_1,r_2\in\terminals$ with~$r_1\neq r_2$ we need to argue that
\[
\oo(X\cup\{r_1,r_2\})-\oo(X\cup\{r_1\})\leq\oo(X\cup\{r_2\})-\oo(X).
\]
We distinguish four cases, depending on whether~$r_1$ and~$r_2$ are sources or sinks.

\emph{First case:} $r_1,r_2\in\sources$. Let~$x$ be a minimum cost circulation in
network~$\net{X}$ and~$y$ a minimum cost circulation in~$\net{X\cup\{r_1\}}_x$. Notice
that~$\net{X\cup\{r_1\}}$ compared to~$\net{X}$ has the additional arc~$\supersource r_1$
which might create negative cycles in the residual network~$\net{X\cup\{r_1\}}_x$ that must be
canceled by~$y$. Then,~$x+y$ is a minimum cost circulation in~$\net{X\cup\{r_1\}}$ and thus
\[
\oo(X\cup\{r_1\})=-\cost(x+y)=-\cost(x)-\cost(y)
\]
by linearity of the cost function. Moreover, let~$z$ be a minimum cost circulation
in~$\net{X\cup\{r_1,r_2\}}_{x+y}$. Again,~$\net{X\cup\{r_1,r_2\}}$ compared
to~$\net{X\cup\{r_1\}}$ has the additional arc~$\supersource r_2$ which might create negative
cycles in the residual network~$\net{X\cup\{r_1,r_2\}}_{x+y}$ that must be canceled by~$z$.
Then,~$x+y+z$ is a minimum cost circulation in~$\net{X\cup\{r_1,r_2\}}$, and therefore
\[
\oo(X\cup\{r_1,r_2\})=-\cost(x+y+z)=-\cost(x)-\cost(y)-\cost(z),
\]
again by linearity of the cost function. Furthermore, by construction,~$y+z$ is a feasible
circulation in~$\net{X\cup\{r_1,r_2\}}_{x}$ which can be decomposed into flow along cycles
in~$\net{X\cup\{r_1,r_2\}}_{x}$. The circulation given by the flow along cycles containing
arc~$\supersource r_1$ is denoted by~$y'$, the circulation given by flow along all remaining
cycles is denoted by~$z'$. Since circulation~$y'$ does not send any flow along
arc~$\supersource r_2$ in~$\net{X\cup\{r_1,r_2\}}_{x}$, it can be interpreted as a
feasible circulation in~$\net{X\cup\{r_1\}}_{x}$. Thus, since~$y$ is a minimum cost circulation
in~$\net{X\cup\{r_1\}}_{x}$, we get~$\cost(y')\geq\cost(y)$. This inequality then
implies~$\cost(z')\leq\cost(z)$, since~$y+z=y'+z'$, and thus
$\cost(y)+\cost(z)=\cost(y')+\cost(z')$. Moreover, by choice of~$y'$, circulation~$z'$ does
not send any flow along arc~$\supersource r_1$ in~$\net{X\cup\{r_1,r_2\}}_{x}$ and can thus
be interpreted as a feasible circulation in~$\net{X\cup\{r_2\}}_x$. In particular,~$\cost(z')$
is an upper bound on the cost of a minimum cost circulation in~$\net{X\cup\{r_2\}}_x$, that is,
\[
-\cost(z')\leq\oo(X\cup\{r_2\})-\oo(X).
\]
Putting everything together yields
\[
\oo(X\cup\{r_1,r_2\})-\oo(X\cup\{r_1\})=-\cost(z)\leq-\cost(z')\leq\oo(X\cup\{r_2\})-\oo(X).
\]

\emph{Second case:} $r_1,r_2\in\sinks$. We may use symmetric arguments as in the first case
above.

\emph{Third case:} $t\coloneqq r_1\in\sinks,s\coloneqq r_2\in\sources$. Let~$x$ be a
minimum cost circulation in~$\net{X}$ and~$x+y$ a minimum cost circulation in~$\net{X\cup\{t\}}$.
Notice that network~$\net{X\cup\{t\}}$ compared to~$\net{X}$ is missing arc~$t\supersource$.
Therefore, circulation~$y$ must cancel out the entire flow that~$x$ sends through that arc.
That is,~$y$ is a feasible circulation in the residual network~$\net{X}_x$
with~$y_{\supersource t}=x_{t\supersource}$ of minimum cost. In
particular,~$(x+y)_{t\supersource}=0$ such that~$x+y$ can indeed be interpreted as a
feasible circulation in~$\net{X\cup\{t\}}$ and therefore also in network~$\net{X\cup\{t,s\}}$,
which in addition contains arc~$\supersource s$.

Let~$z$ be a minimum cost circulation in~$\net{X\cup\{t,s\}}_{x+y}$, such that~$x+y+z$ is a
minimum cost circulation in~$\net{X\cup\{t,s\}}$, and therefore
\[
\oo(X\cup\{t,s\})=-\cost(x+y+z)=-\cost(x)-\cost(y)-\cost(z)
\]
by linearity of the cost function. By construction,~$y+z$ is a feasible circulation
in~$\net{X\cup\{s\}}_{x}$ which can be decomposed into flow along cycles
in~$\net{X\cup\{s\}}_{x}$. The circulation given by the flow along cycles containing
arc~$\supersource t$ is denoted by~$y'$, the circulation given by flow along all remaining
cycles is denoted by~$z'$. Since circulation~$y'$ does not send any flow along
arc~$\supersource s$ in~$\net{X\cup\{s\}}_{x}$, it can be interpreted as a feasible
circulation in~$\net{X}_{x}$. Moreover, $y'_{\supersource t}=x_{t\supersource}$, and
since~$y$ is a minimum cost circulation with this property in~$\net{X}_{x}$, we
get~$\cost(y')\geq\cost(y)$. This implies~$\cost(z')\leq\cost(z)$, since~$y+z=y'+z'$ and thus
$\cost(y)+\cost(z)=\cost(y')+\cost(z')$. Moreover,~$\cost(z')$ is an upper bound on the cost
of a minimum cost circulation in~$\net{X\cup\{s\}}_x$, that is,
\[
-\cost(z')\leq\oo(X\cup\{s\})-\oo(X).
\]
Putting everything together yields
\[
\oo(X\cup\{t,s\})-\oo(X\cup\{t\})=-\cost(z)\leq-\cost(z')\leq\oo(X\cup\{s\})-\oo(X).
\]

\emph{Fourth case:} $r_1\in\sources,r_2\in\sinks$. We may again use symmetric arguments as in
the third case above.
\end{proof}

As discussed in the beginning of this section, for the case of integral (or rational) transit
times, the maximum flow over time function~$\oo:2^{\terminals}\to\R$ is a cut function in a
time-expanded network. Even for the case of arbitrary (irrational) transit times, however, one
can argue that~$\oo:2^{\terminals}\to\R$ is the cut function of a more elaborate time-expanded
network~\cite{Manzhulina2023}.

%------------------------------------------------

\section{Earliest arrival flows}
\label{sec:earliest-arrival-flows}

In Section~\ref{sec:max-flow} we showed that there is always a maximum flow over time that
belongs to the structurally simple class of temporally repeated (or standard chain
decomposable) flows. Unfortunately, other flow over time problems, including the
transshipment over time problem, generally only allow for solutions that feature a more
complex temporal structure and therefore require a richer class of flows over time. An
illustrative example are earliest arrival flows, which are a fascinating refinement of
maximum flows over time.

\begin{Def}
An \defword{earliest arrival flow} in~$\net{}$ is a feasible flow over time maximizing the
amount of flow that has arrived at the sinks~$\sinks$ at each point in
time~$\timehorizon'\in[0,\timehorizon]$ simultaneously.
\end{Def}

The existence of earliest arrival flows has been first established by Gale~\cite{Gale1959} for
the discrete time model and by Philpott~\cite{Philpott1990} for the continuous time model. In
general, however, there is no temporally repeated earliest arrival flow; a small
counterexample is given in Figure~\ref{fig:counterexample}.
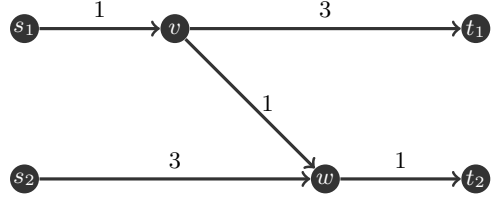
\begin{figure}[t]
\centering
\begin{tikzpicture}
\node [node] (s1) at (-3,1) {\footnotesize$s_1$};
\node [node] (s2) at (-3,-1) {\footnotesize$s_2$};
\node [node] (v) at (-1,1) {\footnotesize$v$};
\node [node] (w) at (1,-1) {\footnotesize$w$};
\node [node] (t1) at (3,1) {\footnotesize$t_1$};
\node [node] (t2) at (3,-1) {\footnotesize$t_2$};
\draw [arc] (s1) to node [above] {\footnotesize$1$} (v);
\draw [arc] (s2) to node [above] {\footnotesize$3$} (w);
\draw [arc] (v) to node [right] {\footnotesize$1$} (w);
\draw [arc] (v) to node [above] {\footnotesize$3$} (t1);
\draw [arc] (w) to node [above] {\footnotesize$1$} (t2);
\end{tikzpicture}
\caption{%
Network with two sources~$\sources=\{s_1,s_2\}$, two sinks~$\sinks=\{t_1,t_2\}$, and unit arc
capacities; numbers at arcs indicate transit times; a maximum flow over time with time
horizon~$\timehorizon=4$ has value~$1$ and must send flow at rate~$1$ into path~$s_1,v,w,t_2$
during time interval~$[0,1)$; a maximum temporally repeated flow with time
horizon~$\timehorizon=6$, however, has value~$4$ and must send flow at rate~$1$ into both
paths~$s_1,v,t_1$ and~$s_2,w,t_2$ during time interval~$[0,2)$; notice that there is no
temporally repeated flow that is maximum for~$\timehorizon=4$ and~$\timehorizon=6$
simultaneously.
}
\label{fig:counterexample}
\end{figure}

Wilkinson~\cite{Wilkinson1971} and Minieka~\cite{Minieka1973} present algorithms for
computing earliest arrival flows, which Fleischer and Tardos~\cite{FleischerTardos1998}
analyze in the continuous time setting. These algorithms are based on shortest cycle
canceling. More precisely, an earliest arrival flow can be obtained by using a nonstandard
chain decomposition~$\Gamma$ of a static minimum cost flow~$x$ in~$\net{\sources}$
where~$\Gamma$ is the collection of chain flows used by the shortest cycle canceling
algorithm; see Algorithm~\ref{alg:EarliestArrival}.
\begin{algorithm}
\caption{Earliest arrival flow algorithm}
\label{alg:EarliestArrival}
\begin{algorithmic}[1]
\Require network~$\net{}$
\Ensure earliest arrival flow
\State $x^0\gets$ zero flow in~$\net{\sources}$
\State $\Gamma^0\gets\emptyset$
\State $i\gets0$
\While{there is a negative cycle in~$\net{\sources}_{x^i}$ with positive residual capacity}
	\State find shortest such cycle~$C^{i+1}$
	\State let~$\gamma^{i+1}$ be the static chain flow in~$\net{\sources}_{x^i}$ saturating~$C^{i+1}$
	\State $x^{i+1}\gets x^i+\gamma^{i+1}$
	\State $\Gamma^{i+1}\gets\Gamma^i\cup\{\gamma^{i+1}\}$
	\State $i\gets i+1$
\EndWhile
\State $\Gamma\gets\Gamma^i$
\State \Return~$[\Gamma]^\timehorizon$
\end{algorithmic}
\end{algorithm}
The resulting chain-decomposable flow~$[\Gamma]^\timehorizon$ is defined analogously to a
standard chain-decomposable (or temporally repeated) flow in
Definition~\ref{def:temporally-repeated}, where a standard chain decomposition~$\Gamma$ was
used.

\begin{Def}
\label{def:chain-decomposable-finite}
Let~$\Gamma$ be a nonstandard chain decomposition of some static circulation
in~$\net{\sources}$ such that, for each~$\gamma\in\Gamma$,~$C_\gamma$ contains
arc~$\supersource r_\gamma$ for some~$r_\gamma\in\sources$. Let~$P_\gamma$ denote the path
starting at~$r_\gamma$ that is obtained from~$C_\gamma$ by deleting~$\supersource$ and its
incident arcs. For each~$\gamma\in\Gamma$, the \defword{chain-decomposable
flow~$[\Gamma]^\timehorizon$ with time horizon~$\timehorizon$} sends flow at rate~$|\gamma|$
into path~$P_\gamma$ during the time interval~$\bigl[0,\timehorizon-\transit(P_\gamma)\bigr)$.
\end{Def}

This definition, however, raises certain
questions which we first illustrate for the example network in Figure~\ref{fig:counterexample}.

Given the network in Figure~\ref{fig:counterexample} with time horizon~$\timehorizon=4$,
Algorithm~\ref{alg:EarliestArrival} in its first iteration sends a chain flow~$\gamma_1$ of value~$|\gamma_1|=1$ along
path~$P_1\coloneqq s_1,v,w,t_2$ (or rather along
cycle~$C_{\gamma_1}\coloneqq\supersource,P_1,\supersource$). In its second iteration, it sends a chain flow~$\gamma_2$ of value~$|\gamma_2|=1$ along path~$P_2=s_2,w,v,t_1$ (i.e., along
cycle~$C_{\gamma_2}\coloneqq\supersource,P_2,\supersource$). Notice that~$P_2$ contains the backward arc~$wv$ with negative transit
time~$\transit_{wv}=-\transit_{vw}=-1$. The resulting chain-decomposable flow~$[\{\gamma_1,\gamma_2\}]^4$ is illustrated in Figures~\ref{fig:earliest-arrival-path} and~\ref{fig:earliest-arrival}.
\begin{figure}[p]
\centering
\begin{tikzpicture}[scale=0.8]
\begin{scope}[yshift=-0*3.5cm]
\node at (5,0.5) {$\timehorizon'=0$};
\node [node] (s1A) at (-3,1) {\footnotesize$s_1$};
\node [node] (vA) at (-1,1) {\footnotesize$v$};
\node [node] (wA) at (1,-1) {\footnotesize$w$};
\node [node] (t2A) at (3,-1) {\footnotesize$t_2$};
\draw [arc] (s1A) to node [above] {\footnotesize$1$} (vA);
\draw [arc] (vA) to node [right] {\footnotesize$1$} (wA);
\draw [arc] (wA) to node [above] {\footnotesize$1$} (t2A);
\fill [gray] (-3.3,1.9) rectangle +(-0.55,-0.55);
\fill [gray] (-3.3,1.3) rectangle +(-0.55,-0.55);
\fill [gray] (-3.3,0.7) rectangle +(-0.55,-0.55);
\end{scope}	
\begin{scope}[yshift=-1*3.5cm]
\node at (5,0.5) {$\timehorizon'=1$};
\node [node] (s1B) at (-3,1) {\footnotesize$s_1$};
\node [node] (vB) at (-1,1) {\footnotesize$v$};
\node [node] (wB) at (1,-1) {\footnotesize$w$};
\node [node] (t2B) at (3,-1) {\footnotesize$t_2$};
\draw [arc] (s1B) to node [above] {\footnotesize$1$} (vB);
\fill [gray] (s1B)+(0.2,-0.1) rectangle +(1.8,-0.3);
\draw [arc] (vB) to node [right] {\footnotesize$1$} (wB);
\draw [arc] (wB) to node [above] {\footnotesize$1$} (t2B);
\fill [gray] (-3.3,1.3) rectangle +(-0.55,-0.55);
\fill [gray] (-3.3,0.7) rectangle +(-0.55,-0.55);
\end{scope}	
\begin{scope}[yshift=-2*3.5cm]
\node at (5,0.5) {$\timehorizon'=2$};
\node [node] (s1C) at (-3,1) {\footnotesize$s_1$};
\node [node] (vC) at (-1,1) {\footnotesize$v$};
\node [node] (wC) at (1,-1) {\footnotesize$w$};
\node [node] (t2C) at (3,-1) {\footnotesize$t_2$};
\draw [arc] (s1C) to node [above] {\footnotesize$1$} (vC);
\fill [gray] (s1C)+(0.2,-0.1) rectangle +(1.8,-0.3);
\draw [arc] (vC) to node [right] {\footnotesize$1$} (wC);
\fill [gray,rotate=45] (vC)+(-0.1,-0.25) rectangle +(-0.25,-2.5);
\draw [arc] (wC) to node [above] {\footnotesize$1$} (t2C);
\fill [gray] (-3.3,0.7) rectangle +(-0.55,-0.55);
\end{scope}	
\begin{scope}[yshift=-3*3.5cm]
\node at (5,0.5) {$\timehorizon'=3$};
\node [node] (s1D) at (-3,1) {\footnotesize$s_1$};
\node [node] (vD) at (-1,1) {\footnotesize$v$};
\node [node] (wD) at (1,-1) {\footnotesize$w$};
\node [node] (t2D) at (3,-1) {\footnotesize$t_2$};
\draw [arc] (s1D) to node [above] {\footnotesize$1$} (vD);
\fill [gray] (s1D)+(0.2,-0.1) rectangle +(1.8,-0.3);
\draw [arc] (vD) to node [right] {\footnotesize$1$} (wD);
\fill [gray,rotate=45] (vD)+(-0.1,-0.25) rectangle +(-0.25,-2.5);
\draw [arc] (wD) to node [above] {\footnotesize$1$} (t2D);
\fill [gray] (wD)+(0.2,-0.1) rectangle +(1.8,-0.3);
\end{scope}	
\begin{scope}[yshift=-4*3.5cm]
\node at (5,0.5) {$\timehorizon'=4$};
\node [node] (s1E) at (-3,1) {\footnotesize$s_1$};
\node [node] (vE) at (-1,1) {\footnotesize$v$};
\node [node] (wE) at (1,-1) {\footnotesize$w$};
\node [node] (t2E) at (3,-1) {\footnotesize$t_2$};
\draw [arc] (s1E) to node [above] {\footnotesize$1$} (vE);
\draw [arc] (vE) to node [right] {\footnotesize$1$} (wE);
\fill [gray,rotate=45] (vE)+(-0.1,-0.25) rectangle +(-0.25,-2.5);
\draw [arc] (wE) to node [above] {\footnotesize$1$} (t2E);
\fill [gray] (wE)+(0.2,-0.1) rectangle +(1.8,-0.3);
\fill [gray] (3.3,-1.9) rectangle +(0.55,0.55);
\end{scope}	
\begin{scope}[yshift=-5*3.5cm]
\node at (5,0.5) {$\timehorizon'=5$};
\node [node] (s1F) at (-3,1) {\footnotesize$s_1$};
\node [node] (vF) at (-1,1) {\footnotesize$v$};
\node [node] (wF) at (1,-1) {\footnotesize$w$};
\node [node] (t2F) at (3,-1) {\footnotesize$t_2$};
\draw [arc] (s1F) to node [above] {\footnotesize$1$} (vF);
\draw [arc] (vF) to node [right] {\footnotesize$1$} (wF);
\draw [arc] (wF) to node [above] {\footnotesize$1$} (t2F);
\fill [gray] (wF)+(0.2,-0.1) rectangle +(1.8,-0.3);
\fill [gray] (3.3,-1.9) rectangle +(0.55,0.55);
\fill [gray] (3.3,-1.3) rectangle +(0.55,0.55);
\end{scope}	
\begin{scope}[yshift=-6*3.5cm]
\node at (5,0.5) {$\timehorizon'=6$};
\node [node] (s1G) at (-3,1) {\footnotesize$s_1$};
\node [node] (vG) at (-1,1) {\footnotesize$v$};
\node [node] (wG) at (1,-1) {\footnotesize$w$};
\node [node] (t2G) at (3,-1) {\footnotesize$t_2$};
\draw [arc] (s1G) to node [above] {\footnotesize$1$} (vG);
\draw [arc] (vG) to node [right] {\footnotesize$1$} (wG);
\draw [arc] (wG) to node [above] {\footnotesize$1$} (t2G);
\fill [gray] (3.3,-1.9) rectangle +(0.55,0.55);
\fill [gray] (3.3,-1.3) rectangle +(0.55,0.55);
\fill [gray] (3.3,-0.7) rectangle +(0.55,0.55);
\end{scope}	

\begin{scope}[yshift=-0*3.5cm,xshift=9.5cm]
\node [node] (s2a) at (-3,-1) {\footnotesize$s_2$};
\node [node] (va) at (-1,1) {\footnotesize$v$};
\node [node] (wa) at (1,-1) {\footnotesize$w$};
\node [node] (t1a) at (3,1) {\footnotesize$t_1$};
\draw [arc] (s2a) to node [above] {\footnotesize$3$} (wa);
\draw [arc,<-] (va) to node [right] {\footnotesize$-1$} (wa);
\draw [arc] (va) to node [above] {\footnotesize$3$} (t1a);
\fill [gray] (-3.3,-1.3) rectangle +(-0.55,0.55);
\end{scope}	
\begin{scope}[yshift=-1*3.5cm,xshift=9.5cm]
\node [node] (s2b) at (-3,-1) {\footnotesize$s_2$};
\node [node] (vb) at (-1,1) {\footnotesize$v$};
\node [node] (wb) at (1,-1) {\footnotesize$w$};
\node [node] (t1b) at (3,1) {\footnotesize$t_1$};
\draw [arc] (s2b) to node [above] {\footnotesize$3$} (wb);
\fill [gray] (s2b)+(0.2,-0.1) rectangle +(1.4,-0.4);
\draw [arc,<-] (vb) to node [right] {\footnotesize$-1$} (wb);
\draw [arc] (vb) to node [above] {\footnotesize$3$} (t1b);
\end{scope}	
\begin{scope}[yshift=-2*3.5cm,xshift=9.5cm]
\node [node] (s2c) at (-3,-1) {\footnotesize$s_2$};
\node [node] (vc) at (-1,1) {\footnotesize$v$};
\node [node] (wc) at (1,-1) {\footnotesize$w$};
\node [node] (t1c) at (3,1) {\footnotesize$t_1$};
\draw [arc] (s2c) to node [above] {\footnotesize$3$} (wc);
\fill [gray] (s2c)+(1.4,-0.1) rectangle +(2.6,-0.4);
\draw [arc,<-] (vc) to node [right] {\footnotesize$-1$} (wc);
\draw [arc] (vc) to node [above] {\footnotesize$3$} (t1c);
\end{scope}	
\begin{scope}[yshift=-3*3.5cm,xshift=9.5cm]
\node [node] (s2d) at (-3,-1) {\footnotesize$s_2$};
\node [node] (vd) at (-1,1) {\footnotesize$v$};
\node [node] (wd) at (1,-1) {\footnotesize$w$};
\node [node] (t1d) at (3,1) {\footnotesize$t_1$};
\draw [arc] (s2d) to node [above] {\footnotesize$3$} (wd);
\fill [gray] (s2d)+(2.6,-0.1) rectangle +(3.8,-0.4);
\draw [arc,<-] (vd) to node [right] {\footnotesize$-1$} (wd);
\fill [gray,rotate=45] (vd)+(-0.1,-0.25) rectangle +(-0.25,-2.5);
\draw [arc] (vd) to node [above] {\footnotesize$3$} (t1d);
\fill [gray] (vd)+(0.4,-0.1) rectangle +(1.6,-0.4);
\end{scope}	
\begin{scope}[yshift=-4*3.5cm,xshift=9.5cm]
\node [node] (s2e) at (-3,-1) {\footnotesize$s_2$};
\node [node] (ve) at (-1,1) {\footnotesize$v$};
\node [node] (we) at (1,-1) {\footnotesize$w$};
\node [node] (t1e) at (3,1) {\footnotesize$t_1$};
\draw [arc] (s2e) to node [above] {\footnotesize$3$} (we);
\draw [arc,<-] (ve) to node [right] {\footnotesize$-1$} (we);
\draw [arc] (ve) to node [above] {\footnotesize$3$} (t1e);
\fill [gray] (ve)+(1.5,-0.1) rectangle +(2.7,-0.4);
\end{scope}	
\begin{scope}[yshift=-5*3.5cm,xshift=9.5cm]
\node [node] (s2f) at (-3,-1) {\footnotesize$s_2$};
\node [node] (vf) at (-1,1) {\footnotesize$v$};
\node [node] (wf) at (1,-1) {\footnotesize$w$};
\node [node] (t1f) at (3,1) {\footnotesize$t_1$};
\draw [arc] (s2f) to node [above] {\footnotesize$3$} (wf);
\draw [arc,<-] (vf) to node [right] {\footnotesize$-1$} (wf);
\draw [arc] (vf) to node [above] {\footnotesize$3$} (t1f);
\fill [gray] (vf)+(2.6,-0.1) rectangle +(3.8,-0.4);
\end{scope}	
\begin{scope}[yshift=-6*3.5cm,xshift=9.5cm]
\node [node] (s2g) at (-3,-1) {\footnotesize$s_2$};
\node [node] (vg) at (-1,1) {\footnotesize$v$};
\node [node] (wg) at (1,-1) {\footnotesize$w$};
\node [node] (t1g) at (3,1) {\footnotesize$t_1$};
\draw [arc] (s2g) to node [above] {\footnotesize$3$} (wg);
\draw [arc,<-] (vg) to node [right] {\footnotesize$-1$} (wg);
\draw [arc] (vg) to node [above] {\footnotesize$3$} (t1g);
\fill [gray] (3.3,1.3) rectangle +(0.55,-0.55);
\end{scope}	
\end{tikzpicture}
\caption{Sending flow at rate~$1$ into paths~$P_1=s_1,v,w,t_2$ and~$P_2=s_2,w,v,t_1$ during time intervals~$[0,3)$ and~$[0,1)$, respectively}
\label{fig:earliest-arrival-path}
\end{figure}
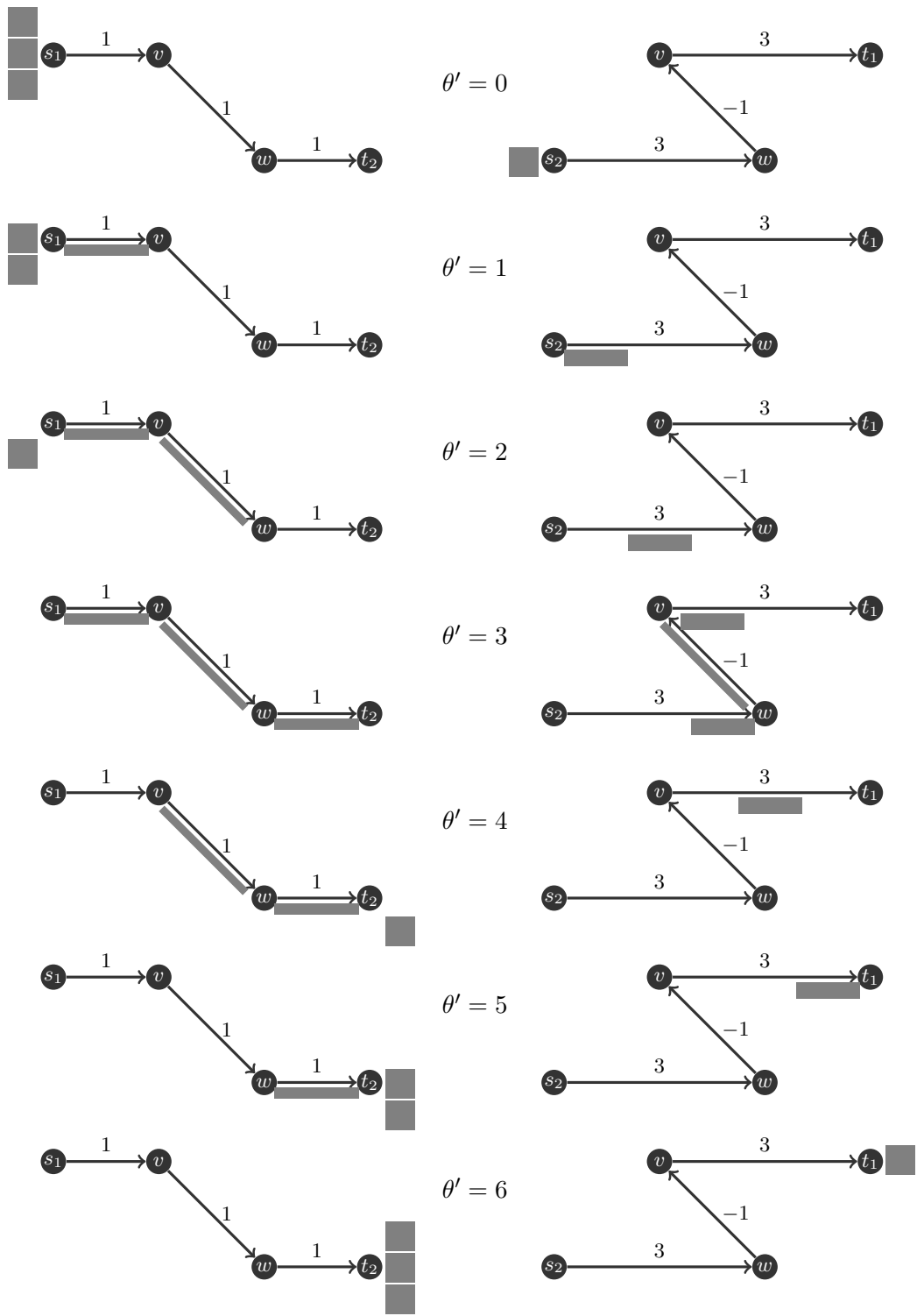
In Figure~\ref{fig:earliest-arrival-path} we illustrate the temporal development of flow along
the two paths~$P_1$ and~$P_2$ separately. The interesting snapshot is at
time~$\timehorizon'=3$ when three copies of the single flow unit traveling along path~$P_2$
can be seen. This is due to the negative transit time of arc~$wv$ which enables the flow unit
to travel simultaneously along arcs~$s_2w$ and~$vt_1$. But then again, the third copy of the
flow unit depicted on arc~$wv$ should be rather interpreted as a negative flow unit traveling
along the opposite arc~$vw$ such that, in total, there is still only one unit of flow.

Obviously, due to the discussed time travel issue, the flow over time along path~$P_2$
depicted in Figure~\ref{fig:earliest-arrival-path} is not a feasible flow over time on its
own. Together with the flow traveling along path~$P_1$, however, we obtain a feasible flow
over time since the negative flow unit on arc~$vw$ then cancels out a positive flow unit that
simultaneously travels along that arc as part of its path~$P_1$; see also
Figure~\ref{fig:earliest-arrival}.
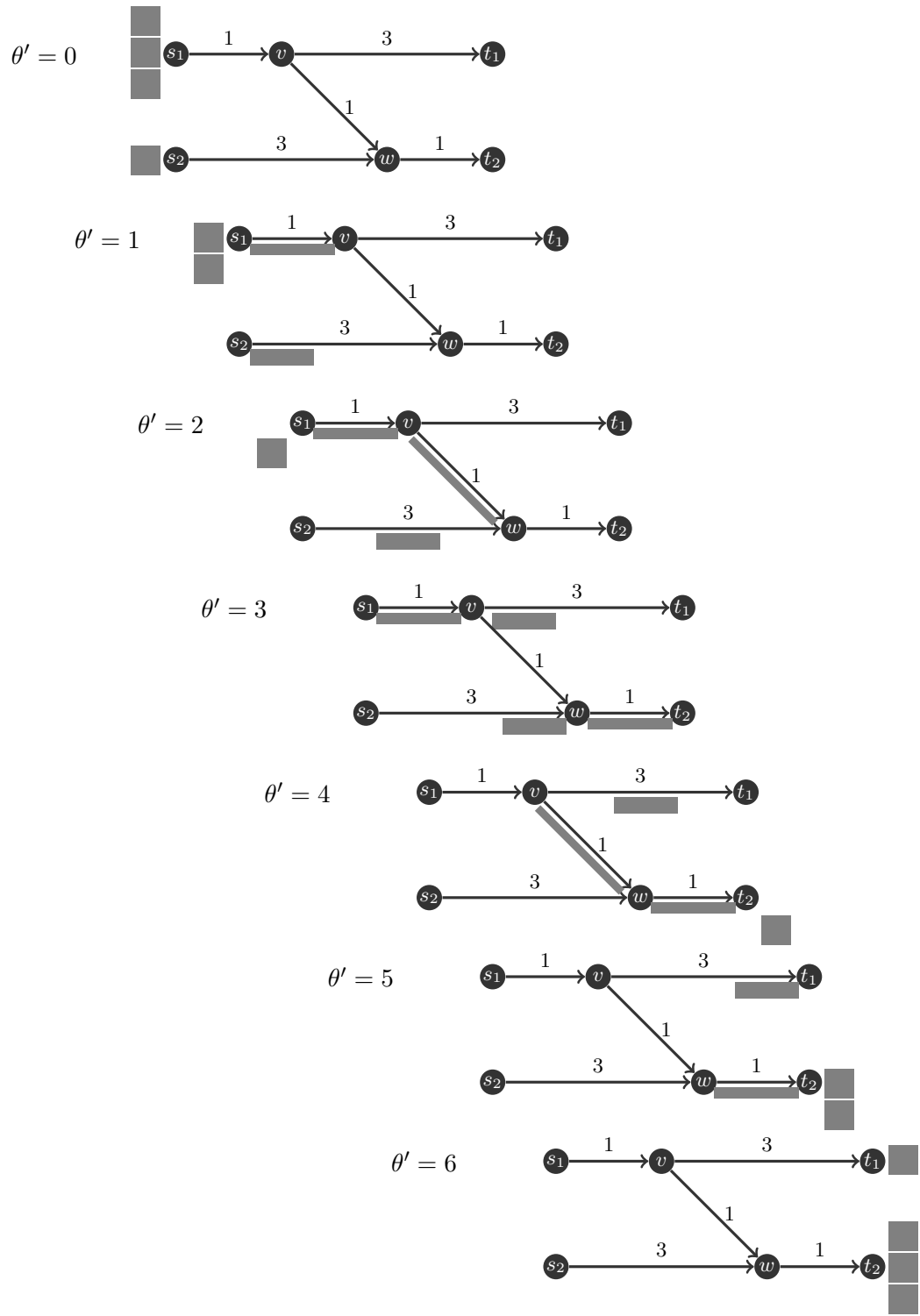
\begin{figure}[p]
\centering
\begin{tikzpicture}[scale=0.8]
\begin{scope}[yshift=-0*3.5cm,xshift=0*1.2cm]
\node at (-5.5,1) {$\timehorizon'=0$};
\node [node] (s1A) at (-3,1) {\footnotesize$s_1$};
\node [node] (vA) at (-1,1) {\footnotesize$v$};
\node [node] (wA) at (1,-1) {\footnotesize$w$};
\node [node] (t2A) at (3,-1) {\footnotesize$t_2$};
\node [node] (s2A) at (-3,-1) {\footnotesize$s_2$};
\node [node] (t1A) at (3,1) {\footnotesize$t_1$};
\draw [arc] (s2A) to node [above] {\footnotesize$3$} (wA);
\draw [arc] (vA) to node [above] {\footnotesize$3$} (t1A);
\draw [arc] (s1A) to node [above] {\footnotesize$1$} (vA);
\draw [arc] (vA) to node [right] {\footnotesize$1$} (wA);
\draw [arc] (wA) to node [above] {\footnotesize$1$} (t2A);
\fill [gray] (-3.3,1.9) rectangle +(-0.55,-0.55);
\fill [gray] (-3.3,1.3) rectangle +(-0.55,-0.55);
\fill [gray] (-3.3,0.7) rectangle +(-0.55,-0.55);
\fill [gray] (-3.3,-1.3) rectangle +(-0.55,0.55);
\end{scope}	
\begin{scope}[yshift=-1*3.5cm,xshift=1*1.2cm]
\node at (-5.5,1) {$\timehorizon'=1$};
\node [node] (s1B) at (-3,1) {\footnotesize$s_1$};
\node [node] (vB) at (-1,1) {\footnotesize$v$};
\node [node] (wB) at (1,-1) {\footnotesize$w$};
\node [node] (t2B) at (3,-1) {\footnotesize$t_2$};
\node [node] (s2B) at (-3,-1) {\footnotesize$s_2$};
\node [node] (t1B) at (3,1) {\footnotesize$t_1$};
\draw [arc] (s2B) to node [above] {\footnotesize$3$} (wB);
\fill [gray] (s2B)+(0.2,-0.1) rectangle +(1.4,-0.4);
\draw [arc] (vB) to node [above] {\footnotesize$3$} (t1B);
\draw [arc] (s1B) to node [above] {\footnotesize$1$} (vB);
\fill [gray] (s1B)+(0.2,-0.1) rectangle +(1.8,-0.3);
\draw [arc] (vB) to node [right] {\footnotesize$1$} (wB);
\draw [arc] (wB) to node [above] {\footnotesize$1$} (t2B);
\fill [gray] (-3.3,1.3) rectangle +(-0.55,-0.55);
\fill [gray] (-3.3,0.7) rectangle +(-0.55,-0.55);
\end{scope}	
\begin{scope}[yshift=-2*3.5cm,xshift=2*1.2cm]
\node at (-5.5,1) {$\timehorizon'=2$};
\node [node] (s1C) at (-3,1) {\footnotesize$s_1$};
\node [node] (vC) at (-1,1) {\footnotesize$v$};
\node [node] (wC) at (1,-1) {\footnotesize$w$};
\node [node] (t2C) at (3,-1) {\footnotesize$t_2$};
\node [node] (s2C) at (-3,-1) {\footnotesize$s_2$};
\node [node] (t1C) at (3,1) {\footnotesize$t_1$};
\draw [arc] (s2C) to node [above] {\footnotesize$3$} (wC);
\fill [gray] (s2C)+(1.4,-0.1) rectangle +(2.6,-0.4);
\draw [arc] (vC) to node [above] {\footnotesize$3$} (t1C);
\draw [arc] (s1C) to node [above] {\footnotesize$1$} (vC);
\fill [gray] (s1C)+(0.2,-0.1) rectangle +(1.8,-0.3);
\draw [arc] (vC) to node [right] {\footnotesize$1$} (wC);
\fill [gray,rotate=45] (vC)+(-0.1,-0.25) rectangle +(-0.25,-2.5);
\draw [arc] (wC) to node [above] {\footnotesize$1$} (t2C);
\fill [gray] (-3.3,0.7) rectangle +(-0.55,-0.55);
\end{scope}	
\begin{scope}[yshift=-3*3.5cm,xshift=3*1.2cm]
\node at (-5.5,1) {$\timehorizon'=3$};
\node [node] (s1D) at (-3,1) {\footnotesize$s_1$};
\node [node] (vD) at (-1,1) {\footnotesize$v$};
\node [node] (wD) at (1,-1) {\footnotesize$w$};
\node [node] (t2D) at (3,-1) {\footnotesize$t_2$};
\node [node] (s2D) at (-3,-1) {\footnotesize$s_2$};
\node [node] (t1D) at (3,1) {\footnotesize$t_1$};
\draw [arc] (s2D) to node [above] {\footnotesize$3$} (wD);
\fill [gray] (s2D)+(2.6,-0.1) rectangle +(3.8,-0.4);
\draw [arc] (vD) to node [above] {\footnotesize$3$} (t1D);
\fill [gray] (vD)+(0.4,-0.1) rectangle +(1.6,-0.4);
\draw [arc] (s1D) to node [above] {\footnotesize$1$} (vD);
\fill [gray] (s1D)+(0.2,-0.1) rectangle +(1.8,-0.3);
\draw [arc] (vD) to node [right] {\footnotesize$1$} (wD);
\draw [arc] (wD) to node [above] {\footnotesize$1$} (t2D);
\fill [gray] (wD)+(0.2,-0.1) rectangle +(1.8,-0.3);
\end{scope}	
\begin{scope}[yshift=-4*3.5cm,xshift=4*1.2cm]
\node at (-5.5,1) {$\timehorizon'=4$};
\node [node] (s1E) at (-3,1) {\footnotesize$s_1$};
\node [node] (vE) at (-1,1) {\footnotesize$v$};
\node [node] (wE) at (1,-1) {\footnotesize$w$};
\node [node] (t2E) at (3,-1) {\footnotesize$t_2$};
\node [node] (s2E) at (-3,-1) {\footnotesize$s_2$};
\node [node] (t1E) at (3,1) {\footnotesize$t_1$};
\draw [arc] (s2E) to node [above] {\footnotesize$3$} (wE);
\draw [arc] (vE) to node [above] {\footnotesize$3$} (t1E);
\fill [gray] (vE)+(1.5,-0.1) rectangle +(2.7,-0.4);
\draw [arc] (s1E) to node [above] {\footnotesize$1$} (vE);
\draw [arc] (vE) to node [right] {\footnotesize$1$} (wE);
\fill [gray,rotate=45] (vE)+(-0.1,-0.25) rectangle +(-0.25,-2.5);
\draw [arc] (wE) to node [above] {\footnotesize$1$} (t2E);
\fill [gray] (wE)+(0.2,-0.1) rectangle +(1.8,-0.3);
\fill [gray] (3.3,-1.9) rectangle +(0.55,0.55);
\end{scope}	
\begin{scope}[yshift=-5*3.5cm,xshift=5*1.2cm]
\node at (-5.5,1) {$\timehorizon'=5$};
\node [node] (s1F) at (-3,1) {\footnotesize$s_1$};
\node [node] (vF) at (-1,1) {\footnotesize$v$};
\node [node] (wF) at (1,-1) {\footnotesize$w$};
\node [node] (t2F) at (3,-1) {\footnotesize$t_2$};
\node [node] (s2F) at (-3,-1) {\footnotesize$s_2$};
\node [node] (t1F) at (3,1) {\footnotesize$t_1$};
\draw [arc] (s2F) to node [above] {\footnotesize$3$} (wF);
\draw [arc] (vF) to node [above] {\footnotesize$3$} (t1F);
\fill [gray] (vF)+(2.6,-0.1) rectangle +(3.8,-0.4);
\draw [arc] (s1F) to node [above] {\footnotesize$1$} (vF);
\draw [arc] (vF) to node [right] {\footnotesize$1$} (wF);
\draw [arc] (wF) to node [above] {\footnotesize$1$} (t2F);
\fill [gray] (wF)+(0.2,-0.1) rectangle +(1.8,-0.3);
\fill [gray] (3.3,-1.9) rectangle +(0.55,0.55);
\fill [gray] (3.3,-1.3) rectangle +(0.55,0.55);
\end{scope}	
\begin{scope}[yshift=-6*3.5cm,xshift=6*1.2cm]
\node at (-5.5,1) {$\timehorizon'=6$};
\node [node] (s1G) at (-3,1) {\footnotesize$s_1$};
\node [node] (vG) at (-1,1) {\footnotesize$v$};
\node [node] (wG) at (1,-1) {\footnotesize$w$};
\node [node] (t2G) at (3,-1) {\footnotesize$t_2$};
\node [node] (s2G) at (-3,-1) {\footnotesize$s_2$};
\node [node] (t1G) at (3,1) {\footnotesize$t_1$};
\draw [arc] (s2G) to node [above] {\footnotesize$3$} (wG);
\draw [arc] (vG) to node [above] {\footnotesize$3$} (t1G);
\draw [arc] (s1G) to node [above] {\footnotesize$1$} (vG);
\draw [arc] (vG) to node [right] {\footnotesize$1$} (wG);
\draw [arc] (wG) to node [above] {\footnotesize$1$} (t2G);
\fill [gray] (3.3,-1.9) rectangle +(0.55,0.55);
\fill [gray] (3.3,-1.3) rectangle +(0.55,0.55);
\fill [gray] (3.3,-0.7) rectangle +(0.55,0.55);
\fill [gray] (3.3,1.3) rectangle +(0.55,-0.55);
\end{scope}	
\end{tikzpicture}
\caption{Chain-decomposable earliest arrival flow with time horizon~$\timehorizon=6$, obtained by adding the flow along paths~$P_1=s_1,v,w,t_2$ and~$P_2=s_2,w,v,t_1$ depicted in Figure~\ref{fig:earliest-arrival-path}}
\label{fig:earliest-arrival}
\end{figure}

In general, a chain-decomposable flow can only be feasible if flow traveling along a backward
arc is always met by flow traveling along the corresponding forward arc at the same (or
higher) rate. In Figure~\ref{fig:infeasible} we give an example of a chain decomposition that does not meet this condition and therefore yields an infeasible chain-decomposable flow.
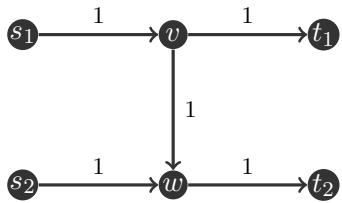
\begin{figure}[t]
\centering
\begin{tikzpicture}
\node [node] (s1) at (-2,1)	{$s_1$};
\node [node] (s2) at (-2,-1) {$s_2$};
\node [node] (v) at (0,1)	{$v$};
\node [node] (w) at (0,-1) {$w$};
\node [node] (t1) at (2,1)	{$t_1$};
\node [node] (t2) at (2,-1) {$t_2$};
\draw [arc] (s1) to node [above] {\footnotesize$1$} (v);
\draw [arc] (s2) to node [above] {\footnotesize$1$} (w);
\draw [arc] (v) to node [right] {\footnotesize$1$} (w);
\draw [arc] (v) to node [above] {\footnotesize$1$} (t1);
\draw [arc] (w) to node [above] {\footnotesize$1$} (t2);
\end{tikzpicture}
\caption{%
Unit capacity network with sources~$\sources=\{s_1,s_2\}$, sinks~$\sinks=\{t_1,t_2\}$, and
unit transit times; consider the static flow~$x$ with~$x_{vw}=0$ and~$x_a=1$ for all other
arcs~$a$; a nonstandard chain decomposition~$\Gamma$ may send one unit of flow along each of
the paths~$P_1=s_1,v,w,t_2$ and~$P_2=s_2,w,v,t_1$; for~$\timehorizon\geq3$, the corresponding
chain-decomposable flow~$[\Gamma]^\timehorizon$, however, is not feasible since~$P_2$ uses
backward arc~$wv$ before~$P_1$ can send flow along the corresponding forward arc~$vw$.
}
\label{fig:infeasible}
\end{figure}
The condition is, however, always fulfilled for the chain-decomposable flow found by
Algorithm~\ref{alg:EarliestArrival} as the shortest path distance from~$\supersource$ to any
node~$v$ of~$\net{\sources}_{x^i}$ is non-decreasing over the iterations~$i$. This follows
from the well-known result that augmenting flow along a shortest $s$-$t$-path in a residual
graph can neither decrease the shortest path distances from~$s$ nor the shortest path
distances to~$t$; see, e.g., Korte and Vygen~\cite[Chapter~9.4]{KorteVygen2018}.

\begin{theorem}[Wilkinson~\cite{Wilkinson1971}, Minieka~\cite{Minieka1973}]
The chain-decomposable flow~$[\Gamma]^\timehorizon$ returned by
Algorithm~\ref{alg:EarliestArrival} is an earliest arrival flow.
\end{theorem}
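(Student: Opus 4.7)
My plan proceeds in two parts: (i) verifying feasibility of $[\Gamma]^\timehorizon$, and (ii) establishing the earliest-arrival property.

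\textbf{Feasibility} is largely handled by the preceding discussion. The capacity bound and flow conservation at intermediate nodes are inherited from $[\Gamma]^\timehorizon$ being a superposition of path flows whose static sum equals the feasible circulation $x = \sum_i \gamma^i$. Non-negativity of the flow rate on each original arc $vw$ is the subtle point: whenever some $\gamma^j$ traverses the reverse arc $wv$ (with $\transit_{wv}<0$), this must be temporally covered by an earlier $\gamma^i$ traversing $vw$ forward. As indicated in the excerpt, this follows from the monotonicity of the shortest $\supersource$-to-$v$ distances in $\net{\sources}_{x^i}$ across iterations, which aligns the relevant time intervals so that the forward contributions precede and persist through the backward contributions.

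\textbf{Earliest arrival} is the main content. The shortest-cycle property of the algorithm yields $\transit(P^1) \leq \transit(P^2) \leq \cdots$. Given a target time $\timehorizon' \in [0,\timehorizon]$, let $k'$ be the largest index with $\transit(P^{k'}) < \timehorizon'$ and set $x^{k'} := \gamma^1 + \cdots + \gamma^{k'}$. A direct calculation identifies the total flow arrived at the sinks by time $\timehorizon'$ under $[\Gamma]^\timehorizon$ as
\[
F(\timehorizon') = \sum_{i \leq k'} |\gamma^i|\,(\timehorizon' - \transit(P^i)) = -\cost(x^{k'}),
\]
where the cost is now evaluated in the extended network $\net{\sources^{\timehorizon'}}$ with the shorter horizon (whose $t\supersource$ arcs carry cost $-\timehorizon'$ instead of $-\timehorizon$). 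By Corollary~\ref{cor:max-flow-over-time}, the maximum flow over time with horizon $\timehorizon'$ equals $-\cost(y^*)$ for a minimum cost circulation $y^*$ in $\net{\sources^{\timehorizon'}}$. Thus it suffices to show that $x^{k'}$ is itself a minimum cost circulation in $\net{\sources^{\timehorizon'}}$, i.e., that $\net{\sources^{\timehorizon'}}_{x^{k'}}$ contains no negative cycles.

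\textbf{Main obstacle.} The residuals $\net{\sources^{\timehorizon}}_{x^{k'}}$ and $\net{\sources^{\timehorizon'}}_{x^{k'}}$ differ only in the cost of super-source-incident arcs, with forward $t\supersource$ shifting cost from $-\timehorizon$ to $-\timehorizon'$. By the algorithm, no cycle in $\net{\sources^{\timehorizon}}_{x^{k'}}$ has cost below $\transit(C^{k'+1}) \geq \timehorizon' - \timehorizon$. For residual cycles that enter $\supersource$ via forward $t\supersource$ and exit via forward $\supersource s$, the cost shifts by $\timehorizon - \timehorizon' \geq 0$, immediately yielding non-negativity in $\net{\sources^{\timehorizon'}}$. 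The technically demanding step is handling residual cycles that use reverse super-source arcs (arising from previous augmentations), where the cost can shift in the opposite direction. My plan is to inductively show that every $\gamma^i$ selected by the algorithm is a source-sink cycle through $\supersource$ (rather than a "rearrangement" cycle involving reverse super-source arcs), which then allows decomposing any potentially negative residual cycle in $\net{\sources^{\timehorizon'}}_{x^{k'}}$ into a source-sink component plus an auxiliary residual structure whose total cost remains non-negative.
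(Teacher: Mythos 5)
Your outline exactly mirrors the paper's: defer feasibility to the preceding discussion, then show that the amount arriving at the sinks by any time $\timehorizon'\leq\timehorizon$ equals $-\cost(x^{k'})$ in the extended network with the shortened horizon, and reduce optimality to the claim that $x^{k'}$ is a minimum cost circulation for horizon $\timehorizon'$. This is precisely the paper's argument, which asserts that last claim in one line: ``let $i$ be the first iteration such that $\net{\sources}_{x^i}$ does not contain a cycle with positive residual capacity of length at most $\timehorizon'-\timehorizon$. Then $x^i$ is a minimum cost circulation in $\net{\sources}$ for time horizon $\timehorizon'$.''

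Where your proposal falls short is that you correctly flag the crux — residual cycles using reverse super-source arcs, whose cost shifts the wrong way when the horizon decreases — but then only state a plan rather than executing it. The proposed decomposition of an arbitrary residual cycle into a source–sink component plus an auxiliary structure is not carried out, and it is not obvious that it closes the case of cycles going $\supersource\to t\to\cdots\to t'\to\supersource$ (or $\cdots\to s\to\supersource\to s'\to\cdots$), whose cost is horizon-invariant or even decreases under the shift, so your ``immediately non-negative'' observation does not apply to them. The cleaner way to finish, consistent with the paper's implicit reasoning, is to observe that every canceled cycle $C^{j}$ enters $\supersource$ via a forward $t\supersource$ arc and leaves via a forward $\supersource s$ arc (this follows by induction using the monotonicity of residual $\supersource$-distances, already invoked in the feasibility discussion), so that shortest-cycle canceling on $\net{\sources}$ behaves identically whether the $t\supersource$ arcs carry cost $-\timehorizon$ or $-\timehorizon'$: the two cost functions differ by the same constant $\timehorizon-\timehorizon'$ on every relevant cycle, hence select the same shortest cycles in the same order. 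After iteration $k'$, the shortest residual cycle has length greater than $\timehorizon'-\timehorizon$, i.e.\ positive in the $\timehorizon'$-network, and since the non-super-source part of $x^{k'}$ is a min-cost flow (it is produced by successive shortest paths), no negative cycle avoiding $\supersource$ exists either; thus $x^{k'}$ is indeed a minimum cost circulation for horizon $\timehorizon'$. Without making some such argument explicit, the proof is incomplete.
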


\begin{proof} 
We have already argued that~$[\Gamma]^\timehorizon$ is a feasible flow over
time. In order to show that the amount of flow arriving at the sinks before some point in
time~$\timehorizon'\leq\timehorizon$ is optimal, let~$i$ be the first iteration such
that~$\net{\sources}_{x^i}$ does not contain a cycle with positive residual capacity of length
at most~$\timehorizon'-\timehorizon$. Then~$x^i$ is a minimum cost circulation
in~$\net{\sources}$ for time horizon~$\timehorizon'$ with chain decomposition~$\Gamma^i$.
Moreover, the amount of flow arriving at the sinks in~$[\Gamma]^\timehorizon$ before
time~$\timehorizon'$ is equal to~$|[\Gamma^i]^{\timehorizon'}|$. As for the case of standard
chain decompositions in~\eqref{eq:max-flow-value}, we again obtain
\[
\bigl|[\Gamma^i]^{\timehorizon'}\bigr|
=\sum_{\gamma\in\Gamma^i}|\gamma|\bigl(\timehorizon'-\transit(P_\gamma)\bigr)
%=-\sum_{\gamma\in\Gamma^i}|\gamma|\transit(C_\gamma) 
=-\cost(x^i).
\]
Theorem~\ref{thm:max-flow-over-time} thus implies that the amount of flow arriving at the
sinks in~$[\Gamma]^\timehorizon$ before time~$\timehorizon'$ is maximal.
\end{proof}

Notice that the worst case running time of Algorithm~\ref{alg:EarliestArrival} is
not polynomial in the input size since shortest cycle canceling performed in its while loop
might require exponentially many iterations; see Zadeh~\cite{Zadeh1973}. Only recently, Disser
and Skutella~\cite{DisserSkutella2019} observed that an earliest arrival flow is indeed
NP-hard to find in some sense.

The flow over time found by Algorithm~\ref{alg:EarliestArrival} is not only an earliest
arrival flow but also a \defword{latest departure flow} with time horizon~$\timehorizon$. That
is, it maximizes the amount of flow leaving the sources~$\sources$ after time~$\timehorizon'$
for any~$\timehorizon'$ with~$0\leq\timehorizon'\leq\timehorizon$.

In the context of earliest arrival flows it is also interesting to consider flows over time
with infinite time horizon. Notice that, independently of how large the time
horizon~$\timehorizon$ is chosen, Algorithm~\ref{alg:EarliestArrival} terminates after finitely
many iterations as soon as there is no more source sink path with positive residual capacity
in~$\net{\sources}_{x^i}$. The final chain decomposition~$\Gamma$ then yields a flow over time
~$[\Gamma]$ with infinite time horizon maximizing the amount of flow that has arrived at the
sinks for any time~$\timehorizon\geq0$.

\begin{Def}
\label{def:chain-decomposable-infinite}
Let~$\Gamma$ be a nonstandard chain decomposition of some static circulation
in~$\net{\sources}$ such that, for each~$\gamma\in\Gamma$,~$C_\gamma$ contains
arc~$\supersource r_\gamma$ for some terminal~$r_\gamma\in\terminals$. Let~$P_\gamma$ denote
the path starting at~$r_\gamma$ that is obtained from~$C_\gamma$ by deleting~$\supersource$
and its incident arcs. For each~$\gamma\in\Gamma$, the \defword{chain-decomposable
flow~$[\Gamma]$ with infinite time horizon} sends flow at rate~$|\gamma|$ into path~$P_\gamma$
during the time interval~$[\transit_{\supersource r_\gamma},\infty)$.
\end{Def}

\subsubsection*{Further results on earliest arrival flows from the literature}

The following short literature review is an updated version of the corresponding review given
in~\cite{Skutella2009}. In view of the exponential worst case running time of
Algorithm~\ref{alg:EarliestArrival}, Hoppe and Tardos~\cite{HoppeTardos1994} present a fully
polynomial-time approximation scheme for the earliest arrival flow problem that is based on a
clever scaling trick. For a fixed $\varepsilon>0$, the computed flow over time has the
following property. For all~$\timehorizon\geq0$ simultaneously, the flow value at
time~$\timehorizon$ is at least $1-\varepsilon$ times the maximum possible value for~$\timehorizon$.

In a network with given supplies and demands at the terminals that may not be exceeded, flows
over time having the earliest arrival property do not necessarily exist; see
Fleischer~\cite{Fleischer2001}. For the case of several sources with given supplies and a
single sink, however, there is always such an earliest arrival flow. In this setting, Hajek
and Ogier~\cite{HajekOgier1984} give the first polynomial time algorithm for computing
earliest arrival flows in networks with zero transit times. Fleischer~\cite{Fleischer2001}
gives an algorithm with improved running time. For the case of multiple sinks with given
demands, Schmidt and Skutella~\cite{SchmidtSkutella2014} characterize which networks with
zero transit times always allow for earliest arrival flows. Groß, Kappmeier, Schmidt, and
Schmidt~\cite{GrossKappmeierSchmidtSchmidt2012} present tight approximation results for
earliest arrival flows in networks with arbitrary transit times and multiple sinks with given
demands. For such networks, Schlöter~\cite{Schloeter2019} proves that it is NP-hard to decide
whether an earliest arrival flow exists and gives an exact polynomial-space algorithm for a
special class of instances.

Baumann and Skutella~\cite{BaumannSkutella2009} give an algorithm that computes earliest
arrival flows for the case of several sources and arbitrary transit times and whose running
time is polynomially bounded in the input plus output size. Schlöter and
Skutella~\cite{SchloeterSkutella2017} present a refined variant of that algorithm that only
needs polynomial space. Fleischer and Skutella~\cite{FleischerSkutella2007} use so-called
condensed time-expanded networks to approximate such earliest arrival flows. They give a
fully polynomial-time approximation scheme that approximates the time delay as follows: For
every time~$\timehorizon\geq0$ simultaneously, the amount of flow that should have reached
the sink in an earliest arrival flow by time~$\timehorizon$, reaches the sink at latest at
time~$(1+\varepsilon)\timehorizon$. Tjandra~\cite{Tjandra2003} shows how to compute earliest
arrival transshipments in networks with time dependent supplies and capacities in time
polynomial in the time horizon and the total supply at sources.

Earliest arrival flows are motivated by applications related to evacuation. In the context of
emergency evacuation from buildings, Berlin~\cite{Berlin1979}. Jarvis and
Ratliff~\cite{JarvisRatliff1982} show that three different objectives of this optimization
problem can be achieved simultaneously: (i) Minimizing the total time needed to send the
supplies of all sources to the sink, (ii) fulfilling the earliest arrival property, and (iii)
minimizing the average time for all flow needed to reach the sink. Hamacher and
Tufecki~\cite{HamacherTufecki1987} study an evacuation problem and propose solutions which
further prevent unnecessary movement within a building. Dressler, Flötteröd, Lämmel, Nagel,
and Skutella~\cite{DresslerFloetteroedLaemmel+2011} combine network flow over time with
simulation techniques to find good evacuation strategies. Dressler et al.~\cite{ZET-ICEM2010}
use network flow over time techniques for finding good assignments of evacuees to emergency
exits.

%------------------------------------------------

\section{Lexicographically maximum flows over time}
\label{sec:lex-max-flows}

The efficient computation of lexicographically maximum flows over time is the main building
block of all known efficient transshipment over time algorithms.

\begin{Def}
Given an ordering of the~$k$ terminals~$r_k,r_{k-1},\dots,r_2,r_1$, a
\defword{lexicographically maximum flow over time} with time horizon~$\timehorizon$
lexicographically maximizes the flow amounts leaving the terminals in the given order.
\end{Def}

As the set of terminals~$\terminals$ contains both sources and sinks, notice that maximizing
the amount of flow leaving a terminal is equivalent to minimizing the amount of flow entering
the terminal. In other words, a lexicographically maximum flow over time lexicographically
minimizes the amount of flow entering the terminals in the given order.

Lexicographically maximum static flows have already been studied in the early 1970s by
Minieka~\cite{Minieka1973} and Megiddo~\cite{Megiddo1974}. They even proved the existence of a
feasible flow that simultaneously maximizes the total amount of flow leaving the~$i$ highest
priority terminals for each~$i=0,1,\dots,k-1$. Notice that such a flow is clearly
lexicographically maximum. Hoppe and Tardos~\cite{HoppeTardos2000} present an efficient
algorithm for computing a lexicographically maximum flow over time that also satisfies this
stronger property. 

We use the following notation: For a given ordering of terminals
$r_k,r_{k-1},\dots,r_2,r_1$ and~$i\in\{0,1,\dots,k-1\}$
let~$X_i\coloneqq\{r_k,r_{k-1},\dots,r_{i+1}\}$; moreover, let~$X_k\coloneqq\emptyset$.

\begin{theorem}[Hoppe, Tardos~\cite{HoppeTardos2000}]
\label{thm:lex-max-flow-over-time}
There is an efficient algorithm that, given an ordering of the~$k$
terminals~$r_k,r_{k-1},\dots,r_2,r_1$, finds a feasible flow over time with time
horizon~$\timehorizon$ such that the amount of flow sent from the sources in~$\sources\cap
X_i$ to the sinks in~$\sinks\setminus X_i$ is equal to~$\oo(X_i)$ and thus maximum, for
each~$i=0,1,\dots,k$.
\end{theorem}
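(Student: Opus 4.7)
The plan is to extend the iterative min-cost circulation approach of Algorithms~\ref{alg:FordFulkerson} and~\ref{alg:EarliestArrival} to construct a single nonstandard chain decomposition~$\Gamma$ of a minimum cost circulation~$x^0$ in~$\net{\terminals}$, with the property that for every $i\in\{0,\ldots,k\}$, a well-defined sub-multiset $\Gamma^i\subseteq\Gamma$ chain-decomposes a minimum cost circulation~$x^i$ in~$\net{X_i}$. The output flow over time will then be $f\coloneqq[\Gamma]^\timehorizon$ from Definition~\ref{def:chain-decomposable-finite}.

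I would start with $x^k\coloneqq0$ in $\net{X_k}=\net{\emptyset}$ (the only feasible circulation through~$\supersource$, trivially optimal with $\oo(X_k)=0$) and $\Gamma^k\coloneqq\emptyset$. For $i$ running from~$k$ down to~$1$, the networks $\net{X_{i-1}}$ and $\net{X_i}$ differ in exactly one arc at~$\supersource$: either $\supersource r_i$ is added (if $r_i\in\sources$) or $r_i\supersource$ is removed (if $r_i\in\sinks$). Because $x^i$ is minimum cost in $\net{X_i}$, the residual network $\net{X_i}_{x^i}$ has no negative cycles, so every negative cycle produced by the modification must pass through~$r_i$. Augmenting $x^i$ along shortest such cycles to obtain a minimum cost circulation $x^{i-1}=x^i+y^i$ in $\net{X_{i-1}}$ yields an increment $y^i$ that chain-decomposes into a multiset~$\Gamma_i$ of chain flows, each of whose path~$P_\gamma$ in $\net{}$ has~$r_i$ as one endpoint. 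Setting $\Gamma\coloneqq\bigcup_{i=1}^{k}\Gamma_i$ (and $\Gamma^i\coloneqq\bigcup_{j>i}\Gamma_j$) produces a nonstandard chain decomposition of~$x^0$.

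The hard part will be proving that $f=[\Gamma]^\timehorizon$ is a feasible flow over time. Since nonstandard chain flows may traverse backward (residual) arcs, Figure~\ref{fig:infeasible} shows that feasibility is not automatic: whenever a chain flow sends flow along a backward arc~$wv$, a matching rate of forward flow on~$vw$ from some other chain flow must be simultaneously present. The crucial ingredient is a shortest-path monotonicity argument analogous to the one used to justify Algorithm~\ref{alg:EarliestArrival}: the chain flows in~$\Gamma_i$ are built from shortest negative residual cycles through~$r_i$, and shortest-path distances from~$\supersource$ in the residual networks $\net{X_i}_{x^i}$ are non-decreasing across iterations. This monotonicity, combined with the timing windows $[0,\timehorizon-\transit(P_\gamma))$ from Definition~\ref{def:chain-decomposable-finite}, guarantees that any backward flow introduced by a later chain flow~$\gamma$ is pointwise dominated by a matching forward flow from an earlier chain flow~$\gamma'$ on the same arc at the same time.

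Once feasibility is secured, the lexicographic maximum property follows by the same counting as in~\eqref{eq:max-flow-value}. For each~$i$, the sub-decomposition~$\Gamma^i$ decomposes the minimum cost circulation~$x^i$ of~$\net{X_i}$, so $[\Gamma^i]^\timehorizon$ transports $-\cost(x^i)=\oo(X_i)$ units from $\sources\cap X_i$ to $\sinks\setminus X_i$ by Corollary~\ref{cor:max-flow-over-time}. The remaining chain flows in $\Gamma\setminus\Gamma^i=\bigcup_{j\leq i}\Gamma_j$ each involve a terminal $r_j\notin X_i$ as an endpoint of their path, so their contribution to the aggregate flow crossing from $\sources\cap X_i$ to $\sinks\setminus X_i$ vanishes (either the path starts or ends outside~$X_i$, or it redirects flow between terminals outside~$X_i$ in a balanced way). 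Hence $f$ itself sends $\oo(X_i)$ units from $\sources\cap X_i$ to $\sinks\setminus X_i$ for each~$i$, which is precisely the lexicographic maximum property claimed by the theorem.
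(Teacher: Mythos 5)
Your high-level framework --- iterated minimum cost circulations $x^i$ in the networks $\net{X_i}$, chain decompositions of the increments, and a shortest-path monotonicity argument for feasibility --- matches the paper's Algorithm~\ref{alg:lexmax}. However, there are two concrete gaps that prevent the proposal from being a valid proof.

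First, the choice of Definition~\ref{def:chain-decomposable-finite} for $[\Gamma]^\timehorizon$ does not apply to the chain decomposition you build. That definition requires $r_\gamma\in\sources$ for every chain flow, sends flow at rate $|\gamma|$ into $P_\gamma$ during $[0,\timehorizon-\transit(P_\gamma))$, and is tied to paths traversed forward from a source. In your construction, when $r_i\in\sinks$ the chain flows in $\Gamma_i$ leave $\supersource$ along the \emph{reverse} of arc $r_i\supersource$, so $r_\gamma=r_i$ is a sink; the corresponding $P_\gamma$ is traversed backward, its transit time $\transit(P_\gamma)$ can be negative, and the interval $[0,\timehorizon-\transit(P_\gamma))$ extends past the time horizon. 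The paper avoids this by using Definition~\ref{def:chain-decomposable-infinite}: flow into $P_\gamma$ starts at time $\transit_{\supersource r_\gamma}$ (which is $\timehorizon$, not $0$, for sink-starting paths) and never stops, with finiteness of the time horizon deduced \emph{a posteriori} because $\Gamma$ decomposes the zero circulation. These two timing rules produce genuinely different flows over time; your feasibility sketch via shortest-path monotonicity (the paper's Lemma~\ref{lem:feasibility}) is calibrated to the latter and does not carry over to the former.

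Second, the lex-max argument is not established. You invoke $-\cost(x^i)=\oo(X_i)$ for the sub-decomposition $\Gamma^i$, but this equates a static cost with the \emph{total} value of $[\Gamma^i]^\timehorizon$; it says nothing about the net flow at the individual terminals of $X_i$ once the remaining chain flows of $\Gamma\setminus\Gamma^i$ are superposed. Your claim that those chain flows' contributions ``vanish'' or ``redirect flow between terminals outside $X_i$ in a balanced way'' is precisely what needs to be proved, and it is not automatic: a chain flow in $\Gamma_j$, $j\leq i$, starts at $r_j\notin X_i$ but its path $P_\gamma$ may terminate at a terminal inside $X_i$, thereby changing the net amount of flow leaving that terminal. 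The paper resolves this through Lemma~\ref{lem:flow-at-terminal}, which pins down $|[\Gamma]|_{r_i}=\cost(\Delta^i)$ exactly, using two structural facts: $\Gamma$ decomposes the zero circulation (so $\sum_{\gamma\in\Gamma_{\supersource r_i}}|\gamma|=\sum_{\gamma\in\Gamma_{r_i\supersource}}|\gamma|$), and every chain flow in $\Delta^i$ enters $\supersource$ through the same arc $r_i\supersource$ (Lemma~\ref{lem:min-cost-circulation}). Your reversed iteration (from $i=k$ down to $1$) determines which arc a chain flow in $\Gamma_i$ \emph{leaves} $\supersource$ by, not which arc it enters by, so the accounting in Lemma~\ref{lem:flow-at-terminal} does not transfer without a new argument. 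As written, the claim $\sum_{r\in X_i}|[\Gamma]|_r=\oo(X_i)$ is asserted rather than proved.
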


Hoppe and Tardos' algorithm consists of a sequence of iterations~$i=1,\dots,k$, where static
minimum cost circulations~$x^i$ in networks~$\net{X_i}$ are computed. Point of departure is
the zero flow~$x^0$ in network~$\net{X_0}=\net{\terminals}$ where all arcs~$\supersource s$,~$s\in\sources$, are present but none of the sinks~$t\in\sinks$ has an arc~$t\supersource$.
In particular, all arcs in~$\net{X_0}$ have non-negative transit time such that the zero
flow~$x^0$ is a minimum cost circulation.

In each iteration~$i$, going from network~$\net{X_{i-1}}$ to~$\net{X_i}$, either
arc~$r_i\supersource$ is added if~$r_i$ is a sink, or arc~$\supersource r_i$ is removed
if~$r_i$ is a source. In both cases, the previous minimum cost circulation~$x^{i-1}$
in~$\net{X_{i-1}}$ is augmented into a minimum cost circulation~$x^i$ in~$\net{X_i}$ by
computing an appropriate minimum cost circulation~$y^i$ in the residual network of
flow~$x^{i-1}$, and setting~$x^i\coloneqq x^{i-1}+y^i$. Moreover, in each iteration the
algorithm computes a standard chain decomposition~$\Delta^i$ of~$y^i$. The flow over time
computed by the algorithm is then the chain-decomposable flow~$[\Gamma]$ where~$\Gamma$ is the
collection of all chain flows in~$\Delta^i$ for~$i=1,\dots,k$; see Algorithm~\ref{alg:lexmax}.
\begin{algorithm}[t]
\caption{Hoppe and Tardos' lex-max flow over time algorithm}
\label{alg:lexmax}
\begin{algorithmic}[1]
\Require ordering of terminals $r_k,r_{k-1},\dots,r_2,r_1$
\Ensure lexicographically maximum flow over time
\State $x^0 \gets$ zero flow in~$\net{X_0}$
\State $\Gamma^0 \gets \emptyset$
\For{$i \gets 1$ to~$k$}
	\If{$r_i\in\sinks$}
	    \State $y^i \gets$ minimum cost circulation in~$\net{X_i}_{x^{i-1}}$\label{step:sink}
	\ElsIf{$r_i\in\sources$}
	    \State $y^i \gets$ minimum cost circulation in~$\net{X_{i-1}}_{x^{i-1}}$ 
			with~$y^i_{r_i\supersource}=x^{i-1}_{\supersource r_i}$\label{step:source}
	\EndIf
	\State $x^i\gets x^{i-1}+y^i$
		\hfill\textcolor{gray}{// minimum cost circulation in~$\net{X_i}$}
	\State $\Delta^i \gets$ standard chain decomposition of~$y^i$
	\State $\Gamma^i \gets \Gamma^{i-1}\cup\Delta^i$
		\hfill\textcolor{gray}{// chain decomposition of~$x^i$}
\EndFor
\State $\Gamma\gets\Gamma^k$
\State \Return $[\Gamma]$
\end{algorithmic}
\end{algorithm}
We illustrate Algorithm~\ref{alg:lexmax} on a small example network in
Figures~\ref{fig:lexmax-example-static} and~\ref{fig:lexmax-example-dynamic}.

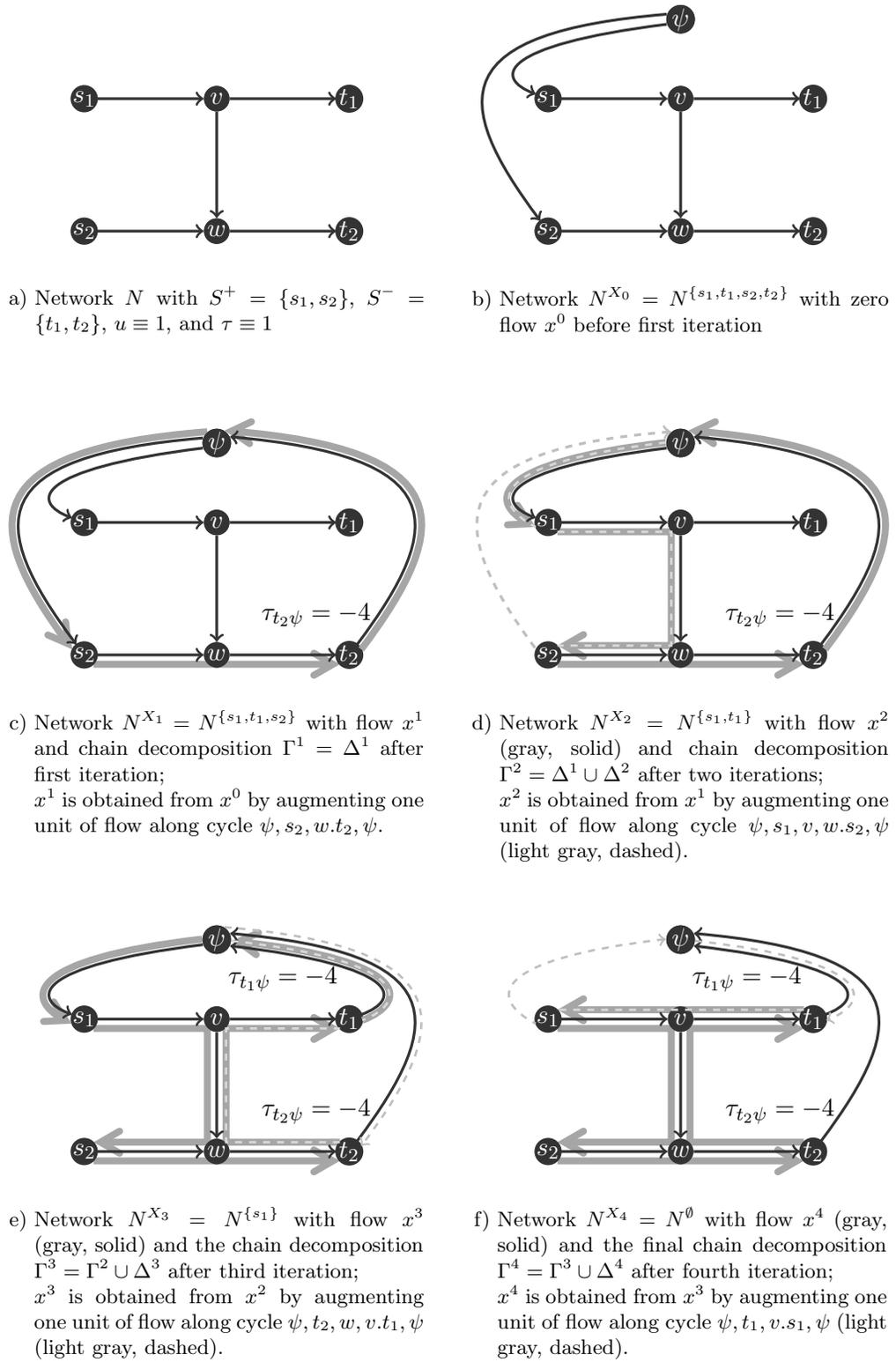
\begin{figure}[p]
\centering
\begin{tikzpicture}

\begin{scope}[xshift=-3.50cm,yshift=-0.1cm]
\node [node] (s1) at (-2,1)	{\footnotesize$s_1$};
\node [node] (s2) at (-2,-1) {\footnotesize$s_2$};
\node [node] (v) at (0,1)	{\footnotesize$v$};
\node [node] (w) at (0,-1) {\footnotesize$w$};
\node [node] (t1) at (2,1)	{\footnotesize$t_1$};
\node [node] (t2) at (2,-1) {\footnotesize$t_2$};

% \draw [->,gray!70,line width=3pt,>=angle 60] (s2.-45) -- (t2.225);
% \draw [gray!50!black] (s2)+(0.6,-0.45) node {\footnotesize$[0,\infty)$};
%
% \draw [->,gray!70,line width=3pt,>=angle 60] (t1.135) -- (s1.45);
% \draw [gray!50!black] (t1)+(-0.6,+0.45) node {\footnotesize$[\timehorizon,\infty)$};
%
% \draw [->,gray!70,line width=3pt,>=angle 60] (s1.-45) -- (v.225) -- (w.135) -- (s2.45);
% \draw [gray!50!black] (s1)+(0.6,-0.45) node {\footnotesize$[0,\infty)$};
%
% \draw [->,gray!70,line width=3pt,>=angle 60] (t2.135) -- (w.45) -- (v.-45) -- (t1.225);
% \draw [gray!50!black] (t2)+(-0.6,+0.45) node {\footnotesize$[\timehorizon,\infty)$};

\node [node] at (s1) {$s_1$};
\node [node] at (s2) {$s_2$};
\node [node] at (v) {$v$};
\node [node] at (w) {$w$};
\node [node] at (t1) {$t_1$};
\node [node] at (t2) {$t_2$};

\draw [arc] (s1) to (v);
\draw [arc] (s2) to (w);
\draw [arc] (v) to (w);
\draw [arc] (v) to (t1);
\draw [arc] (w) to (t2);

% \node [node] (psi) at (0,2.2) {$\supersource$};
% \draw [arc] (psi.200) .. controls (-2.8,1.8) and (-2.8,1.3) .. (s1);
% \draw [arc,<-] (psi.-25) .. controls (2.8,1.8) and (2.8,1.3) .. (t1);
% \draw [arc] (psi.160) .. controls (-3.5,2) and (-3.5,1) .. (s2);
% \draw [arc,<-] (psi.25) .. controls (3.5,2) and (3.5,1) .. (t2);

%\node [node] at (psi) {$\supersource$};

\node [anchor=north] at (0,-1.7) {
\footnotesize a)
\begin{minipage}[t]{0.42\textwidth}
Network~$\net{}$ with~$\sources=\{s_1,s_2\}$, $\sinks=\{t_1,t_2\}$, $u\equiv1$, and~$\transit\equiv1$
\end{minipage}
};
\end{scope}

\begin{scope}[xshift=3.50cm,yshift=-0.1cm]
\node [node] (s1) at (-2,1)	{\footnotesize$s_1$};
\node [node] (s2) at (-2,-1) {\footnotesize$s_2$};
\node [node] (v) at (0,1)	{\footnotesize$v$};
\node [node] (w) at (0,-1) {\footnotesize$w$};
\node [node] (t1) at (2,1)	{\footnotesize$t_1$};
\node [node] (t2) at (2,-1) {\footnotesize$t_2$};

% \draw [->,gray!70,line width=3pt,>=angle 60] (s2.-45) -- (t2.225);
% \draw [gray!50!black] (s2)+(0.6,-0.45) node {\footnotesize$[0,\infty)$};
%
% \draw [->,gray!70,line width=3pt,>=angle 60] (t1.135) -- (s1.45);
% \draw [gray!50!black] (t1)+(-0.6,+0.45) node {\footnotesize$[\timehorizon,\infty)$};
%
% \draw [->,gray!70,line width=3pt,>=angle 60] (s1.-45) -- (v.225) -- (w.135) -- (s2.45);
% \draw [gray!50!black] (s1)+(0.6,-0.45) node {\footnotesize$[0,\infty)$};
%
% \draw [->,gray!70,line width=3pt,>=angle 60] (t2.135) -- (w.45) -- (v.-45) -- (t1.225);
% \draw [gray!50!black] (t2)+(-0.6,+0.45) node {\footnotesize$[\timehorizon,\infty)$};

\node [node] at (s1) {$s_1$};
\node [node] at (s2) {$s_2$};
\node [node] at (v) {$v$};
\node [node] at (w) {$w$};
\node [node] at (t1) {$t_1$};
\node [node] at (t2) {$t_2$};

\draw [arc] (s1) to (v);
\draw [arc] (s2) to (w);
\draw [arc] (v) to (w);
\draw [arc] (v) to (t1);
\draw [arc] (w) to (t2);

\node [node] (psi) at (0,2.2) {$\supersource$};
\draw [arc] (psi.200) .. controls (-2.8,1.8) and (-2.8,1.3) .. (s1);
%\draw [arc,<-] (psi.-25) .. controls (2.8,1.8) and (2.8,1.3) .. (t1);
\draw [arc] (psi.160) .. controls (-3.5,2) and (-3.5,1) .. (s2);
%\draw [arc,<-] (psi.25) .. controls (3.5,2) and (3.5,1) .. (t2);

\node [node] at (psi) {$\supersource$};

\node [anchor=north] at (0,-1.7) {
\footnotesize b)
\begin{minipage}[t]{0.42\textwidth}
Network~$\net{X_0}=\net{\{s_1,t_1,s_2,t_2\}}$ with zero flow~$x^0$ before first iteration	
\end{minipage}
};
\end{scope}

\begin{scope}[xshift=-3.50cm,yshift=-1*6.5cm]
\node [node] (s1) at (-2,1)	{\footnotesize$s_1$};
\node [node] (s2) at (-2,-1) {\footnotesize$s_2$};
\node [node] (v) at (0,1)	{\footnotesize$v$};
\node [node] (w) at (0,-1) {\footnotesize$w$};
\node [node] (t1) at (2,1)	{\footnotesize$t_1$};
\node [node] (t2) at (2,-1) {\footnotesize$t_2$};

\draw [->,gray!70,line width=3pt,>=angle 60] (s2.-45) -- (t2.225);
%\draw [gray!50!black] (s2)+(0.6,-0.45) node {\footnotesize$[0,\infty)$};

% \draw [->,gray!70,line width=3pt,>=angle 60] (t1.135) -- (s1.45);
% \draw [gray!50!black] (t1)+(-0.6,+0.45) node {\footnotesize$[\timehorizon,\infty)$};
%
% \draw [->,gray!70,line width=3pt,>=angle 60] (s1.-45) -- (v.225) -- (w.135) -- (s2.45);
% \draw [gray!50!black] (s1)+(0.6,-0.45) node {\footnotesize$[0,\infty)$};
%
% \draw [->,gray!70,line width=3pt,>=angle 60] (t2.135) -- (w.45) -- (v.-45) -- (t1.225);
% \draw [gray!50!black] (t2)+(-0.6,+0.45) node {\footnotesize$[\timehorizon,\infty)$};

\node [node] at (s1) {$s_1$};
\node [node] at (s2) {$s_2$};
\node [node] at (v) {$v$};
\node [node] at (w) {$w$};
\node [node] at (t1) {$t_1$};
\node [node] at (t2) {$t_2$};

\draw [arc] (s1) to (v);
\draw [arc] (s2) to (w);
\draw [arc] (v) to (w);
\draw [arc] (v) to (t1);
\draw [arc] (w) to (t2);

\node [node] (psi) at (0,2.2) {$\supersource$};

\draw [->,gray!70,line width=3pt,>=angle 60] (psi.130) .. controls (-3.6,2.1) and (-3.6,1) .. (s2.155);
\draw [<-,gray!70,line width=3pt,>=angle 60] (psi.58) .. controls (3.6,2.1) and (3.63,1) .. (t2.25);

\draw [arc] (psi.200) .. controls (-2.8,1.8) and (-2.8,1.3) .. (s1);
%\draw [arc,<-] (psi.-25) .. controls (2.8,1.8) and (2.8,1.3) .. (t1);
\draw [arc] (psi.160) .. controls (-3.5,2) and (-3.5,1) .. (s2);
\draw [arc,<-] (psi.25) .. controls (3.5,2) and (3.5,1) .. (t2);
\node at (1.5,-0.4) {$\transit_{t_2\supersource}=-4$};

\node [node] at (psi) {$\supersource$};

\node [anchor=north] at (0,-1.7) {
\footnotesize c)
\begin{minipage}[t]{0.42\textwidth}
Network~$\net{X_1}=\net{\{s_1,t_1,s_2\}}$ with flow $x^1$ and chain decomposition $\Gamma^1=\Delta^1$ after first iteration; 

$x^1$~is obtained from~$x^0$ by augmenting one unit of flow along cycle~$\supersource,s_2,w.t_2,\supersource$. 	
\end{minipage}
};
\end{scope}

\begin{scope}[xshift=3.50cm,yshift=-1*6.5cm]
\node [node] (s1) at (-2,1)	{\footnotesize$s_1$};
\node [node] (s2) at (-2,-1) {\footnotesize$s_2$};
\node [node] (v) at (0,1)	{\footnotesize$v$};
\node [node] (w) at (0,-1) {\footnotesize$w$};
\node [node] (t1) at (2,1)	{\footnotesize$t_1$};
\node [node] (t2) at (2,-1) {\footnotesize$t_2$};

\draw [->,gray!70,line width=3pt,>=angle 60] (s2.-45) -- (t2.225);
%\draw [gray!50!black] (s2)+(0.6,-0.45) node {\footnotesize$[0,\infty)$};

% \draw [->,gray!70,line width=3pt,>=angle 60] (t1.135) -- (s1.45);
% \draw [gray!50!black] (t1)+(-0.6,+0.45) node {\footnotesize$[\timehorizon,\infty)$};
%
\draw [->,gray!70,line width=3pt,>=angle 60] (s1.-45) -- (v.225) -- (w.135) -- (s2.45);
%\draw [gray!50!black] (s1)+(0.6,-0.45) node {\footnotesize$[0,\infty)$};
%
% \draw [->,gray!70,line width=3pt,>=angle 60] (t2.135) -- (w.45) -- (v.-45) -- (t1.225);
% \draw [gray!50!black] (t2)+(-0.6,+0.45) node {\footnotesize$[\timehorizon,\infty)$};

\node [node] at (s1) {$s_1$};
\node [node] at (s2) {$s_2$};
\node [node] at (v) {$v$};
\node [node] at (w) {$w$};
\node [node] at (t1) {$t_1$};
\node [node] at (t2) {$t_2$};

\draw [arc] (s1) to (v);
\draw [arc] (s2) to (w);
\draw [arc] (v) to (w);
\draw [arc] (v) to (t1);
\draw [arc] (w) to (t2);

\node [node] (psi) at (0,2.2) {$\supersource$};

%\draw [->,gray!70,line width=3pt,>=angle 60] (psi.130) .. controls (-3.6,2.1) and (-3.6,1) .. (s2.155);
\draw [<-,gray!70,line width=3pt,>=angle 60] (psi.58) .. controls (3.6,2.1) and (3.63,1) .. (t2.25);

\draw [->,gray!70,line width=3pt,>=angle 60] (psi.178) .. controls (-2.9,1.9) and (-2.9,1.3) .. (s1.200);

\draw [arc] (psi.200) .. controls (-2.8,1.8) and (-2.8,1.3) .. (s1);
%\draw [arc,<-] (psi.-25) .. controls (2.8,1.8) and (2.8,1.3) .. (t1);
%\draw [arc] (psi.160) .. controls (-3.5,2) and (-3.5,1) .. (s2);
\draw [arc,<-] (psi.25) .. controls (3.5,2) and (3.5,1) .. (t2);

\draw [gray!30,line width=1pt,dashed] (psi.178) .. controls (-2.9,1.9) and (-2.9,1.3) .. (s1.200);
\draw [gray!30,line width=1pt,dashed] (s1.-45) -- (v.225) -- (w.135) -- (s2.45);
\draw [arc,gray!50,dashed,<-,] (psi.130) .. controls (-3.6,2.1) and (-3.6,1) .. (s2.155);

\node at (1.5,-0.4) {$\transit_{t_2\supersource}=-4$};

\node [node] at (psi) {$\supersource$};

\node [anchor=north] at (0,-1.7) {
\footnotesize d)
\begin{minipage}[t]{0.42\textwidth}
Network~$\net{X_2}=\net{\{s_1,t_1\}}$ with flow~$x^2$ (gray, solid) and chain decomposition $\Gamma^2=\Delta^1\cup\Delta^2$ after two iterations;

$x^2$~is obtained from~$x^1$ by augmenting one unit of flow along cycle~$\supersource,s_1,v,w.s_2,\supersource$ (light gray, dashed). 	
\end{minipage}
};
\end{scope}

\begin{scope}[xshift=-3.50cm,yshift=-2*7cm]
\node [node] (s1) at (-2,1)	{\footnotesize$s_1$};
\node [node] (s2) at (-2,-1) {\footnotesize$s_2$};
\node [node] (v) at (0,1)	{\footnotesize$v$};
\node [node] (w) at (0,-1) {\footnotesize$w$};
\node [node] (t1) at (2,1)	{\footnotesize$t_1$};
\node [node] (t2) at (2,-1) {\footnotesize$t_2$};

\draw [->,gray!70,line width=3pt,>=angle 60] (s2.-45) -- (t2.225);
%\draw [gray!50!black] (s2)+(0.6,-0.45) node {\footnotesize$[0,\infty)$};

% \draw [->,gray!70,line width=3pt,>=angle 60] (t1.135) -- (s1.45);
% \draw [gray!50!black] (t1)+(-0.6,+0.45) node {\footnotesize$[\timehorizon,\infty)$};
%
\draw [->,gray!70,line width=3pt,>=angle 60] (s1.-45) -- (v.225) -- (w.135) -- (s2.45);
%\draw [gray!50!black] (s1)+(0.6,-0.45) node {\footnotesize$[0,\infty)$};
%
\draw [->,gray!70,line width=3pt,>=angle 60] (t2.135) -- (w.45) -- (v.-45) -- (t1.225);
% \draw [gray!50!black] (t2)+(-0.6,+0.45) node {\footnotesize$[\timehorizon,\infty)$};

\node [node] at (s1) {$s_1$};
\node [node] at (s2) {$s_2$};
\node [node] at (v) {$v$};
\node [node] at (w) {$w$};
\node [node] at (t1) {$t_1$};
\node [node] at (t2) {$t_2$};

\draw [arc] (s1) to (v);
\draw [arc] (s2) to (w);
\draw [arc] (v) to (w);
\draw [arc] (v) to (t1);
\draw [arc] (w) to (t2);

\node [node] (psi) at (0,2.2) {$\supersource$};

% \draw [->,gray!70,line width=3pt,>=angle 60] (psi.130) .. controls (-3.6,2.1) and (-3.6,1) .. (s2.155);
% \draw [<-,gray!70,line width=3pt,>=angle 60] (psi.58) .. controls (3.6,2.1) and (3.63,1) .. (t2.25);

\draw [->,gray!70,line width=3pt,>=angle 60] (psi.178) .. controls (-2.9,1.9) and (-2.9,1.3) .. (s1.200);

\draw [<-,gray!70,line width=3pt,>=angle 60] (psi.-5) .. controls (2.92,1.89) and (2.92,1.26) .. (t1.-5);

\draw [arc] (psi.200) .. controls (-2.8,1.8) and (-2.8,1.3) .. (s1);
\draw [arc,<-] (psi.-25) .. controls (2.8,1.8) and (2.8,1.3) .. (t1);
%\draw [arc] (psi.160) .. controls (-3.5,2) and (-3.5,1) .. (s2);
\draw [arc,<-] (psi.25) .. controls (3.5,2) and (3.5,1) .. (t2);
\node at (1.5,-0.4) {$\transit_{t_2\supersource}=-4$};
\node at (1,1.6) {$\transit_{t_1\supersource}=-4$};

\draw [->,gray!50,line width=1pt,dashed] (psi.58) .. controls (3.6,2.1) and (3.63,1) .. (t2.25);
\draw [gray!30,line width=1pt,dashed] (t2.135) -- (w.45) -- (v.-45) -- (t1.225);
\draw [gray!30,line width=1pt,dashed] (psi.-5) .. controls (2.92,1.89) and (2.92,1.26) .. (t1.-5);

\node [node] at (psi) {$\supersource$};

\node [anchor=north] at (0,-1.7) {
\footnotesize e)
\begin{minipage}[t]{0.42\textwidth}
Network~$\net{X_3}=\net{\{s_1\}}$ with flow~$x^3$ (gray, solid) and the chain decomposition\linebreak $\Gamma^3=\Gamma^2\cup\Delta^3$ after third iteration;

$x^3$~is obtained from~$x^2$ by augmenting one unit of flow along cycle~$\supersource,t_2,w,v.t_1,\supersource$ (light gray, dashed). 	
\end{minipage}
};
\end{scope}

\begin{scope}[xshift=3.50cm,yshift=-2*7cm]
\node [node] (s1) at (-2,1)	{\footnotesize$s_1$};
\node [node] (s2) at (-2,-1) {\footnotesize$s_2$};
\node [node] (v) at (0,1)	{\footnotesize$v$};
\node [node] (w) at (0,-1) {\footnotesize$w$};
\node [node] (t1) at (2,1)	{\footnotesize$t_1$};
\node [node] (t2) at (2,-1) {\footnotesize$t_2$};

\draw [->,gray!70,line width=3pt,>=angle 60] (s2.-45) -- (t2.225);
%\draw [gray!50!black] (s2)+(0.6,-0.45) node {\footnotesize$[0,\infty)$};

\draw [->,gray!70,line width=3pt,>=angle 60] (t1.135) -- (s1.45);
% \draw [gray!50!black] (t1)+(-0.6,+0.45) node {\footnotesize$[\timehorizon,\infty)$};
%
\draw [->,gray!70,line width=3pt,>=angle 60] (s1.-45) -- (v.225) -- (w.135) -- (s2.45);
%\draw [gray!50!black] (s1)+(0.6,-0.45) node {\footnotesize$[0,\infty)$};
%
\draw [->,gray!70,line width=3pt,>=angle 60] (t2.135) -- (w.45) -- (v.-45) -- (t1.225);
% \draw [gray!50!black] (t2)+(-0.6,+0.45) node {\footnotesize$[\timehorizon,\infty)$};

\node [node] at (s1) {$s_1$};
\node [node] at (s2) {$s_2$};
\node [node] at (v) {$v$};
\node [node] at (w) {$w$};
\node [node] at (t1) {$t_1$};
\node [node] at (t2) {$t_2$};

\draw [arc] (s1) to (v);
\draw [arc] (s2) to (w);
\draw [arc] (v) to (w);
\draw [arc] (v) to (t1);
\draw [arc] (w) to (t2);

\node [node] (psi) at (0,2.2) {$\supersource$};

% \draw [->,gray!70,line width=3pt,>=angle 60] (psi.130) .. controls (-3.6,2.1) and (-3.6,1) .. (s2.155);
% \draw [<-,gray!70,line width=3pt,>=angle 60] (psi.58) .. controls (3.6,2.1) and (3.63,1) .. (t2.25);

% \draw [->,gray!70,line width=3pt,>=angle 60] (psi.178) .. controls (-2.9,1.9) and (-2.9,1.3) .. (s1.200);

% \draw [<-,gray!70,line width=3pt,>=angle 60] (psi.-5) .. controls (2.92,1.89) and (2.92,1.26) .. (t1.-5);

% \draw [arc] (psi.200) .. controls (-2.8,1.8) and (-2.8,1.3) .. (s1);
\draw [arc,<-] (psi.-25) .. controls (2.8,1.8) and (2.8,1.3) .. (t1);
% \draw [arc] (psi.160) .. controls (-3.5,2) and (-3.5,1) .. (s2);
\draw [arc,<-] (psi.25) .. controls (3.5,2) and (3.5,1) .. (t2);
\node at (1.5,-0.4) {$\transit_{t_2\supersource}=-4$};
\node at (1,1.6) {$\transit_{t_1\supersource}=-4$};

\draw [->,gray!50,line width=1pt,dashed] (psi.-5) .. controls (2.92,1.89) and (2.92,1.26) .. (t1.-5);
\draw [gray!30,line width=1pt,dashed] (t1.135) -- (s1.45);
\draw [<-,gray!50,line width=1pt,dashed] (psi.178) .. controls (-2.9,1.9) and (-2.9,1.3) .. (s1.200);

\node [node] at (psi) {$\supersource$};

\node [anchor=north] at (0,-1.7) {
\footnotesize f)
\begin{minipage}[t]{0.42\textwidth}
Network~$\net{X_4}=\net{\emptyset}$ with flow~$x^4$ (gray, solid) and the final chain decomposition $\Gamma^4=\Gamma^3\cup\Delta^4$ after fourth iteration;

$x^4$~is obtained from~$x^3$ by augmenting one unit of flow along cycle~$\supersource,t_1,v.s_1,\supersource$ (light gray, dashed). 	
\end{minipage}
}; 
\end{scope}
\end{tikzpicture}
\caption{%
Static flows~$x^i$ and chain decompositions~$\Gamma^i$ computed by Algorithm~\ref{alg:lexmax} for input network~$\net{}$ (see~Fig.~\ref{fig:lexmax-example-static}a) with
unit capacities, unit transit times, time horizon~$\timehorizon=4$, and terminal order~$s_1,t_1,s_2,t_2$
}
\label{fig:lexmax-example-static}
\end{figure}
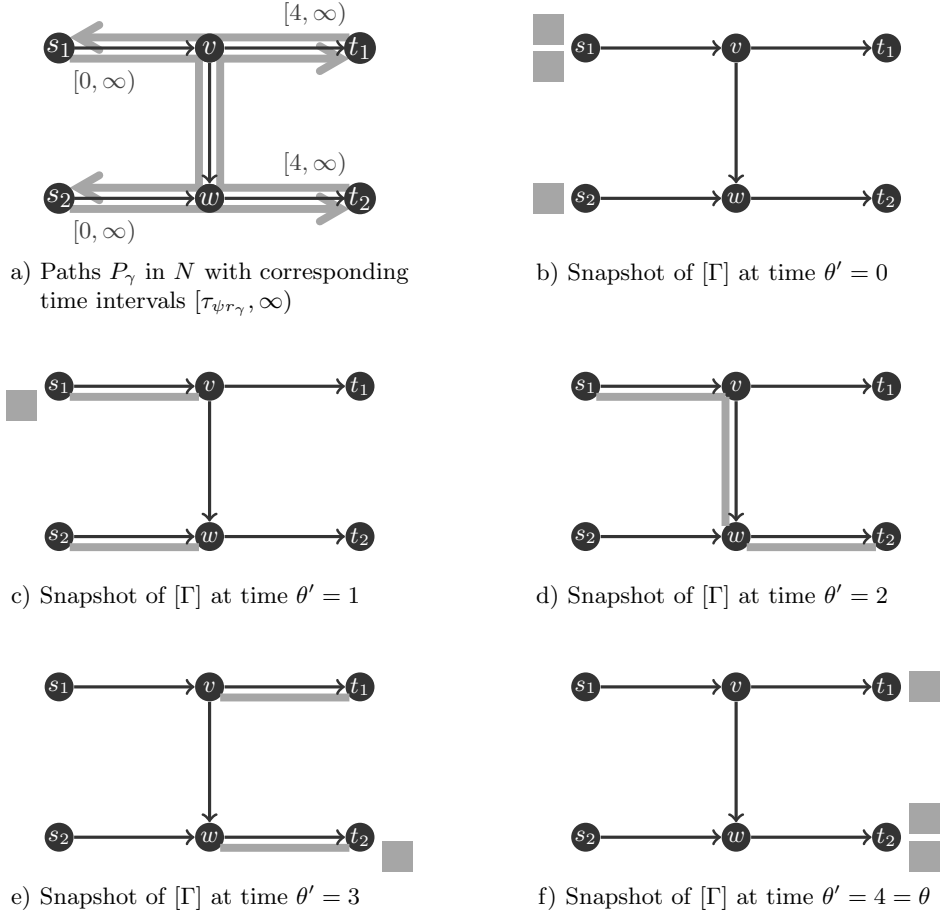
\begin{figure}[t]
\centering
\begin{tikzpicture}

\begin{scope}[xshift=-3.5cm,yshift=-0*7cm]
\node [node] (s1) at (-2,1)	{\footnotesize$s_1$};
\node [node] (s2) at (-2,-1) {\footnotesize$s_2$};
\node [node] (v) at (0,1)	{\footnotesize$v$};
\node [node] (w) at (0,-1) {\footnotesize$w$};
\node [node] (t1) at (2,1)	{\footnotesize$t_1$};
\node [node] (t2) at (2,-1) {\footnotesize$t_2$};

\draw [->,gray!70,line width=3pt,>=angle 60] (s2.-45) -- (t2.225);
\draw [gray!50!black] (s2)+(0.6,-0.45) node {\footnotesize$[0,\infty)$};

\draw [->,gray!70,line width=3pt,>=angle 60] (t1.135) -- (s1.45);
\draw [gray!50!black] (t1)+(-0.6,+0.45) node {\footnotesize$[4,\infty)$};

\draw [->,gray!70,line width=3pt,>=angle 60] (s1.-45) -- (v.225) -- (w.135) -- (s2.45);
\draw [gray!50!black] (s1)+(0.6,-0.45) node {\footnotesize$[0,\infty)$};

\draw [->,gray!70,line width=3pt,>=angle 60] (t2.135) -- (w.45) -- (v.-45) -- (t1.225);
\draw [gray!50!black] (t2)+(-0.6,+0.45) node {\footnotesize$[4,\infty)$};

\node [node] at (s1) {$s_1$};
\node [node] at (s2) {$s_2$};
\node [node] at (v) {$v$};
\node [node] at (w) {$w$};
\node [node] at (t1) {$t_1$};
\node [node] at (t2) {$t_2$};

\draw [arc] (s1) to (v);
\draw [arc] (s2) to (w);
\draw [arc] (v) to (w);
\draw [arc] (v) to (t1);
\draw [arc] (w) to (t2);

\node [anchor=north] at (0,-1.7) {
\footnotesize a)
\begin{minipage}[t]{0.35\textwidth}
Paths $P_\gamma$ in~$\net{}$ with corresponding time intervals~$[\transit_{\supersource r_\gamma},\infty)$
\end{minipage}
};
\end{scope}

\begin{scope}[xshift=3.5cm,yshift=-0*7cm]
\node [node] (s1) at (-2,1)	{\footnotesize$s_1$};
\node [node] (s2) at (-2,-1) {\footnotesize$s_2$};
\node [node] (v) at (0,1)	{\footnotesize$v$};
\node [node] (w) at (0,-1) {\footnotesize$w$};
\node [node] (t1) at (2,1)	{\footnotesize$t_1$};
\node [node] (t2) at (2,-1) {\footnotesize$t_2$};

\draw [arc] (s1) to (v);
\draw [arc] (s2) to (w);
\draw [arc] (v) to (w);
\draw [arc] (v) to (t1);
\draw [arc] (w) to (t2);

\fill [gray!70] (-2.3,1.05) rectangle +(-0.4,0.4);
\fill [gray!70] (-2.3,0.95) rectangle +(-0.4,-0.4);

\fill [gray!70] (-2.3,-1.2) rectangle +(-0.4,0.4);

\node [anchor=north] at (0,-1.7) {
\footnotesize b)
\begin{minipage}[t]{0.35\textwidth}
Snapshot of~$[\Gamma]$ at time~$\timehorizon'=0$
\end{minipage}
};
\end{scope}

\begin{scope}[xshift=-3.5cm,yshift=-4.5cm]
\node [node] (s1) at (-2,1)	{\footnotesize$s_1$};
\node [node] (s2) at (-2,-1) {\footnotesize$s_2$};
\node [node] (v) at (0,1)	{\footnotesize$v$};
\node [node] (w) at (0,-1) {\footnotesize$w$};
\node [node] (t1) at (2,1)	{\footnotesize$t_1$};
\node [node] (t2) at (2,-1) {\footnotesize$t_2$};

\draw [gray!70,line width=3pt,>=angle 60] (s2.-45) -- (w.225);

\draw [gray!70,line width=3pt,>=angle 60] (s1.-45) -- (v.225);

\draw [arc] (s1) to (v);
\draw [arc] (s2) to (w);
\draw [arc] (v) to (w);
\draw [arc] (v) to (t1);
\draw [arc] (w) to (t2);

\fill [gray!70] (-2.3,0.95) rectangle +(-0.4,-0.4);

\node [anchor=north] at (0,-1.5) {
\footnotesize c)
\begin{minipage}[t]{0.35\textwidth}
Snapshot of~$[\Gamma]$ at time~$\timehorizon'=1$
\end{minipage}
};
\end{scope}

\begin{scope}[xshift=3.5cm,yshift=-4.5cm]
\node [node] (s1) at (-2,1)	{\footnotesize$s_1$};
\node [node] (s2) at (-2,-1) {\footnotesize$s_2$};
\node [node] (v) at (0,1)	{\footnotesize$v$};
\node [node] (w) at (0,-1) {\footnotesize$w$};
\node [node] (t1) at (2,1)	{\footnotesize$t_1$};
\node [node] (t2) at (2,-1) {\footnotesize$t_2$};

\draw [gray!70,line width=3pt,>=angle 60] (w.-45) -- (t2.225);

\draw [gray!70,line width=3pt,>=angle 60] (s1.-45) -- (v.225);

\draw [gray!70,line width=3pt,>=angle 60] (v.225) -- (w.135);

\draw [arc] (s1) to (v);
\draw [arc] (s2) to (w);
\draw [arc] (v) to (w);
\draw [arc] (v) to (t1);
\draw [arc] (w) to (t2);

\node [anchor=north] at (0,-1.5) {
\footnotesize d)
\begin{minipage}[t]{0.35\textwidth}
Snapshot of~$[\Gamma]$ at time~$\timehorizon'=2$
\end{minipage}
};
\end{scope}

\begin{scope}[xshift=-3.5cm,yshift=-8.5cm]
\node [node] (s1) at (-2,1)	{\footnotesize$s_1$};
\node [node] (s2) at (-2,-1) {\footnotesize$s_2$};
\node [node] (v) at (0,1)	{\footnotesize$v$};
\node [node] (w) at (0,-1) {\footnotesize$w$};
\node [node] (t1) at (2,1)	{\footnotesize$t_1$};
\node [node] (t2) at (2,-1) {\footnotesize$t_2$};

\draw [gray!70,line width=3pt,>=angle 60] (w.-45) -- (t2.225);

\draw [gray!70,line width=3pt,>=angle 60] (v.-45) -- (t1.225);

\draw [arc] (s1) to (v);
\draw [arc] (s2) to (w);
\draw [arc] (v) to (w);
\draw [arc] (v) to (t1);
\draw [arc] (w) to (t2);

\fill [gray!70] (2.3,-1.05) rectangle +(0.4,-0.4);

\node [anchor=north] at (0,-1.5) {
\footnotesize e)
\begin{minipage}[t]{0.35\textwidth}
Snapshot of~$[\Gamma]$ at time~$\timehorizon'=3$
\end{minipage}
};
\end{scope}

\begin{scope}[xshift=3.5cm,yshift=-8.5cm]
\node [node] (s1) at (-2,1)	{\footnotesize$s_1$};
\node [node] (s2) at (-2,-1) {\footnotesize$s_2$};
\node [node] (v) at (0,1)	{\footnotesize$v$};
\node [node] (w) at (0,-1) {\footnotesize$w$};
\node [node] (t1) at (2,1)	{\footnotesize$t_1$};
\node [node] (t2) at (2,-1) {\footnotesize$t_2$};

\draw [arc] (s1) to (v);
\draw [arc] (s2) to (w);
\draw [arc] (v) to (w);
\draw [arc] (v) to (t1);
\draw [arc] (w) to (t2);

\fill [gray!70] (2.3,-1.05) rectangle +(0.4,-0.4);
\fill [gray!70] (2.3,-0.95) rectangle +(0.4,0.4);

\fill [gray!70] (2.3,0.8) rectangle +(0.4,0.4);

\node [anchor=north] at (0,-1.5) {
\footnotesize f)
\begin{minipage}[t]{0.35\textwidth}
Snapshot of~$[\Gamma]$ at time~$\timehorizon'=4=\timehorizon$
\end{minipage}
};
\end{scope}
\end{tikzpicture}
\caption{%
Chain-decomposable flow~$[\Gamma]$ computed by Algorithm~\ref{alg:lexmax} for input network~$\net{}$
and time horizon~$\timehorizon=4$ (cp.~Figure~\ref{fig:lexmax-example-static})
}
\label{fig:lexmax-example-dynamic}
\end{figure}

Our analysis of Algorithm~\ref{alg:lexmax} resembles the original analysis given by Hoppe and
Tardos~\cite{HoppeTardos2000}; in particular we use the same or similar notation. There are,
however, some crucial differences. While Hoppe and Tardos use cuts in time-expanded networks
to prove optimality of the chain-decomposable flow~$[\Gamma]$, we provide a shorter and more
direct proof based on Corollary~\ref{cor:max-flow-over-time}. As a consequence, our analysis
works for arbitrary (rational or irrational) time horizon and transit times. Also Fleischer
and Tardos~\cite{FleischerTardos1998} observe that the algorithm works for the case of an
irrational time horizon, however still assuming integral (or rational) arc transit times.

Before we start to analyze the chain-decomposable flow~$[\Gamma]$, we first prove that the
circulations~$x^i$ in~$\net{X_i}$ have indeed minimum cost and~$\Gamma^i$ is a chain
decomposition of~$x^i$; notice that this is also stated in Algorithm~\ref{alg:lexmax} as
comments.

\begin{lemma}
\label{lem:min-cost-circulation}%\label{lem:property}
For~$i=0,\dots,k$, $x^i$ is a feasible flow and a
minimum cost circulation in~$\net{X_i}$ with chain decomposition~$\Gamma^i$;
moreover, all chain flows~$\gamma\in\Delta^i$ use arc~$r_i\supersource$.
\end{lemma}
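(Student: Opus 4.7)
The plan is induction on~$i$. The base case $i=0$ is immediate: the network $\net{X_0}=\net{\terminals}$ contains no arcs $t\supersource$, so all positive-capacity arcs have non-negative transit time, and by the assumption that the only zero-length cycles are opposite-arc pairs (which cannot carry positive circulation here because the reverse arc of an original arc has capacity~$0$), the zero flow $x^0$ is a minimum cost circulation; $\Gamma^0=\emptyset$ is its trivial chain decomposition, and the assertion about $\Delta^0$ is vacuous.

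For the inductive step I would distinguish the two cases depending on whether $r_i\in\sinks$ or $r_i\in\sources$. In the sink case, $\net{X_i}$ arises from $\net{X_{i-1}}$ by adding the arc $r_i\supersource$, and one applies the standard fact that augmenting a feasible circulation by a minimum cost circulation in its residual network yields a minimum cost circulation: hence $x^i=x^{i-1}+y^i$ is a minimum cost feasible circulation in $\net{X_i}$. In the source case, $\net{X_i}$ arises by deleting $\supersource r_i$; the side constraint $y^i_{r_i\supersource}=x^{i-1}_{\supersource r_i}$ imposed in Step~\ref{step:source} is precisely what ensures that $x^i$ sends no flow on $\supersource r_i$ and can therefore be viewed as a feasible circulation in $\net{X_i}$. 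For minimality in this case I would compare against an arbitrary minimum cost circulation $z^*$ in $\net{X_i}$: viewing $z^*$ as a circulation in $\net{X_{i-1}}$ with $z^*_{\supersource r_i}=0$, the difference $z^*-x^{i-1}$ is a feasible competitor in the residual network that meets the same side constraint, yielding $\cost(x^i)=\cost(x^{i-1})+\cost(y^i)\leq\cost(x^{i-1})+\cost(z^*-x^{i-1})=\cost(z^*)$. In both cases, $\Gamma^i=\Gamma^{i-1}\cup\Delta^i$ decomposes $x^i$ because its two pieces decompose $x^{i-1}$ and $y^i$, respectively.

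I expect the main obstacle to be the final assertion that every $\gamma\in\Delta^i$ uses arc $r_i\supersource$, which I would prove by contradiction. Suppose some $\gamma\in\Delta^i$ had its cycle $C_\gamma$ avoiding $r_i\supersource$; then $C_\gamma$ would lie entirely in $\net{X_{i-1}}_{x^{i-1}}$, so by the inductive minimality of $x^{i-1}$ its cost would be non-negative. Since $\Delta^i$ is a \emph{standard} chain decomposition, $C_\gamma$ cannot consist of a pair of opposite arcs, and therefore by the assumption that the only zero-length cycles are such pairs one even has $\transit(C_\gamma)>0$. Subtracting $\gamma$ from $y^i$ would then yield a feasible circulation in the relevant residual network of strictly smaller cost---and in the source case it would still satisfy the side constraint, precisely because $\gamma$ avoids $r_i\supersource$---contradicting the minimum cost choice of $y^i$.
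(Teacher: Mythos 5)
Your proposal is correct and follows essentially the same inductive structure as the paper's proof, distinguishing the sink case (arc $r_i\supersource$ added) from the source case (arc $\supersource r_i$ removed) and using the inductive optimality of $x^{i-1}$ to rule out negative residual cycles. Your argument for the last claim---that a cycle $C_\gamma$ avoiding $r_i\supersource$ would have strictly positive cost and could be subtracted from $y^i$ to reduce its cost while preserving the side constraint---makes explicit what the paper states tersely ("$y^i$ can only send flow along cycles containing arc $r_i\supersource$"), and your direct comparison against an arbitrary circulation $z^*$ in $\net{X_i}$ for the source case is a mildly more self-contained way of establishing minimality, but both are the same argument in substance.
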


\begin{proof}
We use induction on~$i$. For~$i=0$, the statement is true since all arcs
of~$\net{X_0}=\net{\terminals}$ have non-negative transit times, and~$x^0$ is the zero flow
with empty chain decomposition~$\Gamma^0$. In iteration~$i\geq1$, we may assume
that~$x^{i-1}$ is feasible and a minimum cost circulation in~$\net{X_{i-1}}$. In particular, no
cycle in the residual graph~$\net{X_{i-1}}_{x^{i-1}}$ has negative cost.

If~$r_i$ is a sink,
arc~$r_i\supersource$ is added in iteration~$i$. Then, the minimum cost circulation~$y^i$
in~$\net{X_i}_{x^{i-1}}$ cancels all negative cycles such that~$x^i=x^{i-1}+y^i$ is indeed a
minimum cost circulation; moreover, all these negative cycles must contain arc~$r_i\supersource$. 

If~$r_i$ is a source,~$y^i$ cancels all flow on arc~$\supersource r_i$ such
that~$x^i=x^{i-1}+y^i$ is indeed a feasible flow in~$\net{X_i}$ where arc~$\supersource r_i$
has been removed. Moreover, since~$y^i$ is chosen to have minimum cost, no negative cycles
can occur in~$\net{X_i}_{x^i}$, and~$x^i$ is thus a minimum cost circulation. Moreover, since
no cycle in the residual graph~$\net{X_{i-1}}_{x^{i-1}}$ has negative cost,~$y^i$ can only
send flow along cycles containing arc~$r_i\supersource$.
\end{proof}

Notice that the final minimum cost circulation~$x^k$ in~$\net{X_k}=\net{\sinks}$ is the zero
flow, since~$\net{\sinks}$ does not contain cycles of non-positive cost.
Therefore~$\Gamma=\Gamma^k$ is a chain decomposition of the zero flow. This explains why the
resulting chain-decomposable flow~$[\Gamma]$ has finite time horizon even though it never
stops sending flow into its source sink paths (see
Definition~\ref{def:chain-decomposable-infinite}). As we show below, all flow in the network
cancels out after time~$\timehorizon$ such that~$[\Gamma]$ actually has time
horizon~$\timehorizon$. 

Moreover, in view of the discussion in
Section~\ref{sec:earliest-arrival-flows} on canceling flow along backward arcs with negative
transit time, we also need to argue that~$[\Gamma]$ is indeed a feasible flow over time. To
this end, for~$i=0,\dots,k$ and~$v\in V$, let node label~$p^i(v)$ denote the minimum cost of
a~$\supersource$-$v$-path in~$\net{X_i}_{x^i}$.

\begin{lemma}
\label{lem:feasibility}
The following holds for all~ $i=1,\dots,k$:
\begin{enumerate}[(i)]
\item\label{lem:labels} $p^i(v)\geq p^{i-1}(v)\geq0$ for all~$v\in V$.
\item\label{lem:path-lengths} For all~$\gamma\in\Delta^i$, if~$\gamma$ contains node~$v\in
V$, then the length of the $\supersource$-$v$-subpath of~$\gamma$ lies in the
interval~$\bigl[p^{i-1}(v),\min\{p^i(v),\timehorizon\}\bigr]$.
\end{enumerate}
\end{lemma}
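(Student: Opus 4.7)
The proof proceeds by induction on~$i$. For the base case~$i=0$, the flow~$x^0$ is identically zero in~$\net{X_0}=\net{\terminals}$, whose arc set contains no arcs of the form~$t\supersource$ and therefore only arcs with non-negative transit time; hence~$p^0(v)\geq 0$ for every~$v\in V$, and part~(ii) is vacuous because~$\Delta^0=\emptyset$. Now assume that~(i) and~(ii) hold through iteration~$i-1$.

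For part~(i), the plan is to interpret~$y^i$ as a sequence of shortest-$\supersource$-$r_i$-path augmentations in the current residual network. In the sink case, $y^i$ is obtained by successively canceling the shortest negative cycle in~$\net{X_i}_{x^{i-1}}$, each of which consists of a shortest $\supersource$-$r_i$-path in the current residual together with the new arc~$r_i\supersource$ of cost~$-\timehorizon$. In the source case, the minimum-cost saturation of~$r_i\supersource$ required by~$y^i$ decomposes analogously into augmentations along shortest $\supersource$-$r_i$-paths combined with the residual arc~$r_i\supersource$ of cost~$0$. The classical result that shortest-path distances from a fixed source are non-decreasing under shortest-path augmentations (see Korte and Vygen~\cite[Chapter~9.4]{KorteVygen2018}) then yields~$p^i(v)\geq p^{i-1}(v)$, which together with the inductive hypothesis~$p^{i-1}(v)\geq 0$ gives part~(i).

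For part~(ii), fix~$\gamma\in\Delta^i$ with cycle~$C_\gamma=\supersource, r_\gamma,\ldots,v,\ldots,r_i,\supersource$, and let~$\ell_v$ denote the length of its $\supersource$-to-$v$ subpath. The lower bound~$\ell_v\geq p^{i-1}(v)$ follows because this subpath is contained in~$\net{X_{i-1}}_{x^{i-1}}$: in the source case, $y^i$ itself is a circulation in that network; in the sink case the only new arc~$r_i\supersource$ enters~$\supersource$ and thus cannot appear before~$v$ on the subpath. For the upper bound~$\ell_v\leq p^i(v)$, I use that~$\Delta^i$ is a standard chain decomposition, so every arc~$ab$ on~$C_\gamma$ carries positive flow under~$y^i$; its reverse~$ba$ therefore lies in~$\net{X_i}_{x^i}$ with transit time~$-\transit_{ab}$. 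Reversing the $\supersource$-to-$v$ subpath yields a $v$-$\supersource$-path in~$\net{X_i}_{x^i}$ of length~$-\ell_v$, and since~$x^i$ is a minimum cost circulation, $\net{X_i}_{x^i}$ contains no negative cycle; the triangle inequality combined with~$p^i(\supersource)=0$ then gives~$\ell_v\leq p^i(v)$.

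The remaining bound~$\ell_v\leq\timehorizon$ is the main obstacle. Here the plan is to exploit the successive shortest-path picture more carefully: at the moment when the augmentation corresponding to~$\gamma$ is performed, the $\supersource$-to-$r_i$ portion of~$C_\gamma$ is a shortest $\supersource$-$r_i$-path in the current residual network with respect to some intermediate potentials~$\pi$ satisfying~$p^{i-1}(w)\leq\pi(w)\leq p^i(w)$ for all~$w$. In the sink case, negativity of~$\transit(C_\gamma)$ forces~$\pi(r_i)<\timehorizon$, while in the source case the analogous bound comes from the structure of the residual arc~$r_i\supersource$ of cost zero together with the non-negativity of~$\pi$. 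Turning this into a bound on the prefix length~$\ell_v$ at an arbitrary intermediate~$v$—where forward arcs can increase and reverse arcs decrease the running sum along~$C_\gamma$—requires a careful reduced-cost bookkeeping along the cycle, which I expect to be the core technical step.
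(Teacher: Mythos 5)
Parts~(i) and the bracket~$[p^{i-1}(v),\,p^i(v)]$ of part~(ii) are essentially the paper's argument (slightly repackaged): the paper inserts an auxiliary label~$q^i(v)$, the $\supersource$-$v$-distance in~$\net{X_i}_{x^{i-1}}$, and shows~$p^i(v)\geq q^i(v)\geq p^{i-1}(v)$, whereas you go directly to~$\net{X_{i-1}}_{x^{i-1}}$; and your ``reversed subpath gives a $v$-$\supersource$-path of cost~$-\ell_v$ in~$\net{X_i}_{x^i}$'' is the contrapositive of the paper's ``a flow-carrying $\supersource$-$v$-path longer than~$p^i(v)$ would close a negative cycle in~$\net{X_i}_{x^i}$.'' Both are correct and amount to the same idea.

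The genuine gap is the bound~$\ell_v\leq\timehorizon$, which you explicitly leave open as ``the core technical step.'' This is indeed the delicate part of the lemma, and your sketch (bounding~$\pi(r_i)$ via negativity of~$C_\gamma$) addresses the wrong node: what is needed is~$\pi(v)\leq\timehorizon$ for the intermediate node~$v$ where the prefix ends, and that does not follow from a bound at the endpoint~$r_i$ because backward arcs with negative transit time can make the prefix longer than the whole path. The paper closes this by a case distinction. \emph{Case~1:} if the circulation~$x^i$ sends positive flow through~$v$, then since~$x^i$ is min-cost in~$\net{X_i}$ every flow-carrying cycle has non-positive cost, and the only negative-cost arcs are the arcs~$r_j\supersource$ of transit time~$-\timehorizon$; hence there is a flow-carrying $v$-$\supersource$-path whose last arc has transit time~$-\timehorizon$, and reversing it gives a residual $\supersource$-$v$-path all of whose arcs except~$\supersource r_j$ have non-positive transit time, so~$p^i(v)\leq\timehorizon$, which combined with~$\ell_v\leq p^i(v)$ suffices. \emph{Case~2:} if~$x^i$ does not route through~$v$ but~$y^i$ does (and must therefore cancel flow of~$x^{i-1}$ through~$v$), the same flow-carrying-cycle argument applied to the min-cost circulation~$x^{i-1}$ gives a residual $\supersource$-$v$-path of cost at most~$\timehorizon$, and because~$y^i$ augments along shortest paths/cycles in~$\net{X_i}_{x^{i-1}}$, the prefix length~$\ell_v$ equals the shortest-path label at the moment $\gamma$'s arcs at~$v$ are augmented, which is still at most~$\timehorizon$. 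Your ``reduced-cost bookkeeping'' plan could be made to work, but only by importing precisely this flow-carrying-cycle observation; as written, the proof is incomplete.
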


In order to prove Lemma~\ref{lem:feasibility}, we will consider the residual
network~$\net{X_i}_{x^{i-1}}$, but first need to clarify a subtle issue: If~$r_i\in\sinks$,
then network~$\net{X_{i}}$ is obtained from~$\net{X_{i-1}}$ by adding arc~$r_i\supersource$.
In particular,~$x^{i-1}$ can be naturally interpreted as a circulation in~$\net{X_{i}}$ by
simply setting the flow on arc~$r_i\supersource$ to zero. This is indeed what is meant in
Step~\ref{step:sink} of Algorithm~\ref{alg:lexmax} when the corresponding residual
network~$\net{X_i}_{x^{i-1}}$ is being considered.

If~$r_i\in\sources$, however, network~$\net{X_{i}}$ is obtained from~$\net{X_{i-1}}$ by
deleting arc~$\supersource r_i$. Since circulation~$x^{i-1}$ may send a positive amount of
flow through arc~$\supersource r_i$, it cannot simply be interpreted as a feasible flow
in~$\net{X_{i}}$. Nevertheless, by slightly overloading notation, we
let~$\net{X_i}_{x^{i-1}}$ denote the residual network of the flow in~$\net{X_i}$ that is
obtained from~$x^{i-1}$ by ignoring the flow on arc~$\supersource r_i$, that is, setting it
to zero. Notice that the resulting flow in~$\net{X_i}$ is no longer a circulation in general
but rather an~$r_i$-$\supersource$-flow of value~$x^{i-1}_{\supersource r_i}$. An alternative
way of formulating Step~\ref{step:source} of Algorithm~\ref{alg:lexmax} is thus to find a minimum
cost~$\supersource$-$r_i$-flow of value~$x^{i-1}_{\supersource r_i}$ in~$\net{X_i}_{x^{i-1}}$
by which the above mentioned $r_i$-$\supersource$-flow is augmented to yield the minimum cost
circulation~$x^i$ in~$\net{X_i}$.

\begin{proof}[of Lemma~\ref{lem:feasibility}]
Let~$q^i(v)$ denote the minimum cost of a~$\supersource$-$v$-path
in~$\net{X_{i}}_{x^{i-1}}$.
Remember that~$\net{X_{i}}$ is obtained from~$\net{X_{i-1}}$ by either deleting
arc~$\supersource r_i$ (if~$r_i\in\sources$) or adding arc~$r_i\supersource$
(if~$r_i\in\sinks$). Neither of these changes can create a shorter~$\supersource$-$v$-path
for any node~$v$ in the residual network~$\net{X_{i}}_{x^{i-1}}$; thus~$q^i(v)\geq
p^{i-1}(v)$ for all~$v\in V$.

As discussed above, circulation~$x^i$ can be obtained from~$x^{i-1}$ by adding a minimum cost
flow in the residual network~$\net{X_{i}}_{x^{i-1}}$. This minimum cost flow can be obtained
by repeatedly augmenting flow in the residual network from~$\supersource$ along a
shortest~$\supersource$-$r_i$-path (if~$r_i\in\sources$) or cycle (if~$r_i\in\sinks$), respectively. It is
a well-known result in network flow theory that such flow augmentations cannot decrease the
shortest path distances from~$\supersource$ in the residual network; see, for
example,~\cite[Chapter~9.4]{KorteVygen2018}. Thus,~$p^i(v)\geq q^i(v)\geq p^{i-1}(v)$.

Moreover,~$p^0(v)\geq0$ since network~$\net{X_0}_{x^0}=\net{\terminals}$ only contains arcs of non-negative cost. This implies the non-negativity of the node labels stated in~\eqref{lem:labels} and therefore concludes the proof of this part of the lemma.

In order to prove~\eqref{lem:path-lengths}, we again use the node labels~$q^i(v)$. Since
the~$\supersource$-$v$-subpath of~$\gamma$ lives in the residual
network~$\net{X_{i}}_{x^{i-1}}$, its length is at least~$q^i(v)\geq p^{i-1}(v)$. On the other
hand, a flow-carrying $\supersource$-$v$-path of length strictly larger than~$p^i(v)$ would
induce a negative cycle in~$\net{X_i}_{x^i}$, contradicting
Lemma~\ref{lem:min-cost-circulation}. 

It remains to show that the length of the~$\supersource$-$v$-subpath of~$\gamma$ is bounded
from above by~$\timehorizon$. We distinguish two cases.

\emph{First case:} $x^i$ sends a positive amount of flow through node~$v$. Since~$x^i$ is a
minimum cost circulation in~$\net{X_i}$, all flow-carrying cycles in~$\net{X_i}$ must have
non-positive cost. The only negative cost arcs in~$\net{X_i}$ are those
entering~$\supersource$ with transit time~$-\timehorizon$. Therefore, there must exist a
flow-carrying $v$-$\supersource$-path in~$\net{X_i}$ whose last arc~$r_j\supersource$ has
transit time~$\transit_{r_j\supersource}=-\timehorizon$. The corresponding backward
$\supersource$-$v$-path lives in the residual network~$\net{X_i}_{x^i}$ and has cost at
most~$\timehorizon$ since~$\supersource r_j$ is its only arc with positive transit time. As a
consequence, the length of a shortest~$\supersource$-$v$-path in~$\net{X_i}_{x^i}$
is~$p^i(v)\leq\timehorizon$. And, as we argued above,~$p^i(v)$ is an upper bound on the
length of the~$\supersource$-$v$-subpath of~$\gamma$.

\emph{Second case:} $x^i$ does not send any flow through node~$v$. Notice
that~$x^i=x^{i-1}+y^i$ and~$y^i$ does send positive flow through~$v$ (e.g., along~$\gamma$).
Thus, this flow must cancel out flow that is sent through~$v$ by~$x^{i-1}$. Since~$x^{i-1}$
is a minimum cost circulation in~$\net{X_{i-1}}$, we can use the same arguments as in the
first case to show that positive flow through~$v$ always implies the existence of a backward
path in the residual graph of length at most~$\timehorizon$. Since~$y^i$ is a minimum cost
flow in the residual network~$\net{X_{i}}_{x^{i-1}}$, it (iteratively) augments flow along
shortest paths or cycles. In particular, the length of the~$\supersource$-$v$-subpath
of~$\gamma$ is bounded from above by~$\timehorizon$.
\end{proof}

Lemma~\ref{lem:feasibility}\,\eqref{lem:path-lengths}, in particular, states that flow being
sent by the chain decomposable flow~$[\Gamma]$ along the source sink path given by
chain~$\gamma\in\Delta^i\subseteq\Gamma$ arrives at node~$v$ from some time~$\timehorizon'$
on, where~$\timehorizon'\in\bigl[p^{i-1}(v),\min\{p^i(v),\timehorizon\}\bigr]$.
With the help of Lemma~\ref{lem:feasibility}, we can now prove
feasibility of the chain-decomposable flow~$[\Gamma]$.

\begin{theorem}
\label{thm:lex-max-feasibility}
The chain-decomposable flow~$[\Gamma]$ is a feasible flow over time in~$\net{}$ with time
horizon~$\timehorizon$.
\end{theorem}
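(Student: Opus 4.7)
The plan is to verify the three ingredients of feasibility: (i) the capacity constraint $0 \le f_{vw}(t) \le u_{vw}$ on every arc $vw$ at every time $t$, (ii) flow conservation at every non-terminal node, and (iii) the time horizon condition $f_{vw}(t) = 0$ for $t \notin [0,\timehorizon)$. Conservation (ii) comes essentially for free from linearity: any chain $\gamma$ traversing a non-terminal $w$ via consecutive arcs $uw, wu'$ contributes $+|\gamma|$ to $f_{wu'}(t)$ and, through the reversal identity $f_{wu}(t) = -f_{uw}(t-\transit_{uw})$, exactly $-|\gamma|$ to $f_{wu}(t)$ from time $L(w,\gamma) = L(u,\gamma) + \transit_{uw}$ onward, where $L(v,\gamma)$ denotes the length of the $\supersource$-$v$-subpath of $\gamma$; these two contributions cancel inside $\sum_u f_{wu}(t)$, and summing over chains preserves this.

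For (i), I would fix an arc $vw$ and a time $t$ and split $f_{vw}(t)$ into contributions indexed by the iterations of Algorithm~\ref{alg:lexmax}. Lemma~\ref{lem:feasibility}\,\eqref{lem:path-lengths} confines $L(v,\gamma)$, for each $\gamma \in \Delta^i$ passing through $v$, to $[p^{i-1}(v),\min\{p^i(v),\timehorizon\}]$, and Lemma~\ref{lem:feasibility}\,\eqref{lem:labels} provides the monotonicity $p^{i-1}(v) \le p^i(v)$. Letting $i^*$ be the largest index with $p^{i^*}(v) \le t$, monotonicity forces chains in $\Delta^i$ with $i \le i^*$ to have all reached $v$ (contributing their full net amount $y^i_{vw}$), those with $i > i^*+1$ to contribute nothing at all, and those in $\Delta^{i^*+1}$ to contribute $\lambda\,y^{i^*+1}_{vw}$ for some $\lambda \in [0,1]$. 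Crucially, because $\Delta^{i^*+1}$ is a \emph{standard} chain decomposition, all of its chains traverse $vw$ (or its reverse $wv$) in a common direction, so this partial contribution has the right sign to combine as a convex combination:
\[
f_{vw}(t) \;=\; (1-\lambda)\,x^{i^*}_{vw} + \lambda\, x^{i^*+1}_{vw}.
\]
Since Lemma~\ref{lem:min-cost-circulation} supplies $x^{i^*}_{vw},\,x^{i^*+1}_{vw} \in [0,u_{vw}]$, (i) follows.

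The time horizon condition (iii) then falls out of the same formula at the extremes. For $t < 0 \le p^0(v)$ we have $i^* = 0$ and $\lambda = 0$, hence $f_{vw}(t) = x^0_{vw} = 0$; for $t \ge \timehorizon$ the upper bound in Lemma~\ref{lem:feasibility}\,\eqref{lem:path-lengths} forces every chain through $v$ to satisfy $L(v,\gamma) \le \timehorizon \le t$, so $i^* = k$ and $f_{vw}(t) = x^k_{vw} = 0$, using that $\net{X_k} = \net{\sinks}$ contains no non-positive-cost cycle and therefore $x^k$ is the zero flow. The subtle point driving the whole argument is the monotonicity of the node labels $p^i(v)$: it guarantees that any chain in $\Delta^j$ using a backward arc $wv$ only becomes active after the forward chains from earlier iterations that it is meant to cancel have already contributed, which is exactly what rules out the infeasibility pattern exhibited in Figure~\ref{fig:infeasible}.
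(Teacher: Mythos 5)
Your proposal is correct and follows essentially the same route as the paper's proof. You verify conservation directly from the cancellation at intermediate nodes, and you establish the capacity constraint by partitioning the chain flows by iteration and using Lemma~\ref{lem:feasibility} to locate the time $t$ within the interval $[p^{i^*}(v),p^{i^*+1}(v)]$, concluding that the flow rate is a convex combination $(1-\lambda)x^{i^*}_{vw}+\lambda x^{i^*+1}_{vw}\in[0,u_{vw}]$; this is precisely the interval $\bigl[\min\{x^{i^*}_{vw},x^{i^*+1}_{vw}\},\max\{x^{i^*}_{vw},x^{i^*+1}_{vw}\}\bigr]$ the paper invokes. Your explicit remark that $\Delta^{i^*+1}$ being a \emph{standard} chain decomposition is what keeps the partial contribution single-signed --- so that the interpolation coefficient $\lambda$ indeed lies in $[0,1]$ --- usefully makes visible a step the paper leaves implicit, and your closing observation about label monotonicity preventing the Figure~\ref{fig:infeasible} pathology is exactly the intuition behind Lemma~\ref{lem:feasibility}\,\eqref{lem:labels}. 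One small notational imprecision: for $t\geq\timehorizon$ the formula should be justified not by asserting $i^*=k$ (which would require $p^k(v)\leq t$, possibly false) but, as you in fact argue, by the bound $\min\{p^i(v),\timehorizon\}$ in Lemma~\ref{lem:feasibility}\,\eqref{lem:path-lengths}, which forces \emph{every} chain through $v$ to have activated by time $\timehorizon$, giving total rate $x^k_{vw}=0$; similarly for $t<0$ there is strictly no index $i^*$ with $p^{i^*}(v)\leq t$ rather than ``$i^*=0$''. Neither affects the substance.
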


\begin{proof}
First of all,~$[\Gamma]$ satisfies flow conservation at all non-terminal nodes, since all
paths~$P_\gamma$,~$\gamma\in\Gamma$, start and end at a terminal, and all flow arriving at an
intermediate node~$v$ while traveling along some path~$P_\gamma$ immediately continues its
journey on the next arc leaving~$v$.

Next we consider an arc~$a=vw$ in~$\net{}$ and argue that the flow rate entering arc~$vw$ is
always within the interval~$[0,u_{vw}]$ and equal to~$0$ before time~$0$ and from
time~$\timehorizon$ on.

First of all, by Lemma~\ref{lem:feasibility}, the chain-decomposable
flow~$[\Gamma]$ cannot send any flow into arc~$vw$ before time~$0$. Moreover, after
time~$\min\{p^k(v),\timehorizon\}$ all flow on arc~$vw$ cancels out since from that time on
the flow rates on all source sink paths~$P_{\gamma}$ given by chains~$\gamma\in\Gamma$ that contain
arc~$vw$ sum up to~$x^k_{vw}=0$.

For some point in time~$\timehorizon'\in\bigl[0,\min\{p^k(v),\timehorizon\}\bigr]$,
consider~$i$ with~$\timehorizon'\in[p^i(v),p^{i+1}(v)]$. Then, again by
Lemma~\ref{lem:feasibility}, the flow rate entering arc~$vw$ at
time~$\timehorizon'$ is within the
interval~$\bigl[\min\{x^i_{vw},x^{i+1}_{vw}\},\max\{x^i_{vw},x^{i+1}_{vw}\}\bigr]\subseteq[0,u_{vw}]$.
\end{proof}

We finally analyze the amount of flow that~$[\Gamma]$ sends out of a source or into a sink. We
use the following notation: For a terminal~$r\in\terminals$, let~$|[\Gamma]|_r$ denote the net
amount of flow leaving~$r$ in~$[\Gamma]$. Moreover, let~$\Gamma_{\supersource r}$
and~$\Gamma_{r\supersource}$ denote the subsets of chain flows in~$\Gamma$ that use
arc~$\supersource r$ and arc~$r\supersource$, respectively.

\begin{lemma}%[Hoppe, Tardos~\cite{HoppeTardos2000}]
\label{lem:flow-at-terminal}
$|[\Gamma]|_{r_i}=\cost(\Delta^i)$ for all~$i=1,\dots,k$.
\end{lemma}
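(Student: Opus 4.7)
The plan is to compute $|[\Gamma]|_{r_i}$ directly from Definition~\ref{def:chain-decomposable-infinite} by aggregating the contributions of the individual chains, and then collapse the resulting expression to $\cost(\Delta^i)$ using the fact that $\Gamma$ decomposes the zero circulation $x^k$.

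For each $\gamma\in\Gamma$, let $r_\gamma$ and $r'_\gamma$ denote the terminals at the start and end of $P_\gamma$, and set $a_\gamma:=\transit_{\supersource r_\gamma}$ and $b_\gamma:=a_\gamma+\transit(P_\gamma)$; by Lemma~\ref{lem:feasibility}\,(ii) both lie in $[0,\timehorizon]$. I would first observe that if $r_i$ is only an intermediate node of $P_\gamma$, then the flow that $\gamma$ sends out of $r_i$ on the outgoing arc and the flow that $\gamma$ sends into $r_i$ on the incoming arc start at the same time $a_\gamma+\transit(\supersource\text{-to-}r_i\text{-subpath})$ and share the same rate $|\gamma|$, so their net contribution to $|[\Gamma]|_{r_i}$ is zero. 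Using Lemma~\ref{lem:feasibility}\,(ii) to ensure both relevant time intervals fit inside $[0,\timehorizon]$, this gives
\[
|[\Gamma]|_{r_i}\;=\;\sum_{\gamma:\,r_\gamma=r_i}|\gamma|\,(\timehorizon-a_\gamma)\;-\;\sum_{\gamma:\,r'_\gamma=r_i}|\gamma|\,(\timehorizon-b_\gamma).
\]

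Next I would identify the chains ending at $r_i$ with $\Delta^i$. By Lemma~\ref{lem:min-cost-circulation} every $\gamma\in\Delta^i$ uses arc $r_i\supersource$, so its simple $\supersource$-cycle $C_\gamma$ has $r_i$ as the immediate predecessor of $\supersource$; conversely, a chain in $\Delta^j$ with $j\neq i$ uses $r_j\supersource$ and thus ends at $r_j\neq r_i$. Since $\net{\sinks}$ contains no cycle of non-positive cost, the final circulation $x^k$ is the zero flow and $\Gamma$ is therefore a chain decomposition of $0$. Flow conservation at $r_i$ then yields $\sum_{\gamma:\,r_\gamma=r_i}|\gamma|=\sum_{\gamma\in\Delta^i}|\gamma|$, so the $\timehorizon$-terms in the two sums above cancel and
\[
|[\Gamma]|_{r_i}\;=\;\sum_{\gamma\in\Delta^i}|\gamma|\,b_\gamma\;-\;\sum_{\gamma:\,r_\gamma=r_i}|\gamma|\,a_\gamma.
\]

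Finally I would split on the type of $r_i$. If $r_i\in\sources$, every chain starting at $r_i$ must use the original arc $\supersource r_i$ of transit time~$0$, so the second sum vanishes; moreover $\transit_{r_i\supersource}=0$ (residual of $\supersource r_i$), giving $\transit(C_\gamma)=b_\gamma$ for $\gamma\in\Delta^i$. If $r_i\in\sinks$, every chain starting at $r_i$ must use the residual of arc $r_i\supersource$ with $\transit_{\supersource r_i}=\timehorizon$, so the second sum equals $\timehorizon\sum_{\gamma\in\Delta^i}|\gamma|$; and $\transit_{r_i\supersource}=-\timehorizon$ gives $\transit(C_\gamma)=b_\gamma-\timehorizon$ for $\gamma\in\Delta^i$. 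In both cases the right-hand side collapses to $\sum_{\gamma\in\Delta^i}|\gamma|\,\transit(C_\gamma)=\cost(\Delta^i)$. The main obstacle I expect is the bookkeeping around the two possible incarnations (original vs.\ residual) of the $\supersource$-incident arcs, which dictate the values of $a_\gamma$ and $\transit_{r_i\supersource}$; the key miracle is that in the sink case the extra $\timehorizon$-terms generated by the $a_\gamma$-sum and by the residual cost $-\timehorizon$ have equal magnitudes precisely because of the flow-conservation identity inherited from $x^k=0$.
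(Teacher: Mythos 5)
Your proof is correct and takes essentially the same route as the paper's: both single out the chains in $\Gamma_{\supersource r_i}$ and $\Gamma_{r_i\supersource}$, exploit the balance identity coming from $x^k=0$ (the paper's equation~\eqref{eq:canceling_out}) together with $\Gamma_{r_i\supersource}=\Delta^i$, and then split on $r_i\in\sources$ versus $r_i\in\sinks$ to evaluate the transit times of the $\supersource$-incident arcs. The only cosmetic difference is that you cut off the integration at $\timehorizon$ (appealing to Lemma~\ref{lem:feasibility}\,(ii)) while the paper cuts off at the explicit time $\bar\timehorizon$ beyond which the rates cancel; both choices make the $\timehorizon$-terms (resp.\ $\bar\timehorizon$-terms) disappear via the same cancellation, so the two arguments coincide.
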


\begin{proof}
Notice that the only chain flows in~$\Gamma$ that contribute to the net amount of flow leaving
terminal~$r_i$ in~$[\Gamma]$ are those in~$\Gamma_{\supersource r_i}\cup\Gamma_{r_i\supersource}$.
Since~$\Gamma$ is a chain decomposition of the zero flow, the chain flows on
arcs~$\supersource r_i$ and~$r_i\supersource$ cancel out, that is,
\begin{align}
\sum_{\gamma\in\Gamma_{\supersource r_i}}|\gamma|
=\sum_{\gamma\in\Gamma_{r_i\supersource}}|\gamma|.
\label{eq:canceling_out}
\end{align}
Notice that~$\Gamma_{r_i\supersource}=\Delta^i$ by Lemma~\ref{lem:min-cost-circulation},
since every chain flow~$\gamma\in\Gamma$ contains exactly one arc entering~$\supersource$. We
now distinguish two cases.

\emph{First case:} $r_i\in\sources$.
For every chain flow~$\gamma\in\Gamma_{\supersource r_i}$, the chain-decomposable
flow~$[\Gamma]$ starts to send flow at rate~$|\gamma|$ from source~$r_i$ into the network at
time~$0$. Moreover, for every chain flow~$\gamma\in\Gamma_{r_i\supersource}$, flow at
rate~$|\gamma|$ arrives at source~$r_i$ from time~$\transit(\gamma)$ on. Thus, due
to equation~\eqref{eq:canceling_out}, flow rates leaving and entering~$r_i$ cancel out from
time~$\bar{\timehorizon}\coloneqq\max\{0,\max_{\gamma\in\Gamma_{r_i\supersource}}\transit(\gamma)\}$
on. As a consequence, the net amount of flow leaving~$r_i$ in~$[\Gamma]$ is equal to
\begin{align*}
|[\Gamma]|_{r_i}
=\bar{\timehorizon}\sum_{\gamma\in\Gamma_{\supersource r_i}}|\gamma|-\sum_{\gamma\in\Gamma_{r_i\supersource}}\bigl(\bar{\timehorizon}-\transit(\gamma)\bigr)|\gamma|
\stackrel{\eqref{eq:canceling_out}}{=}\sum_{\gamma\in\Gamma_{r_i\supersource}}\transit(\gamma)|\gamma|
=\cost(\Gamma_{r_i\supersource})
=\cost(\Delta^i).
\end{align*}

\emph{Second case:} $r_i\in\sinks$.
For every chain flow~$\gamma\in\Gamma_{r_i\supersource}$, flow at rate~$|\gamma|$ arrives at
sink~$r_i$ from time~$\timehorizon+\transit(\gamma)$ on. Moreover, for every chain
flow~$\gamma\in\Gamma_{\supersource r_i}$, flow leaves~$r_i$ at rate~$|\gamma|$ from
time~$\timehorizon=\transit_{\supersource r_i}$ on. Thus, due to
equation~\eqref{eq:canceling_out}, flow rates entering and leaving~$r_i$ cancel out from
time~$\bar{\timehorizon}\coloneqq\timehorizon+\max\{0,\max_{\gamma\in\Gamma_{r_i\supersource}}
\transit(\gamma)\}$ on. As a consequence, the net amount of flow leaving~$r_i$ in~$[\Gamma]$
is equal to
\begin{align*}
|[\Gamma]|_{r_i}
=(\bar{\timehorizon}-\timehorizon)\sum_{\gamma\in\Gamma_{\supersource r_i}}|\gamma|-\sum_{\gamma\in\Gamma_{r_i\supersource}}\bigl(\bar{\timehorizon}-\timehorizon-\transit(\gamma)\bigr)|\gamma|
\stackrel{\eqref{eq:canceling_out}}{=}\sum_{\gamma\in\Gamma_{r_i\supersource}}\transit(\gamma)|\gamma|
=\cost(\Delta^i).
\end{align*}
This concludes the proof.
\end{proof}

Based on Lemma~\ref{lem:flow-at-terminal}, it is now easy to prove that~$[\Gamma]$ sends the
required amount of flow from~$\sources\cap X_i$ to~$\sinks\setminus X_i$ (cp.\
Theorem~\ref{thm:lex-max-flow-over-time}).

\begin{cor}
\label{cor:lex-max}
$\sum_{r\in X_i}|[\Gamma]|_r=\oo(X_i)$ for all~$i=0,\dots,k$.
\end{cor}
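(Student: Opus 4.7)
The plan is to assemble the claim from the building blocks already in place: Lemma~\ref{lem:flow-at-terminal} turns each terminal contribution into the cost of a chain-decomposition piece, these pieces telescope into the cost of the iterate~$x^i$, and Corollary~\ref{cor:max-flow-over-time} identifies~$-\cost(x^i)$ with~$\oo(X_i)$.

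First I would rewrite the left hand side as
\[
\sum_{r\in X_i}|[\Gamma]|_r \;=\; \sum_{j=i+1}^{k}|[\Gamma]|_{r_j} \;=\; \sum_{j=i+1}^{k}\cost(\Delta^j),
\]
where the first equality uses~$X_i=\{r_{i+1},\dots,r_k\}$ and the second is Lemma~\ref{lem:flow-at-terminal}. Since~$\Delta^j$ is a (standard) chain decomposition of the residual circulation~$y^j$, we have~$\cost(\Delta^j)=\cost(y^j)$, so the sum equals~$\sum_{j=i+1}^{k}\cost(y^j)$.

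Next, I would exploit the telescoping structure of the algorithm. By construction~$x^j=x^{j-1}+y^j$, so
\[
\sum_{j=i+1}^{k} y^j \;=\; x^k-x^i.
\]
The initial flow~$x^0$ is zero by definition, and the final flow~$x^k$ in~$\net{X_k}$ is likewise zero (as noted in the paragraph following Lemma~\ref{lem:min-cost-circulation}, since~$\net{X_k}$ contains no cycles of non-positive cost). Consequently~$\sum_{j=i+1}^{k} y^j=-x^i$, and by linearity of the cost function
\[
\sum_{j=i+1}^{k}\cost(y^j) \;=\; -\cost(x^i).
\]

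Finally, Lemma~\ref{lem:min-cost-circulation} states that~$x^i$ is a static minimum cost circulation in~$\net{X_i}$, so Corollary~\ref{cor:max-flow-over-time} yields~$\oo(X_i)=-\cost(x^i)$. Chaining the equalities gives~$\sum_{r\in X_i}|[\Gamma]|_r=\oo(X_i)$, as required. The boundary cases check out: for~$i=0$ and~$i=k$ both sides vanish, consistent with~$\net{\terminals}$ and~$\net{X_k}$ admitting only the zero circulation. There is no real obstacle here; the only thing that might have looked delicate is making sure the per-terminal accounting of Lemma~\ref{lem:flow-at-terminal} sums correctly over~$X_i$, but once the identity~$\cost(\Delta^j)=\cost(y^j)$ is in hand the telescoping does the rest.
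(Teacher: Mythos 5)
Your proof is correct and follows essentially the same argument as the paper's. The only cosmetic difference is bookkeeping: the paper first flips the sum via flow conservation, $\sum_{j=i+1}^k|[\Gamma]|_{r_j}=-\sum_{j=1}^i|[\Gamma]|_{r_j}$, and then uses that $\Gamma^i=\bigcup_{j=1}^i\Delta^j$ is a chain decomposition of $x^i$ so that $\sum_{j=1}^i\cost(\Delta^j)=\cost(x^i)$; you instead keep the sum over $j>i$ and telescope the $y^j$'s using $x^k=0$. Both variants rest on the same ingredients (Lemma~\ref{lem:flow-at-terminal}, the fact that $x^k$ is the zero flow, Lemma~\ref{lem:min-cost-circulation}, and Corollary~\ref{cor:max-flow-over-time}) and arrive at $-\cost(x^i)=\oo(X_i)$ in the same way.
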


\begin{proof}
By flow conservation and Lemma~\ref{lem:flow-at-terminal},
\begin{align*}
\sum_{r\in X_i}|[\Gamma]|_r
=\sum_{j=i+1}^k|[\Gamma]|_{r_j}
=-\sum_{j=1}^i|[\Gamma]|_{r_j}
=-\sum_{j=1}^i\cost(\Delta^j).
\end{align*}
Notice that~$\bigcup_{j=1}^i\Delta^j=\Gamma^i$, which is a chain decomposition of the minimum cost circulation~$x^i$ in~$\net{X_i}$. As a consequence
\begin{align*}
\sum_{r\in X_i}|[\Gamma]|_r
=-\cost(x^i)
=\oo(X_i).
\end{align*}
This concludes the proof.
\end{proof}

We conclude this section with the proof of the main Theorem~\ref{thm:lex-max-flow-over-time} and a remark.

\begin{proof}[of Theorem~\ref{thm:lex-max-flow-over-time}]
By Theorem~\ref{thm:lex-max-feasibility}, Algorithm~\ref{alg:lexmax} computes a feasible flow
over time in network~$\net{}$ with time horizon~$\timehorizon$. Corollary~\ref{cor:lex-max}
states that the computed chain-decomposable flow~$[\Gamma]$ is a lexicographically maximum
flow over time. Finally notice that the running time of Algorithm~\ref{alg:lexmax} is dominated by the~$k$
minimum cost circulation computations, one in each iteration of its for-loop, which can be
done in strongly polynomial time; see, e.g.,~\cite{KorteVygen2018}.
\end{proof}

\begin{rem}
Notice that the chain-decomposable flow~$[\Gamma]$ computed by Algorithm~\ref{alg:lexmax} has
the following favorable properties: For integral arc capacities, the values of all chain
flows in~$\Gamma$ are integral such that flow rates on arcs are always integral
in~$[\Gamma]$. Moreover, for the case of integral arc transit times, flow rates on arcs only
change at integral points in time in~$[\Gamma]$.
\end{rem}

%------------------------------------------------

\section{Transshipments over time and submodular function minimization}
\label{sec:transshipments}

In this section we turn to the transshipment over time problem that is defined as follows.

\begin{Def}
Given a network~$\net{}$, time horizon~$\timehorizon$, supplies~$b_r\geq0$, $r\in\sources$,
and demands~$b_r\leq0$, $r\in\sinks$, with~$\sum_{r\in\terminals}b_r=0$, a feasible solution
to the \defword{transshipment over time problem} is a feasible flow over time with time
horizon~$\timehorizon$ that sends~$b_r$ units of flow out of each source~$r\in\sources$
and~$-b_r$ units into each sink~$r\in\sinks$.
\end{Def}

For the static transshipment problem, already Ford and Fulkerson observed that there is a
feasible flow satisfying all supplies and demands if and only if the maximum amount of flow
that can be sent from sources in~$\sources\cap X$ to sinks in~$\sinks\setminus X$ is at
least~$b(X)\coloneqq\sum_{r\in X}b_r$. This characterization is a straightforward consequence
of the max-flow min-cut theorem in the extended network with a super-source~$s$, connected to
all sources~$r\in\sources$ via an arc~$sr$ of capacity~$u_{sr}=b_r$, and a supersink~$t$
reachable from all sinks~$r\in\sinks$ via an arc~$rt$ of capacity~$u_{rt}=-b_r$.

Unfortunately, the resulting reduction of the static transshipment problem to a static
maximum~$s$-$t$-flow problem cannot be generalized to transshipments over time. Here the
problem is that the total amount of flow sent from a super-source through one of the
sources~$r\in\sources$ into the network cannot be easily bounded by~$b_r$ since the capacity
of the arc connecting the super-source to~$r$ only bounds the flow rate but not the amount of
flow sent through this arc.

Nevertheless, Klinz~\cite{Klinz1994} observed that the feasibility criterion mentioned above
also holds for the transshipment over time problem.

\begin{theorem}[Klinz~\cite{Klinz1994}]
\label{thm:klinz}
A feasible solution to the transshipment over time problem exists if and only if
\begin{align}
b(X)\leq\oo(X)\qquad\text{for each~$X\subseteq\terminals$.}
\label{eq:klinz}
\end{align}
\end{theorem}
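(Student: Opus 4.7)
The plan is to argue the two directions separately. Necessity follows from a short flow-decomposition argument, whereas sufficiency—the substantive direction—combines the submodularity of~$\oo$ (Theorem~\ref{thm:submodular}) with Hoppe and Tardos' lex-max flow over time construction (Theorem~\ref{thm:lex-max-flow-over-time}) through the standard convex-combination argument over the base polytope of a submodular function.

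For necessity, I would start with a feasible transshipment~$f$ and decompose it into chain flows; since sources have no incoming and sinks no outgoing arcs, every chain flow in the decomposition runs from some source to some sink. Writing $\phi(r,t)$ for the total amount of flow sent from source~$r\in\sources$ to sink~$t\in\sinks$ by~$f$, one has $\sum_{t}\phi(r,t)=b_r$ for every $r\in\sources$ and $\sum_{r}\phi(r,t)=-b_t$ for every $t\in\sinks$. A routine bookkeeping computation then gives
\[
\phi(\sources\cap X,\sinks\setminus X) \;=\; b(X) + \phi(\sources\setminus X,\sinks\cap X) \;\geq\; b(X).
\]
On the other hand, the chain flows from $\sources\cap X$ to $\sinks\setminus X$ extracted from the decomposition of~$f$ together form a feasible flow over time in~$\net{}$ with time horizon~$\timehorizon$ of value $\phi(\sources\cap X,\sinks\setminus X)$, which is therefore at most $\oo(X)$; combining the two inequalities yields $b(X)\leq\oo(X)$.

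For sufficiency, the hypothesis places~$b$ inside the base polytope
\[
B(\oo)\;\coloneqq\;\bigl\{\,y\in\R^\terminals\,:\,y(\terminals)=0,\; y(X)\leq\oo(X)\text{ for all }X\subseteq\terminals\,\bigr\}
\]
of the submodular function~$\oo$: indeed $\oo(\emptyset)=\oo(\terminals)=0$ (in both cases either $\sources\cap X$ or $\sinks\setminus X$ is empty) and $b(\terminals)=\sum_r b_r=0$ by assumption. By Edmonds' classical theorem on submodular polyhedra, every point of $B(\oo)$ is a convex combination of its vertices, and the vertex associated to a given ordering $r_k,r_{k-1},\dots,r_1$ has coordinate $\oo(X_{i-1})-\oo(X_i)$ at terminal~$r_i$. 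Lemma~\ref{lem:flow-at-terminal} together with Corollary~\ref{cor:lex-max} shows that the lex-max flow over time produced by Algorithm~\ref{alg:lexmax} for this ordering has net outflow exactly $\oo(X_{i-1})-\oo(X_i)$ at~$r_i$, and hence realizes precisely this greedy vertex as its terminal outflow vector.

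To finish, I would write $b=\sum_j\lambda_j v_j$ as a convex combination of greedy vertices $v_j$ obtained from orderings $\pi_j$, and take the same convex combination $\sum_j\lambda_j[\Gamma_j]$ of the lex-max flows over time $[\Gamma_j]$ produced by Algorithm~\ref{alg:lexmax} for those orderings. Since the defining constraints of a feasible flow over time (capacities and flow conservation) are linear, this convex combination is itself a feasible flow over time with time horizon~$\timehorizon$, and by construction its net outflow at each terminal~$r$ equals $\sum_j\lambda_j(v_j)_r=b_r$, making it a feasible transshipment over time. The main obstacle is the appeal to Edmonds' base-polytope theorem for a submodular~$\oo$ that is not necessarily monotone; this is a standard fact from submodular polyhedra theory, but one should either cite it explicitly or note that Algorithm~\ref{alg:lexmax} itself furnishes a concrete witness both of the greedy vertices and of the nonemptiness of~$B(\oo)$.
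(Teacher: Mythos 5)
Your proof is correct and, for the substantive direction (sufficiency), takes essentially the same route as the paper: both place $b$ in the base polytope $B(\oo)$, express it as a convex combination of Edmonds--Shapley greedy vertices, identify each vertex with the terminal-outflow vector of a lexicographically maximum flow over time via Theorem~\ref{thm:lex-max-flow-over-time} and Corollary~\ref{cor:lex-max}, and take the corresponding convex combination of feasible flows over time. Two minor remarks: your worry about the greedy-vertex characterization is unfounded, since the result of Edmonds~\cite{Edmonds1970} and Shapley~\cite{Shapley1971} (which the paper cites) holds for arbitrary submodular functions with $g(\emptyset)=0$ and does not require monotonicity; and for necessity --- which the paper dismisses as straightforward and does not prove --- your decomposition of a flow over time into source-to-sink chain flows is conceptually fine but deserves a word of justification, since such a decomposition is typically obtained via the time-expanded network, which this paper otherwise takes pains to avoid.
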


While the necessity of Condition~\eqref{eq:klinz} is straightforward, the sufficiency is less
obvious. In the discrete time model considered by Klinz~\cite{Klinz1994}, it is merely a
consequence of the sufficiency of the cut condition in the time-expanded network. We give a
short proof that does not rely on time-expansion and therefore proves the result for networks
with arbitrary time horizon and arc transit times.

\begin{proof}
The two main ingredients of the proof are the submodularity of the
function~$\oo:2^{\terminals}\to\R$ (Theorem~\ref{thm:submodular}) and the existence of
lexicographically maximum flows over time that simultaneously maximize the total amount of
flow leaving the~$i$ highest priority terminals for each~$i=0,1,\dots,k-1$
(Theorem~\ref{thm:lex-max-flow-over-time}). Notice that we gave proofs of both results for
networks with arbitrary time horizon and arc transit times.

Consider the \defword{base polytope}~$B(\oo)$ of submodular function~$\oo:2^{\terminals}\to\R$ given by
\[
B(\oo)\coloneqq\bigl\{z\in\R^{\terminals}\mid z(X)\leq\oo(X)~\forall\,X\subset\terminals,~z(\terminals)=\oo(\terminals)\bigr\}
\]
Edmonds~\cite{Edmonds1970} and Shapley~\cite{Shapley1971} observed that
every vertex~$\bar{z}$ of the base polytope corresponds to an
ordering~$r_k,r_{k-1},\dots,r_1$ of the terminals from which it can be obtained as follows via
a greedy procedure:
\begin{align*}
\bar{z}_i\coloneqq\oo\bigl(\{r_k,r_{k-1},\dots,r_i\}\bigr)-\oo\bigl(\{r_k,r_{k-1},\dots,r_{i+1}\}\bigr)\qquad\text{for~$i=k,\dots,1$.}
\end{align*}
Notice that, by Theorem~\ref{thm:lex-max-flow-over-time}, vertex~$\bar{z}$ is exactly the vector of supplies and demands satisfied by a lexicographically maximum flow over time. 

Moreover, by definition of~$B(\oo)$, a supply and demand vector~$b\in\R^{\terminals}$
with
\[
\sum_{r\in\terminals}b_r=0=\oo(\terminals)
\]
satisfies Condition~\eqref{eq:klinz} if and only if~$b\in B(\oo)$. Such a vector~$b$ is
therefore a convex combination of vertices of~$B(\oo)$. And as such,~$b$ is the supply and
demand vector satisfied by the corresponding convex combination of lexicographically maximum
flows over time. Since a convex combination of feasible flows over time is again feasible, we
have thus proved that there exists a feasible solution to the transshipment over time problem,
under the assumption that the given supplies and demands satisfy Condition~\eqref{eq:klinz}.
\end{proof}

The proof of Theorem~\ref{thm:klinz} raises the question whether and how a feasible solution
to the transshipment over time problem can be found, if one exists. First of all, Condition~\eqref{eq:klinz} can be tested efficiently via submodular function minimization.

\begin{theorem}[Hoppe, Tardos~\cite{HoppeTardos2000}]
\label{thm:sfm}
There is a strongly polynomial time algorithm that determines whether a transshipment over time problem has a feasible solution.
\end{theorem}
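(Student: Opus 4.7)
The plan is to reduce the feasibility question to a submodular function minimization (SFM) problem. By Klinz's characterization (Theorem~\ref{thm:klinz}), a feasible transshipment over time exists if and only if $b(X)\leq \oo(X)$ for every $X\subseteq\terminals$. Define the set function $f:2^{\terminals}\to\R$ by
\[
f(X)\coloneqq \oo(X)-b(X).
\]
Since $\oo$ is submodular by Theorem~\ref{thm:submodular} and $b$ is modular (hence both sub- and supermodular), $f$ is submodular. Feasibility is therefore equivalent to $\min_{X\subseteq\terminals}f(X)\geq 0$, noting that the condition $\sum_{r\in\terminals}b_r=0=\oo(\terminals)$ forces $f(\terminals)=0$, so the minimum is always at most zero and the algorithm actually answers ``feasible'' iff the minimum equals zero.

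Next I would invoke a strongly polynomial SFM algorithm (e.g., Grötschel--Lovász--Schrijver, or the combinatorial algorithms of Schrijver and of Iwata--Fleischer--Fujishige), which, given a value oracle for a submodular function on a $k$-element ground set, computes a minimizer using a number of arithmetic operations and oracle queries bounded by a polynomial in $k$. The ground set here is~$\terminals$, so $k=|\terminals|$ is part of the input. Each oracle call asks for $f(X)=\oo(X)-b(X)$; computing $b(X)$ is trivial, and by Corollary~\ref{cor:max-flow-over-time} the value $\oo(X)$ equals $-\cost(x)$ for a static minimum cost circulation $x$ in the extended network~$\net{X}$, which can be obtained in strongly polynomial time by classical min-cost flow algorithms (see, e.g.,~\cite{KorteVygen2018}). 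Thus each oracle query is strongly polynomial, and composing the polynomial query bound of the SFM algorithm with the polynomial cost per query yields an overall strongly polynomial procedure.

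The only genuine subtlety is ensuring that the ``strongly polynomial'' guarantee of the SFM subroutine is compatible with our oracle: SFM algorithms typically require the oracle to return rational (or real) function values within an arithmetic model, and here the values $\oo(X)$ are rational combinations of arc capacities and transit times, so this is unproblematic. The main conceptual obstacle, really already overcome by the earlier sections of the paper, is the submodularity of $\oo$ itself; once Theorem~\ref{thm:submodular} is in hand (proved directly without recourse to time-expanded networks and therefore valid for arbitrary transit times and time horizon), the decision algorithm is immediate as sketched above.
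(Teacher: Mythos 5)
Your proof is correct and follows essentially the same route as the paper: reduce feasibility via Klinz's criterion to deciding whether $\min_{X\subseteq\terminals}\bigl(\oo(X)-b(X)\bigr)=0$, observe that this function is submodular as the sum of a submodular and a modular function, and invoke a strongly polynomial SFM algorithm whose value oracle is realized by a strongly polynomial static minimum cost circulation computation in~$\net{X}$ (Corollary~\ref{cor:max-flow-over-time}). The only cosmetic difference is that the paper phrases the reduction as membership testing in the base polytope~$B(\oo)$ (citing McCormick for the separation--SFM equivalence), while you go directly from Klinz's inequalities to SFM, and you use~$f(\terminals)=0$ rather than~$f(\emptyset)=0$ to note the minimum is nonpositive; both are immaterial.
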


\begin{proof}
In view of the proof of Theorem~\ref{thm:klinz}, testing feasibility boils down to testing
membership of the supply and demand vector~$b$ in the base polytope~$B(\oo)$. It is well known
that the separation problem over the base polytope is a submodular function minimization
problem; see, e.g., McCormick~\cite{McCormick2005}. More precisely,~$b\in B(\oo)$ if and only
if the minimum value of the function
\begin{align}	
X\mapsto\oo(X)-b(X)\qquad\text{for~$X\subseteq\terminals$,}
\label{eq:submodular-minus-modular}
\end{align}
is equal to zero (the minimum value is at most zero since~$\oo(\emptyset)-b(\emptyset)=0$).

Notice that function~\eqref{eq:submodular-minus-modular} is submodular, because it is the sum
of the submodular function~$X\mapsto\oo(X)$ and the modular function~$X\mapsto-b(X)$. Finally,
submodular function minimization can be solved in strongly polynomial time. 
\end{proof}

The first strongly polynomial algorithm for submodular function minimization is due to
Grötschel, Lovász, and Schrijver~\cite{GLS1988} based on the ellipsoid method. The first
combinatorial strongly polynomial algorithms are due to Schrijver~\cite{Schrijver2000} and
Iwata, Fleischer, and Fujishige~\cite{IwataFleischerFujishige2001}; see
McCormick~\cite{McCormick2005} for a survey. The combinatorial algorithms of Schrijver and of
Iwata, Fleischer, and Fujishige are both based on a maximum flow-style algorithmic framework
of Cunningham that earlier led to the first pseudopolynomial algorithm for submodular function
minimization~\cite{BixbyCunninghamTopkis1985,Cunningham1984,Cunningham1985}.

Schlöter and Skutella~\cite{SchloeterSkutella2017,Schloeter2018} observe that checking
feasibility of a transshipment over time problem, as described in the proof of
Theorem~\ref{thm:sfm}, can directly produce a feasible solution if submodular function
minimization is done with an algorithm using Cunningham’s framework.

\begin{theorem}[Schlöter, Skutella~\cite{SchloeterSkutella2017,Schloeter2018}]
\label{thm:SchloeterSkutella}
A feasible solution to the transshipment over time problem can be found via one submodular
function minimization with an algorithm using Cunningham’s framework, and~$k$ calls of Hoppe
and Tardos' lexicographically maximum flow over time algorithm (Algorithm~\ref{alg:lexmax})
\end{theorem}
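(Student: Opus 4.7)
The plan is to combine the constructive direction of Theorem~\ref{thm:klinz} with a byproduct of Cunningham's framework for submodular function minimization, namely that upon termination it delivers a point of the base polytope as an explicit convex combination of vertices, not just the minimum value.

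First, I would apply Cunningham's framework to the submodular function $X\mapsto\oo(X)-b(X)$ appearing in the proof of Theorem~\ref{thm:sfm}. When this minimum is confirmed to equal zero, i.e., $b\in B(\oo)$, the algorithm outputs an explicit representation $b=\sum_{i=1}^{m}\lambda_i\bar{z}^{(i)}$, where each $\bar{z}^{(i)}$ is a vertex of $B(\oo)$ together with its generating ordering $r^{(i)}_k,\dots,r^{(i)}_1$ of the terminals (as in the Edmonds--Shapley greedy characterization used in the proof of Theorem~\ref{thm:klinz}), $\lambda_i\geq0$, and $\sum_i\lambda_i=1$. Since $B(\oo)$ lies in the affine hyperplane $\{z:z(\terminals)=\oo(\terminals)\}$ of dimension $k-1$, Carathéodory's theorem combined with the simplex-like bookkeeping of Cunningham's framework guarantees $m\leq k$.

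Next, for each $i=1,\dots,m$, I would invoke Algorithm~\ref{alg:lexmax} once on the terminal ordering $r^{(i)}_k,\dots,r^{(i)}_1$ to obtain a lexicographically maximum flow over time $f^{(i)}$ with time horizon~$\timehorizon$. By Theorem~\ref{thm:lex-max-flow-over-time}, the net amount of flow leaving each terminal in $f^{(i)}$ is precisely the corresponding entry of $\bar{z}^{(i)}$. Now set $f\coloneqq\sum_{i=1}^{m}\lambda_i f^{(i)}$. Convex combinations preserve flow conservation, non-negativity and capacity of arc flow rates, and the time horizon, so $f$ is a feasible flow over time with time horizon~$\timehorizon$; by linearity its net outflow vector at the terminals is $\sum_i\lambda_i\bar{z}^{(i)}=b$. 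Thus $f$ is the desired transshipment over time, computed with one submodular function minimization and at most $k$ calls of Algorithm~\ref{alg:lexmax}.

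The main obstacle in a careful write-up is to justify precisely the output property of Cunningham's framework that is being exploited here. The framework maintains throughout its execution a current point in the base polytope together with an active set of at most $k$ vertices expressing it as a convex combination, and updates this set via exchange steps reminiscent of simplex pivots; termination coincides with the current point being the minimizer of $\oo-b$, which, when the minimum value is zero, is exactly~$b$. A complete argument would appeal to the combinatorial submodular function minimization algorithms of Schrijver and of Iwata, Fleischer, and Fujishige cited above, both of which are instances of Cunningham's framework and thereby supply the required decomposition of $b$ into vertices of $B(\oo)$.
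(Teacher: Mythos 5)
Your proposal follows the same architecture as the paper's proof: run Cunningham's framework on the submodular function $g=\oo-b$, recover from its dual bookkeeping a convex combination of at most $k$ vertices given by terminal orderings, then use Theorem~\ref{thm:lex-max-flow-over-time} to realize each vertex by a chain-decomposable flow over time and take the corresponding convex combination. The final assembly step is argued correctly.

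There is, however, one small but genuine gap in the middle step, concerning what Cunningham's framework actually returns. The framework is run on $g=\oo-b$ and maintains a point of the base polytope $B(g)$ (not $B(\oo)$) as a convex combination of vertices of $B(g)$. When $\min g=0$, Edmonds' duality together with $z(\terminals)=g(\terminals)=0$ forces the optimal dual point to be the \emph{zero vector}, and what you get is a representation $0=\sum_i\lambda_i w^{(i)}$ with $w^{(i)}$ vertices of $B(g)$. Your write-up instead asserts that ``the current point \ldots is exactly $b$'' and that the $\bar z^{(i)}$ are vertices of $B(\oo)$; as stated this conflates $B(g)$ with $B(\oo)$. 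The missing observation (made explicit in the paper) is that subtracting the modular function $b$ translates the base polytope: $B(g)=B(\oo)-b$, and the same ordering $\sigma$ generates the vertex $w_\sigma\in B(g)$ and $w_\sigma+b\in B(\oo)$. Hence $0=\sum_i\lambda_i w^{(i)}$ becomes $b=\sum_i\lambda_i(w^{(i)}+b)$, a convex combination of vertices of $B(\oo)$ with the same orderings and coefficients. With this translation argument inserted, your proof is complete and matches the paper's.
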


\begin{proof}
Cunningham's framework builds on the following strong duality result of
Edmonds~\cite{Edmonds1970}: For a submodular function~$g:2^{\terminals}\to\R$
\[
\min_{X\subseteq\terminals}g(X)=\max\bigl\{z^-(\terminals)\mid z\in B(g)\bigr\},
\]
where~$z^-(\terminals)$ is the sum of all negative entries of vector~$z\in
B(g)\subset\R^{\terminals}$. Cunningham's framework finds a subset~$X\subseteq\terminals$
minimizing~$g(X)$ together with an optimal dual solution~$z\in B(g)$ where~$z$ is represented
by a convex combination of~$k$ vertices of~$B(g)$; notice that~$k$ vertices suffice by
Carathéodory's theorem since the dimension of base polytope~$B(g)$ is at most~$k-1$ due to
equality constraint~$z(\terminals)=g(\terminals)$.

If, in our case of the submodular function~$g$ with~$g(X):=\oo(X)-b(X)$, the minimum function
value is~$0$, an optimal dual solution~$z\in B(g)$ has no negative entries. Therefore,
since~$z(\terminals)=g(\terminals)=o(\terminals)-b(\terminals)=0$, the only optimal dual
solution is the zero vector. In this case, a submodular function minimization algorithm based
on Cunningham's framework finds a representation of the zero vector as a convex combination of
vertices of~$B(g)$. Here, each vertex is given by a linear order on the ground set~$S$.

Moreover, it is not difficult to see that the base polytope~$B(g)$ of submodular function~$g$
with~$g(X)=\oo(X)-b(X)$ is a translation of base polytope~$B(\oo)$ by vector~$b$, that
is,~$B(g)=B(\oo)-b$. Therefore, the representation of the zero vector as a convex combination
of~$k$ vertices of~$B(g)$ yields a representation of~$b$ as a convex combination of~$k$
vertices of~$B(\oo)$, using the same coefficients.

Thus, what remains to be done algorithmically is to compute lexicographically maximum flows
over time for the~$k$ linear orders of terminals~$\terminals$ corresponding to the~$k$
vertices of~$B(\oo)$. Their convex combination, using the same coefficients, is a feasible
solution to the transshipment over time problem.
\end{proof}

\begin{rem}
As we discussed at the very end of Section~\ref{sec:lex-max-flows}, the lexicographically
maximum flows over time found by Algorithm~\ref{alg:lexmax} only use integral flow rates if
all arc capacities are integral. Unfortunately, this favorable property is not inherited by
the transshipment over time in Theorem~\ref{thm:SchloeterSkutella} since the coefficients of
the convex combination found by the submodular function minimization algorithm will in
general be fractional.

The transshipment over time algorithm given by Hoppe and Tardos~\cite{HoppeTardos2000},
however, always finds an integral transshipment over time. They employ~$2k$ parametric
submodular function minimizations to construct a carefully tightened network for which one
particular lexicographically maximum flow over time exactly satisfies the supplies and
demands and is thus a feasible transshipment over time.
\end{rem}

\subsubsection*{Results on the quickest transshipment problem from the literature}

As the title of Hoppe and Tardos' paper~\cite{HoppeTardos2000} suggests, they are not merely
interested in the transshipment over time problem but actually solve the more difficult
\defword{quickest transshipment problem}. In this problem, the time horizon~$\timehorizon$ is
not given but only the network~$\net{}$, and the task is to determine the minimum time
horizon~$\timehorizon$ together with a feasible transshipment over time with that time
horizon. 

The \defword{quickest flow problem} is the special case of the quickest transshipment problem
with a single source and sink, or alternatively, with several sources and sinks without
supplies and demands but only a specified flow value that needs to be sent from the sources
to the sinks. Burkard, Dlaska, and Klinz~\cite{BurkardDlaskaKlinz1993} observe that the
quickest flow problem can be solved in strongly polynomial time by incorporating Ford and
Fulkerson's maximum flow over time algorithm (Algorithm~\ref{alg:FordFulkerson}) into
Megiddo’s parametric search framework~\cite{Megiddo1979}. Lin and
Jaillet~\cite{LinJaillet2015} present a cost-scaling algorithm that solves the problem in the
same running time that Goldberg and Tarjan's cost-scaling algorithm~\cite{GoldbergTarjan1990}
needs to find a static minimum cost flow. Refining Lin and Jaillet's approach, Saho and
Shigeno~\cite{SahoShigeno2017} achieve the currently best known algorithm with strongly
polynomial running time.

For the quickest transshipment problem, Hoppe and Tardos~\cite{HoppeTardos2000} show how to
find the minimum time horizon~$\timehorizon$ by solving a parametric submodular function
minimization problem using Megiddo's parametric search framework~\cite{Megiddo1979}.
Schlöter, Skutella, and Tran~\cite{SchloeterSkutTran2022} present a sophisticated extension
of the discrete Newton method (or Dinkelbach's algorithm~\cite{Dinkelbach1967}) for computing
the minimum time horizon which, in terms of its running time, beats the parametric search
approach by several orders of magnitude. For the special case of a single source node or a
single sink node, an even faster algorithm is given by Schlöter~\cite{Schloeter2018} and
Kamiyama~\cite{Kamiyama2019}, based on a simpler variant of the discrete Newton method. For
the case where each arc also has a cost coefficient (independent of its transit time),
it is not difficult to show that finding a quickest transshipment of minimum cost can be reduced to solving a quickest transshipment problem without costs on a carefully chosen subnetwork~\cite{Skutella2023}.

% End matter

\thankyou{%
The author is much obliged to Lizaveta Manzhulina, Tom McCormick, Britta Peis, Miriam
Schlöter, and Khai Van Tran for insightful discussions on the topic of this paper. The
presentation of the paper has improved thanks to a referee's instructive comments and
suggestions.

This work is supported by the Deutsche Forschungsgemeinschaft (DFG, German Research
Foundation) under Germany's Excellence Strategy --- The Berlin Mathematics Research Center
MATH+ (EXC-2046/1, project ID: 390685689).
}

%You can also use BibTeX with the amsplain style.
%\bibliographystyle{amsplain}
\providecommand{\bysame}{\leavevmode\hbox to3em{\hrulefill}\thinspace}
\providecommand{\MR}{\relax\ifhmode\unskip\space\fi MR }
% \MRhref is called by the amsart/book/proc definition of \MR.
\providecommand{\MRhref}[2]{%
  \href{http://www.ams.org/mathscinet-getitem?mr=#1}{#2}
}
\providecommand{\href}[2]{#2}

% \bigskip
% For preference, use the BibTeX style \texttt{amsplain} and send the editors
% your \texttt{.bib} file with exactly the references you use.

\myaddress

%% Editors to uncomment the two lines below if survey ends on an odd-numbered page
%\newpage
%\mbox{}

\end{document}